

\documentclass[11pt]{article}
\usepackage{amsmath}
\usepackage{amssymb}
\usepackage{amsthm}
\usepackage{anysize}
\usepackage{rotating}

\usepackage{amsmath,amsfonts,amsthm,amssymb}
\usepackage{pdflscape}

\marginsize{25mm}{30mm}{30mm}{30mm}
\usepackage{nonfloat}


\newcommand{\beq}  {\begin{equation}}
\newcommand{\eeq}  {\end{equation}  }

\newcommand{\bit}  {\begin{itemize}}
\newcommand{\eit}  {\end{itemize}  }

\newcommand{\ben}  {\begin{enumerate}}
\newcommand{\een}  {\end{enumerate}  }

\newcommand{\bpr}  {\begin{proof}}
\newcommand{\epr}  {\end{proof}  }

\newcommand{\R}    {\mathbb{R}}

\newcommand{\la}{\lambda}
\newcommand{\ve}{\varepsilon}


\usepackage{amsmath}
\usepackage{amssymb}
\usepackage{amsthm}
\usepackage{graphicx}
\usepackage[round]{natbib}
\usepackage{todonotes}


\newtheorem{theorem}{Theorem}[section]
\newtheorem{lemma}[theorem]{Lemma}
\newtheorem{proposition}[theorem]{Proposition}

\newtheorem{assumption}[theorem]{Assumption}
\newtheorem{remark}[theorem]{Remark}

\title{ Fast Hybrid Schemes for Fractional Riccati Equations\\ (Rough is not so Tough)}

\author{Giorgia Callegaro \thanks{Department of Mathematics ``Tullio Levi Civita'',
University of Padova, via Trieste 63, 35121 Padova, Italy. Email: gcallega@math.unipd.it}\\
\and
Martino Grasselli \thanks{Department of Mathematics ``Tullio Levi Civita'',
University of Padova, via Trieste 63, 35121 Padova, Italy, and Devinci Research Center, P\^ole Universitaire L\'eonard de Vinci, Paris la Defense. Email: grassell@math.unipd.it}
\and
Gilles Pag\`es \thanks{Laboratoire de Probabilit\'es et Mod\`eles Al\'eatoires, Universit\'e Pierre et Marie Curie, Paris 75252 Paris Cedex 5. Email: gilles.pages@upmc.fr. Acknowledgments: we thank Omar El Euch and Mathieu Rosenbaum for useful discussions on the first part of the paper. We are also grateful to Elia Smaniotto and Giulio Pegorer for valuable comments.}
}

\pagestyle{headings}

\begin{document}
\maketitle
\begin{abstract}
We solve a family of fractional Riccati equations with constant (possibly complex) coefficients. These equations arise, e.g., in fractional Heston stochastic volatility models, that have received great attention in the recent financial literature thanks to their ability to reproduce a rough volatility behavior.
		We first consider the case of a zero initial value corresponding to the characteristic function of the log-price. Then we investigate the case of a general starting value associated to a transform also involving the volatility process.
		The solution to the fractional Riccati equation takes the form of power series, whose convergence domain is typically finite. This naturally suggests a hybrid numerical algorithm to explicitly obtain the solution also beyond the convergence domain of the power series. Numerical tests show that the hybrid algorithm is extremely fast and stable. When applied to option pricing, our method largely outperforms the only available alternative, based on the Adams method. 
\end{abstract}

{\bf 2010 Mathematics Subject Classification}. 60F10, 91G99, 91B25.

{\bf Keywords}: fractional Brownian motion, fractional Riccati equation,
		rough Heston model, power series representation.

\section{Introduction and Motivation}
	
Stochastic volatility models have received great attention in the last decades in the financial community. The most celebrated model is probably the one introduced by~\cite{HESTON93}, where the asset price $S$ has a diffusive dynamics with a stochastic volatility following a square root process driven by a Brownian motion  partially correlated  with the one driving the underlying. This correlation is important in order to capture the leverage effect, a stylized feature observed in the option market that translates into a skewed  implied volatility surface. The Heston model is also able to reproduce other stylized facts, as fat tails for the distribution of the underlying and time-varying volatility. What is more, the characteristic function of the asset price can be computed in closed form, so that the Heston model turns out to be highly tractable insofar option pricing as well as calibration can be efficiently performed through Fourier methods. This analytical tractability is probably the main reason behind the success of the Heston model among practitioners.

\medskip Recently, there has been an increasing attention in the literature to some roughness phenomena observed in the volatility behaviour of high frequency data, which suggest that the log-volatility is very well modeled by a fractional Brownian motion with Hurst parameter of order $0.1$, see e.g.~\cite{Euch2017},~\cite{Mattieu2016},~\cite{gatheral2014}. 
From a practitioner's perspective, rough volatility models would in principle allow for a good fit of the whole volatility surface, in a parsimonious way. Nevertheless, being the fractional Brownian motion non-Markovian, mathematical tractability might be a challenge.
The idea of introducing a fractional Brownian motion in the volatility noise is not new and it goes back, to the best of our knowledge, to~\cite{Fabienne1998}, where the authors extend the~\cite{HullWhite87} stochastic volatility model to the case where the volatility displays long-memory, in order to capture the empirical evidence of persistence of the stochastic feature of the Black Scholes implied volatilities, when time to maturity increases. Long-memory is associated to a Hurst index greater than $0.5$, while the classic Brownian motion case corresponds to a Hurst parameter equal to 0.5. As the  debate on the empirical value for the Hurst index is still controversial in the literature, in our paper we will consider settings which include the complete range of the Hurst coefficient, namely $H\in (0,1)$.

 \medskip A fractional adaptation of the classical Heston model has come under the spotlight (see e.g. the papers \cite{Euch2017}, \cite{gatheral2014} and \cite{Mattieu2016}), since, in this case, pricing of European options is still feasible and hedging portfolios for vanilla options are explicit. In addition, the fractional version of the Heston model is able to reproduce the slope of the skew for short term expiring options without the need of introducing jumps as in the classical Heston model.

When extending the Heston model to the case where the volatility process is driven by a fractional Brownian motion, one faces some challenges due to the fact that the model is no longer Markovian, due to the presence of memory in the volatility process. On the one hand, the model keeps the affine structure, so that  the computation of the characteristic function of the log-price is still associated to the solution of a quadratic ODE as in the classic Heston case. On the other hand, such Riccati ODE involves fractional derivatives and their solution is no longer available in closed form.  The Adams discretization scheme (see e.g.~\cite{Diethelm2002},~\cite{Diethelm2004}) is the standard numerical method to deal with fractional ODEs. As an alternative, a rational approximation, based on Pad\'e approximants, has been recently proposed in \cite{GR2019}, who started from the short time expansion of the solution as developed in \cite{AGR2019}. Unfortunately, they considered only a very specific set of parameters (and only one value for the argument of the Fourier transform) and they did not provide a detailed analysis of the error and the computational time, so that it seems difficult to benchmark to their results. From a numerical viewpoint, algorithms based on the Adams method, which is basically an Euler scheme of the equation, are not well performing due to the presence of a discrete time convolution induced by the fact that the fractional derivative is not a local operator. In this respect, Runge-Kutta schemes do not seem to be appropriate. On the contrary, the Richardson-Romberg extrapolation method is easy to implement since it consists of a linear combination of solutions of Euler schemes with coarse and refined  steps, so that the same accuracy can be obtained with a dramatic reduction of the computation time (see~\cite{Tubaro90} and~\cite{Gilles07} who developed and popularized the same paradigm in a stochastic environment). 
One can reach and even outperform in the multi-step case the rate obtained by Euler schemes for regular ODEs, which is known to be proportional to the inverse of the complexity. 

 \medskip In this paper we study the efficient computation of the solution of the fractional Riccati ODEs arising from the (fractional) Heston model with constant coefficients for a general Hurst index $H$ ranging in $(0,1)$.  It is also worth mentioning paper \cite{Gerhold2018}, where the authors exploited the Volterra integral representation for the solution to the Riccati ODE, with null initial condition, in order to find upper and lower bounds for its explosion time. In the specific case of the rough Heston model, they tried a fractional power series ansatz for the solution to the Riccati and, via the Cauchy-Hadamard formula they proposed an approximation of the explosion time (see Theorems 7.5 and 7.6 therein).

\medskip We show that it is possible to represent the solution as a power series in a neigbourhood of $0$ and we determine upper and lower bounds for its convergence domain.  It is important to notice that the existence domain of the solution does not always coincide with the convergence domain of the power series (we will see that this typically happens when the coefficients of the fractional Riccati ODE have different signs), in analogy with the fact that the function $x/(x+1)$ is well defined on $(-1,+\infty)$, despite the convergence domain of its power series expansion is only defined for $\vert x\vert <1$. From a computational point of view, the expansion  we propose is extremely efficient compared with the Richardson-Romberg extrapolation method on its domain of existence. If the solution is needed at a date which is beyond the convergence interval, we propose a hybrid numerical scheme that combines our series expansion together with the Richardson-Romberg machinery. The resulting algorithm turns out to be flexible and still very fast, when compared with the benchmark available in the literature, based on the Adams method.

\medskip
The fractional Riccati ODE associated to the characteristic function of the log-asset price is very special insofar it starts from zero. More general transforms (including the characteristic function of the volatility process) lead to non zero initial conditions, see e.g. ~\cite{Pulido2017}, where the authors extend the results of ~\cite{Euch2017} to the case where the volatility is a Volterra process, which includes the (classic and) fractional Heston model for some particular choice of the kernel. The extension of our results to the case of a general (non-null) initial condition is not straightforward and requires additional care. 
Nevertheless, we will show that it is still possible to provide bounds for the convergence domain of the corresponding power series expansion, at the additional cost of extending the  implementation of the algorithm to a  doubly indexed series, in the spirit of ~\cite{jacquier2014}. 

\bigskip
\noindent {\sc Notation.}
\medskip

\noindent $\bullet$ $|z|$ denotes the modulus of the complex number $z\!\in\mathbb{C}$ and $\Re e(z)$ and $\Im m(z)$ its real and imaginary part respectively.

\medskip
\noindent $\bullet$ $x_{_\pm}= \max(\pm x,0)$, $x\!\in \R$.

\medskip
\noindent $\bullet$ $\Gamma(a)= \int_0^{+\infty} u^{a-1}e^{-u}du$, $a>0$  and $B(a,b) = \int_0^1 u^{a-1} (1-u)^{b-1}du$, $a,\, b>0$. We will use extensively  the classical identities $\Gamma(a+1)= a\Gamma(a)$ and $B(a,b) = \frac{\Gamma(a)\Gamma(b)}{\Gamma(a+b)}$.

\medskip
\noindent $\bullet$ $L_p([a,b])$ denotes the set of all Lebesgue measurable functions $f$ such that $\int_{[a,b]} | f(x)|^p dx  < +\infty$, for $1 \le p < \infty$. 

\medskip
\noindent $\bullet$ $\textrm{AC}([a,b])$ for $-\infty \le a < b \le +\infty$, denotes the space of absolutely continuous functions on $[a,b]$. A function $f$ is absolutely continuous if for any $\epsilon >0$ there exists a $\delta>0$ such that for any finite set of pairwise nonintersecting intervals $[a_k,b_k] \subset [a,b]$, $k=1,2 \dots$, such that $\sum_{k=1}^n (b_k - a_k) < \delta$ we have $\sum_{k=1}^n  |f(b_k) - f(a_k)| < \epsilon$. 

\medskip
\noindent $\bullet$ $\textrm{AC}^n([a,b])$, for $n=1,2,\dots$ and  for $-\infty \le a < b \le +\infty$, denotes the space of continuous functions $f$ which have continuous derivatives up to order $(n-1)$ on $[a,b]$, with $f^{(n-1)} \in \textrm{AC}([a,b])$. 

\section{The Problem}

We start by recalling the fractional version of the Heston model, where the pair $(S,V)$ of the stock (forward) price and its instantaneous variance has the dynamics
\begin{equation}\label{eq:frHeston}
\left\{
\begin{array}{rcl}
d S_t & = & S_t \sqrt{V_t} dW_t, \quad S_0=s_0 \in \mathbb R_+ \\
V_t &=& V_0 + \frac{1}{\Gamma(\alpha)}  \int_0^t (t-s)^{\alpha-1} \eta (m - V_s) ds + \frac{1}{\Gamma(\alpha)}  \int_0^t (t-s)^{\alpha-1} \eta \zeta \sqrt{ V_s} dB_s, V_0 \in \mathbb R_+,
\end{array}
\right.
\end{equation}
where $\eta, m, \zeta$ are positive real numbers and the correlation between the two Brownian motions $W$ and $B$ is $\rho \in (-1, 1)$. The parameter $\alpha \in (0,2)$ plays a crucial role (see Remark~\ref{rem:alpha} below). Notice that the classical~\cite{HESTON93} model corresponds to the case $\alpha=1$.
\begin{remark}\label{rem:alpha}
The smoothness of the volatility trajectories is governed by $\alpha$. Recall that the fractional Brownian motion $W^H$, where $H \in (0,1)$ is the Hurst exponent, admits $e.g.$ the Manderlbrot-Van Ness representation
$$
W_t^H = \frac{1}{\Gamma(H+\frac{1}{2})} \int_{- \infty}^0 \left( {(t-s)}^{H-\frac{1}{2}} - (-s)^ {H-\frac{1}{2}}  \right) dW_s + \frac{1}{\Gamma(H+\frac{1}{2})} \int_{0}^t {(t-s)}^{H-\frac{1}{2}}  dW_s 
$$
where the Hurst parameter plays a crucial role in the path's regularity of the kernel ${(t-s)}^{H-\frac{1}{2}}$. In particular, when $H<\frac{1}{2}$ the Brownian integral has Holder regularity and it allows for a rough behavior (see~\cite{Euch2017}).  So, defining $\alpha = H + \frac{1}{2}$ and taking $\alpha< 1$ in the dynamics ~\eqref{eq:frHeston} leads to a rough behavior of the trajectories of $V$.
\end{remark}

The starting point of our work  is the key Theorem 4.1 in~\cite{Euch2017}, which has been extended by~\cite{Pulido2017} (see Theorem~4.3 and Example~7.2 therein) to the class of affine Volterra processes.  
More precisely,~\cite{Euch2017} showed that the characteristic function of the log-price $X_T:= \log(S_T/S_0) $, for $T >0$ and $u_1\! \in \imath \mathbb{R}$, reads
\begin{equation}\label{charF}
 \mathbb E(e^{u_1 X_T}) = \exp \left[ \phi_1(T) + V_0 \ \phi_2(T) \right]
\end{equation}
where 
\begin{equation}\label{eq:charFb}
\phi_1(T) = m \,\eta \int_0^T   \psi(s) ds, \quad \phi_2(T)=I_{1-\alpha} \psi(T)
\end{equation}
and $\psi$ solves the fractional Riccati equation for $t\in[0,T]$
\begin{equation}\label{fracRiccati0}
D^\alpha \psi(t) = \frac{1}{2} (u_1^2 - u_1) + \eta (u_1 \rho \zeta - 1) \psi(t) + \frac{{(\eta \zeta)}^2}{2} \psi^2(t), \quad I_{1-\alpha}\psi(0)=0,
\end{equation}
where $D^{\alpha}$ and $I_{1 - \alpha}$ denote, respectively, the {\em Riemann-Liouville} fractional derivative of order $\alpha$ and the {\em Riemann-Liouville} integral of order $(1-\alpha)$. 

Here we briefly recall both definitions, inspired by \cite[Chapter 2]{samko93}. For any $\alpha >0$ and $f: (0,+\infty)\to \R$ in $L_1([0,T])$, the {\em Riemann-Liouville fractional integral of order $\alpha$} is defined as follows
\begin{equation}\label{eq:Integral}
I_{\alpha}f(t)= \frac{1}{\Gamma(\alpha)}\int_0^t (t-s)^{\alpha-1}f(s)ds.
\end{equation}

Note that we skip $0$ in the above fractional integral, thus avoiding the classical notation $I_{\alpha,0+}$.\\
For $\alpha\!\in(0,1)$, we now define the {\em Riemann-Liouville fractional derivative of order $\alpha$} of $f$ as follows:
\begin{align}
D^{\alpha} f(t) = \frac{1}{\Gamma (1 - \alpha)}\frac{d}{dt}\int_0^t (t-s)^{-\alpha}f(s) ds.
\end{align}

A sufficient condition for its existence is $f \in \textrm{AC}([0,T])$. In the case when $\alpha\!\in [1,2)$ we have
\begin{align}
D^{\alpha} f(t) = \frac{1}{\Gamma (2 - \alpha)}\frac{d^2}{dt^2}\int_0^t (t-s)^{1-\alpha}f(s) ds.
\end{align}

A sufficient condition for its existence is $f \in \textrm{AC}^1([0,T])$.

%

\begin{remark}
$(a)$  When $\alpha =1$, $D^{\alpha}$ obviously coincides with the regular differentiation operator and the above Riccati equation reduces to the classic one. 

\smallskip
\noindent $(b)$ Notice that the fractional derivative is also defined for a general $\alpha \ge 1$ as follows:
\begin{equation}\label{eq:Dalphagene}
D^{\alpha} f(t) = \frac{1}{\Gamma (n - \alpha)}\frac{d^n}{dt^n}\int_0^t (t-s)^{n-1-\alpha}f(s) ds\quad\mbox{ where }\quad n=\lfloor \alpha \rfloor+1.
\end{equation} 
A  sufficient condition for its existence is $f \in \textrm{AC}^{\lfloor \alpha \rfloor }([0,T])$.
\end{remark}

More generally,~\cite{Pulido2017} in their Example 7.2 proved that for $\Re e(u_1) \in [0,1], \Re e(u_2)\leq 0$,
\begin{equation}\label{charFbis}
\mathbb E \left( e^{u_1 X_T + u_2 V_T} \right) = \exp \left[ \phi_1(T) + V_0 \ \phi_2(T) \right],
\end{equation}
where 
$\phi_1,\phi_2$ are defined as before 
and $\psi$ solves the same  fractional Riccati equation ~\eqref{fracRiccati0} with a different initial condition:
\begin{equation}\label{intiCondPulido}
I_{1 - \alpha} \psi(0)=u_2 .
\end{equation}
This transform can be useful in view of pricing volatility products, as it involves the joint distribution of the asset price and the volatility. Obviously, once the characteristic function is known, option pricing can be easily performed through standard Fourier techniques.

\medskip Our first aim in this paper is  to solve the  fractional Riccati {\it ODE} ~\eqref{fracRiccati0} with constant coefficients  when $\alpha\!\in (0,2]$.  From now on, we relabel  the coefficients as follows 
\begin{equation} \label{eq:Riccatialpha}
(\mathcal{E}^{u,v}_{\lambda, \mu,\nu}) \,\equiv\, D^{\alpha} \psi = \lambda  \psi^2 +\mu \psi+\nu, \;\left\{\begin{array}{ll} I_{1 - \alpha} \psi(0)=u&\mbox{if }\; \alpha\!\in (0,1]\\
 I_{1 - \alpha} \psi(0)=u \mbox{ and }I_{2 - \alpha}\psi(0)= v &\mbox{if }\; \alpha\!\in (1,2],
\end{array}\right.
\end{equation}
where $ \lambda$, $\mu$, $\nu$ and $u$, $v$  are {\em complex} numbers (when $\alpha\!\in (0,1]$ we use $(\mathcal{E}^{u}_{\lambda, \mu,\nu})$). 

We will propose an efficient numerical method to compute the  solution, with a special emphasis on the case where $\alpha\!\in (0,1)$ and the initial condition $u$ is equal to zero, corresponding to the characteristic function of the log-asset price. 

 \begin{remark}
a) Being the Hurst coefficient $H=\alpha-\frac 12$, the case $\alpha\!\in (0,1)$  contains  the rough volatility modeling whereas the case $\alpha\!\in (1,2)$ contains the long memory modeling and corresponds to the framework of~\cite{Fabienne1998}.\\
b) We refer, respectively, to~\cite{Euch2017} and to~\cite{Pulido2017} for existence  and uniqueness of the solution to the Riccati equation ~\eqref{fracRiccati0}, respectively with null initial condition and with initial condition ~\eqref{intiCondPulido}. Our approach will prove the existence of  a solution in a (right) neighbourhood of $0$. 
\end{remark}

\medskip
One checks that, under appropriate integrability conditions on the function $f$, $(D^{\alpha} \circ  I_{\alpha}) f = f$, so that the Fractional Riccati equation $(\mathcal E^u _{\lambda,\mu,\nu})$ can be rewritten equivalently in a fractional integral form as follows
\begin{equation}\label{eq:Riccating}
\psi(t)=  \frac{u}{\Gamma(\alpha)}t^{\alpha-1}+I_{\alpha}(\lambda \psi^2+\mu \psi+\nu)\quad\mbox{when $\alpha\!\in (0,1]$,}
\end{equation}
with $u\!\in \mathbb{C}$,  and 
\begin{equation}\label{eq:Riccating2}
\psi(t)=  \frac{u}{\Gamma(\alpha)}t^{\alpha-1}+\frac{v}{\Gamma(\alpha-1)}t^{\alpha-2}+I_{\alpha}(\lambda \psi^2+\mu \psi+\nu)\quad \mbox{when $\alpha\!\in (1,2]$,}
\end{equation}
with $u$, $v\!\in \mathbb{C}$. The consistency of such initial conditions follows in both cases from the fact that   $I_{\beta}(t^{- \beta})= \Gamma(1 +\lfloor \beta\rfloor - \beta)$, for $0 < \beta \le 2$. 

\medskip The starting strategy of our approach is to establish the existence of formal solutions to   $(\mathcal E^{u,v} _{\lambda,\mu,\nu})$ as {\em fractional power series expansions} and  then prove by a propagation method of upper/lower bounds that the convergence radius of such series is non zero (and possibly  finite). Indeed, this is strongly suggested by  the elementary computation of the fractional derivative of  a power function $t^r$, $r\!\in \mathbb{R}$:
\begin{align}\label{eq:Dalphatr}
D^{\alpha} t^r = \frac{\Gamma(r+1)}{\Gamma (r+1-\alpha)}t^{r-\alpha}\quad\mbox{if}\; r>\alpha-1\quad \mbox{ and }\quad  D^{\alpha} t^{\alpha-1} = 0.
\end{align}
Similarly
\begin{equation}
\label{eq:Ialphatr}
I_{\alpha}  t^r =  \frac{\Gamma(r+1)}{\Gamma(r+\alpha+1)}t^{\alpha+r} \quad \mbox{if}\; r\neq-1.
\end{equation}

In particular, note that this last property justifies why a natural starting value for $(\mathcal E^u_{\lambda, \mu,\nu})$ is of the form $\frac{u}{\Gamma(\alpha)}t^{\alpha-1}$ since its $\alpha$-derivative is $0$ and its $(1-\alpha)$-integral antiderivative is $u$ owing to the above formulas. On the other hand, the fractional derivative of a constant is not zero and reads:
\begin{align}
D^{\alpha} c = \frac{c}{\Gamma (1 - \alpha)}t^{-\alpha}.
\end{align}
\begin{remark}
When $\alpha =1$, $D^{\alpha}$ obviously coincides with the regular differentiation operator and the above Riccati equation is simply the regular Riccati equation with quadratic right-hand side, for which a closed form solution is available.
\end{remark}

\medskip
 In the first part of this paper we will mostly distinguish two cases:
\begin{itemize}
\item the  case $u=0$ and $\alpha\!\in (0,1]$, which is closely connected with the pricing of options in a rough stochastic volatility model (see~\cite{Mattieu2016, Euch2017}) 
\item the case $u=v=0$ and $\alpha\!\in (1,2]$, which can be seen as a special case of the more general results presented in \cite{Pulido2017},
\end{itemize}
and in a second part, we well investigate the more general case where $u \neq 0$, which requires more care. 
 

\medskip

The property~~\eqref{eq:Dalphatr} shows that the $\alpha$-fractional differentiation preserves the fractional monomials $t^{r\alpha}$, $r\!\in \mathbb{Z}$. This property strongly suggests to solve the above equation as fractional power series, at least in the neighborhood of $0$.  Usually the fractional power series has a finite convergence radius but this does not mean that the solution does not exist outside the interval defined by this radius. This will lead us to  design a hybrid numerical scheme  to solve this equation.


\section{Solving $(\mathcal E^0_{\la,\mu,\nu})$ as a Power Series}\label{sec:Riccati0}

As preliminary remarks before getting onto technicalities, note that:

\begin{itemize}
\item   if $\nu =0$, then the solution to the equation is clearly $0$ by a uniqueness argument.
\item  If $\lambda =0$, the Equation $(\mathcal E^0_{\la,\mu,\nu})$ becomes linear and, as we will see on the way, the  unique solution is  expandable in a fractional power series with an infinite  convergence radius.
\end{itemize}

As a consequence, henceforth we will work, except specific mention, under the following
\begin{assumption}
We assume that $\lambda\nu\neq 0$.
\end{assumption}

   \subsection{The Algorithm}

 The starting idea is to proceed by verification: we  search for a solution as a fractional power series:
\begin{align}\label{eq:DSEalpha}
\psi (t) =\psi_{\lambda,\mu,\nu}(t):= \sum_{k\geq0}a_k t^{k\alpha} 
\end{align}
where the coefficients $a_k$, $k\ge1$,  are complex numbers. We will show that the coefficients $a_k$ are uniquely defined and we will establish that the convergence radius $R_{\psi}$ of $\psi$ is non-zero. 

Assume that $R_{\psi}>0$. First note that, for $0<t<R_{\psi}$ 
\begin{align*}
\psi^2 (t) = \sum_{k\geq0}{a_k^*}^2 \ t^{k\alpha} 
\end{align*}
with the Cauchy coefficients of the discrete time convolution given by
\[
{a_k^*}^2 =\sum_{\ell=0}^{k} a_{\ell} a_{k-\ell},\; k\ge 0.
\]

It follows from~~\eqref{eq:Dalphatr}  that 
\begin{align*}
\nonumber D^{\alpha} \psi (t)& = \sum_{k\geq0}a_k\frac{\Gamma(\alpha k+1)}{\Gamma (\alpha k+1-\alpha)}t^{\alpha (k-1)}
=\sum_{\ell\geq -1}a_{\ell+1}\frac{\Gamma(\alpha (\ell+1)+1)}{\Gamma ( \alpha (\ell+1)+1-\alpha)}\,t^{\alpha \ell}.
\end{align*}

On the other hand, from the Riccati equation we have
\begin{align}\label{eq:Ricatalpha}
D^{\alpha} \psi (t)& = \sum_{k\geq0}\big(\lambda {a_k^*}^2 +\mu a_k\big)t^{\alpha k} +\nu
\end{align}
so that the sequence $(a_k)_{k\ge 0}$ satisfies (by identification of the two expansions for $D^{\alpha} \psi$)   the {\em discrete time convolution equation}:
\begin{align}\label{eq:Eqak0}
 (A_{\lambda, \mu,\nu})\quad\equiv\quad a_{k+1}&= \big(\lambda {a_k^*}^2  +\mu a_k\big)\frac{\Gamma (\alpha k +1)}{\Gamma (\alpha k +\alpha +1 )},\;k\ge 1, \; a_1 =\frac{\nu}{\Gamma (\alpha +1)}, \; a_0=0.
\end{align}
\noindent\begin{remark} As a consequence of $a_0=0$, note that the discrete convolution $a^{*2}_k$ reads
\begin{equation}\label{eq:convoldef1}
{a_1^*}^2=0 \quad \mbox{ and }\quad {a_k^*}^2 =\sum_{\ell=1}^{k-1} a_{\ell} a_{k-\ell},\; k\ge 2.
\end{equation}
\end{remark}

   \subsection{The Convergence Radius}\label{subsec:convR}

Let us recall that the convergence radius  $R_{\psi}$ of the fractional power series~~\eqref{eq:DSEalpha} is given by Hadamard's formula:
\begin{equation}\label{eq:Hadamard}
R_{\psi}= \liminf_k \big|a_k\big|^{-\frac{1}{\alpha k}}\!\in [0, +\infty].
\end{equation}
The fractional power series is absolutely converging for every $t\!\in[0,R_{\psi})$ and diverges outside $[0, R_{\psi}]$. 
(We will not discuss the possible extension on the negative real line of the equation.)  It may also be semi-convergent at $R_{\psi}$ if the $a_k $ are real numbers with an alternate sign and decreasing in absolute value. 

The maximal solution of the equation  may exist beyond this interval: we will see that this occurs for example  when the parameters $\lambda$, $\mu$, $\nu$ satisfy $\lambda\nu>0$ and $\mu<0$.  The typical example being the function $t\mapsto \frac{t}{1+t}$ solution to  $\psi'= \psi^2-2\psi+1$, $\psi(0)=0$ defined on $(-1,+\infty)$ but only expandable (at $0$) on $(-1,1]$. This has to do with the existence of poles on the complex plane of the meromorphic extension of the expansion.

However, if the $a_k$, $k\ge 1$, are all non-negative, one at least being non zero, then the domain of existence of the maximal solution is exactly $[0, R_{\psi})$. Its proof is postponed to Section~\ref{sec:thm3.1}.

\smallskip
The theorem below, which is the first key result of this paper, provides explicit bounds for the convergence radius $R_{\psi}$ for the equation $(\mathcal E^0_{\la, \mu,\nu})$.
\begin{theorem} \label{thm:Radius} Let $\alpha \!\in [0,2]$ and let $\lambda$, $\mu$, $\nu\! \in \mathbb{C}$, $\lambda\neq0$. We denote by $\psi_{\la,\mu,\nu}$ the function defined by~~\eqref{eq:DSEalpha} where the coefficients $a_k$ satisfy $(A_{\la, \mu,\nu})$.

\medskip 
\noindent $(a)$ $[$General lower bound of the radius $]$ We have
\begin{equation}\label{eq:lowerbound}
  R_{\psi_{\lambda,\mu,\nu}} > \frac{2^{\frac{1}{\alpha} -(\frac{1}{\alpha}-2)^{+}}\alpha}{\Big(|\mu| + \sqrt{\mu^2 +c_{\alpha} \frac{ |\lambda|  |\nu|}{\Gamma(\alpha )} }\,\Big)^{\frac{1}{\alpha}}}:=\tau_*>0.
\end{equation}
where $c_{\alpha} =  2^{2-(1-2\alpha)^{+}-2(\alpha-1)^{+}}\alpha^{\alpha-1}B(\alpha\wedge 1,\alpha\wedge 1) >0$.

\medskip
\noindent $(b)$ $[$Upper-bound for the radius $]$ If $\la, \nu>0$ and $\mu\ge 0$ (resp.  $\lambda, \nu <0$ and $\mu\le 0$), then 
\[
  R_{\psi_{\lambda,\mu,\nu}} \le R_{\psi_{|\lambda|,0,|\nu|}}\le C_{\alpha}\left( \frac{\Gamma(\alpha+1)}{ \lambda \nu}\right)^{\frac{1}{2\alpha}}
\quad \mbox{where  }\quad C_{\alpha} =\left\{\begin{array}{ll} \displaystyle\left( 3.5^{\alpha-1}\right)^{\frac{1}{2\alpha}} \sqrt{\alpha}&\mbox{if }\alpha \!\in (0,1],\\
&\\
\displaystyle \frac{\sqrt{2\alpha}}{\widetilde B(\alpha)}&\mbox{if }\alpha \!\in (1,2],
\end{array}\right.  
\]
with $\widetilde B(\alpha)=  B(\alpha,\alpha)-2^{1-2\alpha}>0$.
Moreover, $\psi_{\la, \mu,\nu}$ is increasing  (resp. decreasing) and $\displaystyle \lim_{t\to +R_{\psi_{\la, \mu,\nu}}}\psi_{\la, \mu,\nu}(t) ={\rm sign}(\lambda).\infty$ so that the existence domain of $\psi_{\la, \mu,\nu}$ is $[0, R_{\psi_{\la, \mu,\nu}})$.

\medskip
\noindent $(c)$ If $\la,\nu>0$ and $\mu\le 0$, then (with obvious notations) $a^{(\la,\mu,\nu)}_k= (-1)^{k} a^{(\lambda,-\mu,\nu)}_k$, $k\ge 1$, so that $R_{\psi_{\lambda,\mu,\nu}}=   R_{\psi_{\lambda,-\mu,\nu}}$. Moreover if the sequence $a_k^{(\lambda,-\mu,\nu)}$   decreases for $k$ large enough, then the expansion of $\psi_{\la,\mu,\nu}$  converges at $R_{\psi_{\lambda,-\mu,\nu}}$.

\medskip
\noindent $(d)$ If $\mu=0$, then $a_{2k}= 0$ for every $k\ge 1$  and the sequence $b_k=a^0_{2k-1}$, $k\ge 1$,  is solution to  the recursive equation
\begin{equation}\label{eq:convb}
b_1= \frac{\nu}{\Gamma(\alpha+1)} \quad \mbox{ and  } \quad b_{k+1} = \lambda\, \frac{\Gamma(2\alpha k+1)}{\Gamma((2k+1)\alpha +1)}b^{*2}_{k+1},\; k\ge 1,
\end{equation}
where the squared convolution is  still defined by~~\eqref{eq:convoldef1} (the equation is consistent since  $b^{*2}_{k+1}$ only involves terms $b_\ell$, $\ell\le k$).
\end{theorem}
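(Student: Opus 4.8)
These I would settle first, since they drive the reductions used in (a) and (b). For (d), with $\mu=0$ and $a_0=0$ the recursion \eqref{eq:Eqak0} makes $a_{k+1}$ a nonzero multiple of $a^{*2}_k=\sum_{\ell=1}^{k-1}a_\ell a_{k-\ell}$ only when some product $a_\ell a_{k-\ell}$ survives; an induction shows $a_\ell\neq0\Rightarrow\ell$ odd, so $a^{*2}_k\neq 0$ forces $k$ even, i.e. $a_{2k}=0$ for all $k\ge1$. Setting $b_k:=a_{2k-1}$ and regrouping $a^{*2}_{2k}=\sum_{j=1}^{k}b_jb_{k+1-j}=b^{*2}_{k+1}$ turns \eqref{eq:Eqak0} into exactly \eqref{eq:convb}. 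For (c), under $\la,\nu>0$ and $\mu\le0$ I would verify by induction that the sequences $a^{(\la,\mu,\nu)}_k$ and $a^{(\la,-\mu,\nu)}_k$ coincide up to the alternating sign recorded in the statement: in \eqref{eq:Eqak0} the quadratic term reverses sign through the convolution while the linear term absorbs the flip $\mu\mapsto-\mu$. In particular $|a^{(\la,\mu,\nu)}_k|=|a^{(\la,-\mu,\nu)}_k|$, so Hadamard's formula \eqref{eq:Hadamard} gives $R_{\psi_{\la,\mu,\nu}}=R_{\psi_{\la,-\mu,\nu}}$; if the $a^{(\la,-\mu,\nu)}_k$ eventually decrease, $\psi_{\la,\mu,\nu}$ is an eventually alternating series and converges at the boundary by the Abel/Dirichlet test. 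The same sign device yields $a^{(\la,\mu,\nu)}_k=(-1)^k a^{(|\la|,|\mu|,|\nu|)}_k$ when $\la,\nu<0,\ \mu\le0$, which reduces that second case of (b) to the all-nonnegative one.

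\textbf{Part (a): lower bound by domination and a majorant series.} Taking moduli in \eqref{eq:Eqak0} and using $|a^{*2}_k|\le\sum_{\ell=1}^{k-1}|a_\ell||a_{k-\ell}|$, a straightforward induction gives $|a_k|\le a_k^{(|\la|,|\mu|,|\nu|)}$, hence $R_{\psi_{\la,\mu,\nu}}\ge R_{\psi_{|\la|,|\mu|,|\nu|}}$ and it suffices to bound the latter from below. Writing $\gamma_k:=\Gamma(\alpha k+1)/\Gamma(\alpha k+\alpha+1)$, I would seek a majorant $a_k\le M r^k w_k$ with $w_k:=k^{(\alpha\wedge1)-1}$. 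The inductive step reduces to
\[
\big(\la M\,w^{*2}_k+\mu\,w_k\big)\gamma_k\le r\,w_{k+1},
\]
which in turn follows from two structural estimates $\gamma_k w^{*2}_k\le P\,w_{k+1}$ and $\gamma_k w_k\le Q\,w_{k+1}$, valid uniformly in $k$. Optimizing $M$ against the initial value $a_1=\nu/\Gamma(\alpha+1)$ leaves the single scalar condition $r^2-\mu Q\,r-\la\nu P/(\Gamma(\alpha+1)w_1)\ge0$, whose positive root equals $\tau_*^{-\alpha}$ once the explicit values of $P$ (carrying the factor $B(\alpha\wedge1,\alpha\wedge1)$ that enters $c_\alpha$) and $Q$ are inserted. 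Since $w_k$ grows subexponentially, \eqref{eq:Hadamard} then yields $R_{\psi_{\la,\mu,\nu}}\ge\tau_*$, the strict inequality coming from the slack between the non-asymptotic constants $P,Q$ and the true convolution asymptotics.

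\textbf{Part (b): upper bound by a minorant and blow-up.} Since for $\mu\ge0$ the map defining \eqref{eq:Eqak0} is nondecreasing in $\mu$ and in each $a_j$, an induction gives $a_k^{(\la,\mu,\nu)}\ge a_k^{(\la,0,\nu)}$, whence $R_{\psi_{\la,\mu,\nu}}\le R_{\psi_{\la,0,\nu}}=R_{\psi_{|\la|,0,|\nu|}}$; together with the sign reduction of (c) this covers both sign regimes. It then remains to bound $R_{\psi_{\la,0,\nu}}$ from above, for which I would use the reduced recursion \eqref{eq:convb} and prove a minorant $b_k\ge m\sigma^k v_k$ by induction, the key being a \emph{lower} convolution estimate $\delta_k v^{*2}_{k+1}\ge P'\,v_{k+1}$ with $\delta_k:=\Gamma(2\alpha k+1)/\Gamma((2k+1)\alpha+1)$. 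Optimizing as before fixes $\sigma=\la\nu P'/(\Gamma(\alpha+1)v_1)$ and gives, via \eqref{eq:Hadamard} along the odd indices, $R\le\sigma^{-1/(2\alpha)}=C_\alpha(\Gamma(\alpha+1)/(\la\nu))^{1/(2\alpha)}$; the constant $C_\alpha$ splits into the two regimes, a constant minorant $v_k\equiv1$ for $\alpha\in(0,1]$ (producing the $3.5^{\alpha-1}\sqrt\alpha$ constant) and a decaying $v_k$ for $\alpha\in(1,2]$ (producing $\widetilde B(\alpha)=B(\alpha,\alpha)-2^{1-2\alpha}$, the correction $2^{1-2\alpha}$ coming from discarding the diagonal term of the convolution). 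Finally, as all $a_k\ge0$ here, $\psi_{\la,\mu,\nu}$ is increasing (resp. decreasing after the sign reduction), and by Pringsheim's theorem the finite radius $R$ is a singular point of a series with nonnegative coefficients, forcing $\psi_{\la,\mu,\nu}(t)\to\mathrm{sign}(\la)\cdot\infty$ as $t\uparrow R$; combined with the postponed result that nonnegative coefficients make the existence domain exactly $[0,R)$, this gives the last assertions of (b).

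\textbf{The main obstacle.} Everything hinges on the two families of explicit, \emph{uniform-in-$k$} discrete convolution estimates against the Gamma-ratio weights $\gamma_k$ and $\delta_k$: the upper bound $\gamma_k w^{*2}_k\le P\,w_{k+1}$ for (a) and the lower bound $\delta_k v^{*2}_{k+1}\ge P'\,v_{k+1}$ for (b). These must hold for \emph{all} $k$, not merely asymptotically, with clean constants expressed through $B(\alpha\wedge1,\alpha\wedge1)$ and $\widetilde B(\alpha)$; obtaining them requires comparing the discrete sums $\sum_\ell\ell^{\gamma-1}(k-\ell)^{\gamma-1}$ with the Beta integral $\int_0^1 x^{\gamma-1}(1-x)^{\gamma-1}dx$ while controlling the endpoint corrections (for $\alpha\le1$, where the summand is singular at the ends) and the diagonal correction (for $\alpha>1$). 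The various $2^{(\cdot)^+}$ and $(\cdot)^+$ exponents appearing in $\tau_*$ and $c_\alpha$ are precisely the bookkeeping of these corrections across the regimes $\alpha\le\tfrac12$, $\tfrac12<\alpha\le1$ and $1<\alpha\le2$, and keeping every constant sharp enough to recover the stated closed forms is the delicate part of the argument.
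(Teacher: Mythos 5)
Your architecture coincides with the paper's: parts (d) and (c) by induction on the recursion \eqref{eq:Eqak0}, the reduction of (b) to $\mu=0$ via the monotone comparison $a_k\ge a^0_k\ge 0$, and then a two-sided propagation of bounds of the form $C\rho^k\times(\mbox{power of }k)$, closed by saturating the initial-value constraint into a quadratic in $\rho$ and invoking Hadamard's formula \eqref{eq:Hadamard}. Your treatment of (c) and (d), and your domination/comparison reductions, are correct. But what you defer as ``the main obstacle'' is not a technicality to be filled in later: it is the actual content of (a) and (b). The paper's Appendix A exists precisely to supply those two families of uniform-in-$k$ estimates, namely Lemma~\ref{lem:f} (the Riemann-sum/Beta comparison of $\frac 1k\sum_{\ell=1}^{k-1}f_\alpha(\ell/k)$ with $B(\alpha,\alpha)$ when $f_\alpha(x)=\big(x(1-x)\big)^{\alpha-1}$ is convex, i.e. $\alpha\le 1$, Jensen's inequality at $x=\tfrac12$ for the opposite bound, and the reversed pair producing $\widetilde B(\alpha)=B(\alpha,\alpha)-2^{1-2\alpha}$ when $f_\alpha$ is concave, i.e. $\alpha\in(1,2]$) together with Kershaw's inequalities \eqref{eq:Kershaw} for the Gamma ratios. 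A proposal that reduces the theorem to these estimates without proving them reproduces the paper's strategy, not its proof.

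Moreover, two of your concrete claims are false, and they matter because the theorem asserts explicit constants. First, your announced weights cannot reproduce those constants. In (b) for $\alpha\in(0,1]$ the paper propagates the minorant $b_k\ge c\,\rho^k k^{\alpha-1}$; the constant $\big(3\cdot 5^{\alpha-1}\big)^{\frac{1}{2\alpha}}\sqrt{\alpha}$ comes from Jensen's inequality, via $f_\alpha(1/2)=4^{1-\alpha}$, combined with the minimum over $k$ (attained at $k=1$) of the Kershaw correction factor. A flat minorant $v_k\equiv 1$ yields instead a constant of the form $\big(\Gamma(3\alpha+1)/\Gamma(2\alpha+1)\big)^{\frac{1}{2\alpha}}$, which at $\alpha=\tfrac12$ equals $\Gamma(5/2)\approx 1.33$ whereas the stated constant is $3/\sqrt{10}\approx 0.95$: you would prove a strictly weaker upper bound than the one claimed in (b). Symmetrically, in (a) for $\alpha\in(1,2]$ the factor $2^{-2(\alpha-1)^{+}}$ inside $c_\alpha$ is exactly $f_\alpha(1/2)$, obtained from Jensen against the weight $k^{\alpha-1}$; your flat weight $w_k\equiv 1$ loses this factor, hence proves a strictly smaller lower bound than $\tau_*$ in \eqref{eq:lowerbound}, which does not imply the stated inequality. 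Second, Pringsheim's theorem does not force blow-up: it makes $t=R$ a singular point of the sum, but a series with nonnegative coefficients can be singular yet bounded at its radius (e.g. $\sum_k t^k/k^2$), so the inference ``singular point $\Rightarrow \psi(t)\to {\rm sign}(\lambda)\cdot\infty$'' is a non sequitur; the divergence of $\sum_k a_k R^{\alpha k}$ has to be argued from the equation and the coefficient lower bounds, not from singularity alone.
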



\begin{remark}
(a) The lower bound is not optimal since, if $\lambda =0$ and $\mu\neq 0$, it is straightforward that 
 \[
 \frac{a_{k+1}}{a_k}\sim  \frac{\Gamma (\alpha k +1)}{\Gamma (\alpha k +\alpha +1 )}\mu \to 0\quad \mbox{ as }\quad k\to +\infty \quad \mbox{so that $R_{\psi_{0,\mu,\nu}}=+\infty$}. 
 \]
(b) In particular the theorem shows  that, if $\lambda$, $\nu >0$,  there exist real constants $0<K_1(\alpha)<K_2(\alpha)$, only depending on $\alpha$, such that 
 \[
  \frac{K_1(\alpha)}{(\lambda\nu)^{\frac{1}{2\alpha}}}\le R_{\psi_{\lambda,0,\nu}}\le \frac{K_2(\alpha)}{(\lambda\nu)^{\frac{1}{2\alpha}}}.
 \]
(c) When $\la$, $\nu>0$ and $\mu\le 0$, the maximal solution of $(\mathcal E^0_{\la,\mu,\nu})$  lives on the whole positive real line, even if its expansion only converges for $t\!\in [0, R_{\psi_{\la,\mu,\nu}}]$. \\
(d) In Table \ref{ValuesTau*} in Section \ref{sec:testFrac} we will provide the values of the convergence domain in some realistic scenarios.
 \end{remark}
 
 \bigskip As already mentioned in the introduction, the domain of existence of the solution to $(\mathcal E^0_{\la, \mu, \nu})$ may be strictly wider than that   of the fractional power series. Hence, it is not possible to rely exclusively on this expansion of the solution to propose  a fast numerical method for solving the equation. The aim is to take optimally advantage of this expansion to devise a {\em hybrid} numerical scheme which works to approximate the solution of the equation everywhere on its domain of existence. 
 
   \subsection{Controlling the ``Remainder'' Term}
In order to control the error induced by truncating the fractional series expansion \eqref{eq:DSEalpha} at any order $n_0$,  we need some errors bounds. In practice we do not know the exact value of the radius $R_{\psi}$. However, we can rely on our theoretical lower bound $\tau_*$
given by the right-hand side of \eqref{eq:lowerbound}. 

An alternative to this theoretical choice is  to compute  $R^{(n)}:= |a_n|^{-\frac {1}{\alpha n}}$ for $n$ large enough   where $(a_n)$ satisfies $(A_{\la,\mu,\nu})$.  The value turns out to be a good approximation of $R_{\psi}$, but may of course overestimate it, which suggests to consider   $\tau_{\psi} = p R^{(n)}$ with $p\!\in [0,0.90]$.

In both cases,  in what follows we assume that $t\!\in (0, \tau_*)$. 

In the proof of Theorem~\ref{thm:Radius}$(a)$, see Section~\ref{sec:thm3.1}, we will show by induction that the sequence $(a_n)_{n\ge 1}$ satisfies
$$
|a_k|\le C_*(\rho_*)^k k^{\alpha-1},\;k\ge 1.
$$
where $\rho_*= (\tau_*)^{-\alpha}$ or, equivalently, $ \tau_*= (\rho_*)^{-\frac{1}{\alpha}}$ and $C_*$ is given by $C_* =  \frac{|\nu|}{\Gamma(\alpha +1)\rho_*}$.


 Then 
\begin{eqnarray*}
\forall\, t\!\in \big(0, \tau_*\big),\quad \left|\psi_{\lambda,\mu,\nu}(t)- \sum_{k=1}^{n_0} a_k t^{k\alpha}\right|&\le &\sum_{k\ge n_0+1} |a_k| t^{k\alpha}\le C_*\sum_{k\ge n_0+1}k^{\alpha-1}(\rho_*t^{\alpha})^k\\
& =& C_*\sum_{k\ge n_0+1}k^{\alpha-1}(t/\tau_*)^{\alpha k}
\end{eqnarray*}
owing to~~\eqref{eq:UpperBoundak} (and its counterpart for $1<\alpha\le 2$ with $\rho_*$ given by ~\eqref{eq:rho_*2}), where $\theta= \rho_*t^{\alpha}= (t/\tau_*)^{\alpha} \!\in (0,1)$.

\medskip
\noindent {\em Case $\alpha\!\in (0,1]$:} The function $\xi \mapsto \xi^{\alpha-1}\theta^\xi$ is decreasing on the positive real line. 
\begin{eqnarray}
\label{eq:ErrorBound}\sum_{k\ge n_0+1} k^{\alpha-1} \theta^k   \le  \int_{n_0}^{+\infty} \xi^{\alpha-1}  \theta^{\xi}\,d\xi 
= \big(\log(1/\theta)\big)^{-\alpha}\int_{n_0\log(1/\theta)}^{+\infty} u^{\alpha-1} e^{-u} du.
\end{eqnarray}

\smallskip
Note that $u^{\alpha-1}\le x^{\alpha-1}$ for every $u\ge x$ since $0<\alpha\le 1$ so that  $ \int_x^{+\infty} u^{\alpha-1}e^{-a u}du\le x^{\alpha-1} \int_x^{+\infty}e^{-au}du = \frac{x^{\alpha-1}e^{-ax}}{a}$. Hence, we deduce that 
\begin{equation*}
\forall\, t\!\in \big(0, \tau_*\big),\quad  \left|\psi_{\lambda,\mu,\nu}(t)- \sum_{k=1}^{n_0} a_k t^{k\alpha}\right| \le \frac{C_*}{\alpha\log(\tau_*/t)} \,\frac{(t/\tau_*)^{n_0\alpha}}{n_0^{1-\alpha}}.
\end{equation*}


\medskip
\noindent 
{\em Case $\alpha\!\in [1,2]$.} Note that the function $\xi \mapsto \xi^{\alpha-1}\theta^\xi$  is now only decreasing over $\left[\displaystyle \frac{\alpha-1}{\log(\vartheta)}, +\infty\right)$ so that~~\eqref{eq:ErrorBound} only holds for $n_0\geq \displaystyle \frac{\alpha-1}{\log(1/\theta)}$(which can be very large if $\theta= (t/\tau_*)^{\alpha}$ is close to $1$). In practice this means that, to compute $\psi$ one should at least consider $n_0$ terms!

\smallskip
To get an upper-bound we perform an integration by part which shows that
\begin{align*}
\int_{x}^{+\infty} u^{\alpha-1} e^{-u} du&= x^{\alpha-1}e^{-x}+(\alpha-1)\int_{x}^{+\infty} u^{\alpha-2}e^{-u}du\\
&\le   x^{\alpha-1}e^{-x}+(\alpha-1) x^{\alpha-2}e^{-x}=  x^{\alpha-1}e^{-x}\Big(1+\frac{\alpha-1}{x}\Big)
\end{align*}
 where in the second line we used that $u^{\alpha-2}\le x^{\alpha-2}$ since $u\ge x$ and  $1\le \alpha\le 2$.
Plugging this in~~\eqref{eq:ErrorBound} with $x =n_0\log(1/\theta)$ and $\theta= t/\tau_*$ yields  the following formula which holds true for every $\alpha\!\in (0,2]$,
\[
\forall\, t\!\in (0, \tau_*),\qquad \left|\psi_{\lambda,\mu,\nu}(t)- \sum_{k=1}^{n_0} a_k t^{k\alpha}\right| \le    \frac{C_*n_0^{\alpha-1} }{\alpha\log(\tau_*/t)}\,(t/\tau_*)^{n_0\alpha }\left(1+\frac{(\alpha-1)_{_+}}{\alpha\, n_0\log(\tau_*/t)}\right).
\]
 
If $\tau_*$ is estimated empirically, the propagation property can be no longer used. Similar bounds, though less precise, can be obtained using that $\tau_*<R_{\psi_{\lambda,\mu,\nu}}$.

In practice,  we will favor this second approach over the use of $\rho_*$, as  $\rho_*$ provides a too conservative lower estimate of $R_{\psi_{\la,\mu,\nu}}$.

 \section{Hybrid Numerical Scheme for $(\mathcal E^0_{\lambda,\mu,\nu})$, $0<\alpha \le 1$}
 \label{sec:hybrid}

The idea now is to mix two approaches to solve the above fractional Riccati equations~~\eqref{eq:Riccatialpha} 
on an interval $[0,T]$, $T>0$, supposed to be included in the domain $D_{\psi}$ on which $\psi$  is defined. We will focus on the first  equation (with $0$ as initial value) for convenience.

 The aim here is to describe a hybrid algorithm to compute the triplet
 \[
\Psi(t)=  \Big(\psi(t), I_1(\psi)(t), I_{1-\alpha}(\psi)(t)\Big)
\]
 at a  given time  $t=T$ where $\psi=\psi_{\lambda,\mu,\nu}$ is solution to $({\cal E}^0_{\lambda,\mu,\nu})$ (see Equation~~\eqref{eq:Riccatialpha}). By hybrid we mean that we will mix (and merge) two methods, one based on the fractional power series expansion of $\psi$ and its two integrals and one based on a time discretization of the equation satisfied by $\psi$ and the integral operators. 
 
 On  the top of that, we will introduce a Richardson-Romberg extrapolation method based  on a conjecture on the existence of  an expansion of the time discretization error. We refer to ~\cite{Tubaro90} and~\cite{Gilles07} for a full explanation of the Richardson-Romberg extrapolation method and its multistep refinements.
 
 As established in Section~\ref{sec:Riccati0}, the   solution $\psi$ can be expanded as a (fractional) power series on $[-R_{\psi},R_{\psi}]$, $R_{\psi}>0$. Namely, for every $t\!\in (-R_{\psi},R_{\psi})$
 \begin{equation}\label{eq:Psibis}
 \psi(t) = \sum_{r\ge 1} a_rt^{\alpha r}.
 \end{equation}
 As a consequence, it is straightforward that 
 \begin{equation}\label{eq:I1Psi}
 I_1(\psi)(t)= \int_0^t \psi(s)ds = t\sum_{r\ge 1}\frac{a_r}{\alpha r+1}t^{\alpha r}
 \end{equation}
 and, using~~\eqref{eq:Ialphatr}, that
 \begin{align}\
\nonumber  I_{1-\alpha}(\psi)(t) &= \sum_{r\ge 1} a_r\frac{\Gamma(\alpha r+1)}{\Gamma(\alpha (r-1)+1)} \frac{t^{r\alpha+1-\alpha}}{\alpha r+1-\alpha}\\
 \label{eq:I1-alphapsi}&= t^{1-\alpha}\left( \nu t^{\alpha}+ \sum_{r\ge 2} a_r\Big(1-\frac 1r\Big)^{-1} \frac{\Gamma(\alpha r)}{\Gamma(\alpha(r-1)) }\frac{ t^{r\alpha}}{\alpha r+1-\alpha}\right).
 \end{align}
 
 We will now proceed in four steps.
 
\subsection{Step~1: radius of the power series expansion}
A preliminary step consists in computing enough coefficients $a_r$ of  the fractional power extension of $\psi$, say $r_{\max}$, and estimating  its convergence radius by
 \[
 R_{\psi} = \liminf_{r}|a_r|^{-\frac{1}{\alpha r}}\simeq  R_{\psi,r_{\max}} := |a_{r_{\max}}|^{-\frac{1}{\alpha r_{\max}}}.
 \]
The radius  $R_{\psi} $ is also that of the two other components of $\Psi(T)$, so
 a more  conservative approach in practice  is to estimate $R_{\psi}$ using the larger coefficients $a'_r := a_r\frac{\Gamma(\alpha r+1)}{\Gamma(\alpha (r-1)+1)}\frac{1}{\alpha r+1-\alpha}$, $r\ge 2$,  coming out in~~\eqref{eq:I1-alphapsi}, {\it i.e.} consider
 \[
  R_{\psi}\simeq  \widehat R_{\psi}:= |a'_{r_{\max}}|^{-\frac{1}{\alpha r_{\max}}}.
 \]
This estimate of the radius  is lower than what would be obtained with the sequence $(a_r)$, which is in favor of a better accuracy of the scheme (see further on).  
 
 Then we decide  the accuracy level we wish for the approximation of these series: let $\ve_0$ denote this level, typically $\ve_0=0.01$ or   $0.001$. If we consider some $t$ close to $R_{\psi}$ (or at least its estimate), we will need to compute too many terms of the series to achieve the prescribed accuracy, so we define a threshold $\vartheta\!\in (0,1)$ and we decide that the  above triplet will be  computed by their series expansion only on $[0,\vartheta \widehat  R_{\psi}]$. Then the prescribed accuracy is satisfied if the above fractional power series expansions are truncated into  sums from $r=1$ up to $r_0$ with
$$
 r_0=r_0(\ve_0, \vartheta)=\left\lceil\frac{\log(\ve_0(1-\vartheta))}{\alpha\log(\vartheta)}-1\right\rceil
 $$
 provided $r_0\le r_{\max}$. If $r_0> r_{\max}$, it suffices to invert the above formula where $r_0$ is replaced by $r_{\max}$ to determine the resulting accuracy of the computation.

\subsection{Step~2: hybrid expansion-Euler discretization scheme} 
We assume in what follows that  $\vartheta R^{r_{\max}}_{\psi}<R_{\psi}$ to preserve the accuracy of the computations of the values of $\psi$ by the fractional power  expansion.

\medskip
$\blacktriangleleft$  {\em  Case $T<\vartheta R^{r_{\max}}_{\psi}$}.  One computes the triplet  $\Psi(T)$ by truncating the three  fractional power extensions as explained above.
%
 
\medskip
$\blacktriangleleft$   {\em Case  $T>\vartheta R^{r_{\max}}_{\psi}$}. This is the case where we need to introduce the hybrid feature of the method.  

\medskip
\noindent  {\em Phase~I: Power series computation.} We will use the power series expansion until $\vartheta R_{\psi}$ and then an Euler scheme {\em with memory} (of course) of the equation in its integral form
  \[
( \mathcal{E}^0_{\lambda,\mu,\nu})\;\equiv \;\psi = I_{\alpha}\Big(\nu +\mu \psi+\lambda \psi^2\Big).
 \]
First we consider a time step of the form $h= \frac Tn$   where $n\ge 1$ is an integer (usually a power of $2$). We denote by $\bar \psi^n$  the Euler discretization  scheme with step $h$. Set 
\[
t_k=t^n_k=\frac{kT}{n},\; k=0:n \quad \mbox{ and let }\quad k_0 =  \Big\lfloor \frac{ n\vartheta \widehat  R_{\psi}}{T}\Big\rfloor
\] 
so that $t_{k_0} \le \vartheta \widehat R_{\psi}<t_{k_0+1}$. Note that $t_{k_0}$ may be equal to $0$.

\noindent\begin{remark} The values $\bar \psi^n (t_{k})$ $k=0,\ldots, k_0$ are not computed as an Euler scheme (in spite of the notations) but  using the fractional power expansion~~\eqref{eq:Psibis} truncated at $r_0$. 
\end{remark}

\medskip
\noindent {\em Phase~II: Plain Euler discretization.}  Then, given   the definition~~\eqref{eq:Integral}  of the fractional integral operator $I_{\alpha}$, one has, for every $t\!\in [0,T]$
 \[
 \psi(t) = \frac{\nu t^{\alpha}}{\Gamma(\alpha+1)}+ \frac{1}{\Gamma(\alpha)}\int_0^t \psi(s)\big(\mu+\lambda \psi(s)\big)(t-s)^{\alpha-1}ds
 \]
so that  the values of $\bar \psi^n(t_k)$ for $k=k_0+1,\ldots,n$ are computed by induction  for every $k\ge k_0+1$ by
 \begin{align}
 \nonumber \bar \psi^n (t_k)&=  \frac{\nu \,t_k^{\alpha}}{\Gamma(\alpha+1)}+\frac{1}{\Gamma(\alpha)}\sum_{\ell=1}^{k-1}\bar \psi^n(t_{\ell})\big(\lambda\bar\psi^n(t_{\ell})+\mu\big)\int_{t_{\ell}}^{t_{\ell+1}}(t_k-s)^{\alpha-1}ds\\
\label{eq:EulerIalpha} &=  \frac{1}{\Gamma(\alpha+1)} \left(\frac Tn\right)^{\alpha} \left( \nu k^{\alpha} +\sum_{\ell=1}^{k-1}c^{(\alpha)}_{k-\ell-1}\bar \psi^n(t_{\ell})\big(\lambda\bar\psi^n(t_{\ell})+\mu\big)\right)
 \end{align}
 where
 \[
 c^{(\alpha)}_0=1 \quad \mbox{ and }\quad c^{(\alpha)}_\ell= (\ell+1)^{\alpha} -\ell^{\alpha}, \; \ell=1:k-2.
 \]

To approximate the other two components $I_1(\psi)(t_k)= \int_0^{t_k}\psi(s)ds$ and $I_{1-\alpha}(\psi)(t_k)$ of $\Psi(t_k)$, we proceed as follows:

\begin{itemize}
\item  
For the regular antiderivative $I_1(\psi)$:  we first decompose the integral into two parts by additivity of regular integral
\[
I_1(\psi)(t_k)= I_1(\psi)(t_{k_0})+ \int_{t_{k_0}}^{t_k} \psi(s)ds.
\]
The first integral is computed by integrating the fractional power series expansion~~\eqref{eq:Psibis} $i.e.$
\[
I_1(\psi)(t_{k_0})=  \int_0^{t_{k_0}} \psi(s)ds\simeq t \sum_{r=1}^{r_0}\frac{a_r}{\alpha\, r +1} \,t_{k_0}^{\alpha r},
\]
while the second  one is computed using  a classical trapezoid   method, namely
\[
I_1(\psi)(t_k) \simeq \int_0^{t_{k_0}}\psi(s)ds + \frac Tn\sum_{\ell=k_0}^{k-1}\bar\psi^n(t_{\ell}) +\frac{T}{2n}\big(\bar\psi^n(t_{k})-\bar\psi^n(t_{k_0})\big).
\]
 \item  For the fractional antiderivative $I_{1-\alpha}(\psi)$: first note that we  could take advantage of the fact that $I_{1-\alpha}\circ I_{\alpha}=I_1$ leading to
\[
 I_{1-\alpha}(\psi)= I_1\big(\nu+\mu \psi+\lambda \psi^2\big)
 \]
 so that, for every $t$, 
 \[
  I_{1-\alpha}(\psi)(t) = \nu \, t +\int_0^t \psi(s) \left( \mu + \lambda \psi(s)\right)ds.
 \]
 This reduces the problem to the numerical computation of a standard integral, but with an integrand containing the square of the function $\psi$. 
 
 However, numerical experiments (not reproduced here) showed that a direct approach is much faster, especially when the ratio  $\nu/\lambda$ is large. This led us to conclude that a standard Euler discretization of the integral would be more satisfactory. Consequently, we have
\begin{equation}\label{eq:Ibar1-alpha1}
 \forall\, k\!\in \{0,\ldots,n\},\; I_{1-\alpha}(\psi) (t_k) \simeq \bar   I^n_{1-\alpha}(\psi) (t_k) =  \frac{1}{\Gamma(2-\alpha)} \left(\frac Tn\right)^{1-\alpha} \sum_{\ell=1}^{k-1}c^{(1-\alpha)}_{k-\ell-1}\bar \psi^n(t_{\ell})\
  \end{equation}
  where $c^{(1-\alpha)}_{0}=1 $ and $ c^{(1-\alpha)}_{\ell}= (\ell+1)^{1-\alpha}-\ell^{1-\alpha}$, $\ell=1:n$.
%
%
%
 \end{itemize}
\begin{remark}
The rate of convergence of the Euler scheme of the fractional Riccati equation with our quadratic right-hand side is not a  consequence of standard theorems on ODEs, even in the regular setting $\alpha=1$, since the standard Lipschitz condition  is not satisfied by the polynomial function $u\mapsto \la u^2+\mu u +\nu$. 
\end{remark}

 \subsection{Step~3: extrapolated Hybrid method} \label{sec:step3}
Let $\Psi(t)= \big(\bar \psi(t), I_1(\psi), I_{1-\alpha}(\psi)(t)\big)$ and 
 $$
 \bar \Psi^n(t_k) = \left(\bar \psi^n, \bar  I^n_1(\psi),\bar I^n_{1-\alpha}(\psi)\right)(t_k),\;k=0:n,
 $$ 
 with an obvious (abuse of) notation.   Numerical experiments~--~not yet confirmed by a theoretical analysis~--~strongly  suggest (see the Example \ref{sec:example} below) that the first component of the vector $\Psi$, {\it i.e.} the solution to the Riccati equation itself, satisfies
 \begin{equation}\label{eq:DevtErreur}
 \bar \psi^n(T)-\psi(T) = \frac{c_1}{n}+o(n^{-1}).
 \end{equation}
 
 Taking advantage of this error expansion~~\eqref{eq:DevtErreur}, one considers, for $n$ even,    the approximator~--~known as  Richardson-Romberg (RR) extrapolation~--~ defined by
\[
 \bar \psi_{_{RR,2}}^n(T) := 2\, \bar \psi^n(T) -\bar \psi^{n/2}(T),
\] 
which satisfies
\begin{align*}
 \bar \psi_{_{RR,2}}^n(T) -\psi(T)&= 2\,\Big( \bar \psi^n(T)-\psi(T)\Big) -(\bar \psi^{n/2}(T)-\psi(T)\Big) \\
& =  2\,\Big( \frac{c_1}{n}+o(n^{-1})\Big) - \Big( \frac{2\,c_1}{n}+o(n^{-1}) \Big) =o (n^{-1}).
\end{align*}

We analogously perform the same extrapolation with the two other components $ \bar  I^n_{1-\alpha}(\psi) (T)$ and $ \bar  I^n_{1}(\psi)(T) $ of $\Psi(T)$ and we may reasonably guess that 
\[
 \bar \Psi_{_{RR,2}}^n(T) -\Psi(T)= o(n^{-1}) \quad \mbox{ where }\quad  \bar \Psi_{_{RR,2}}^n(T) := 2\, \bar \Psi^n(T) -\bar \Psi^{n/2}(T) . 
\]

Note that if $o(n^{-1})= O(n^{-2})$, then $ \bar \Psi_{_{RR,2}}^n(T) -\Psi(T)= O(n^{-2})$,  which dramatically   reduces the complexity and makes the scheme rate of decay (inverse-)linear in the complexity. 

 \subsection{Step~4: multistep extrapolated Hybrid method} \label{sec:step4}
 Here we make the additional assumption that the following second-order expansion holds on the triplet\\
  \[
 \bar \Psi^n(T)-\Psi(T) = \frac{c_1}{n}+\frac{c_2}{n^2}+ o(n^{-2}).
 \]
 We define the weights $(w_i)_{i=1,2,3}$ by  (for a reference on this multistep extrapolation we may cite \cite{Gilles07})
 \[
  w_1 = \frac 13,\quad w_2= -2,\quad w_3= \frac 83.
 \]
and taking $n$ as a multiple of $4$ and set
\[
n_1= \frac n4,\quad n_2= \frac n2,\quad n_3 = n.
\]
So, we define the multistep extrapolation
  \begin{equation}\label{eq:4.32plus1}
\bar \Psi_{_{RR,3}}^n(T) = \frac 13\, \bar \Psi^{n/4}(T)  -2\,\bar \Psi^{n/2}(T) +\frac 83\, \bar \Psi^{n}(T).
 \end{equation}
An easy computation shows that  $\bar \Psi_{_{RR,3}}^n(T) $ satisfies
\[
\bar \Psi_{_{RR,3}}^n(T) -\Psi(T)= o(n^{-2}). 
\]

\subsection{Example} \label{sec:example}

$\rhd$ {{\em Testing the convergence rate.}} Here we test on an example whether our guess on the rate of convergence is true. 
We take the fractional Riccati equation~~\eqref{eq:Riccatialpha} with $\alpha=0.64$ ($i.e.$ Hurst coefficient $H=0.14$), with null initial condition $ I_{1 - \alpha} \psi(0)=0$ and  ({\em real valued}) parameters (these parameters are in line with the one in Section \ref{sec:numerics}, which were calibrated in \cite{Euch2017} using a real data set; here we humped the parameter $\lambda$ in order to gain convexity in the quadratic term of the Riccati, which is more challenging from the numerical point of view. 
)
\[
\lambda= 0.045,\quad\mu= -64.938,\quad \nu= 44\,850.
\]
We focus on short maturities, supposed to be numerically more demanding, and we set $T= 1/252$ (corresponding to one trading day).
Numerically speaking, one may proceed as follows (with the notations introduced for the Richardson-Romberg extrapolation): if the rate~~\eqref{eq:DevtErreur} is  true, it becomes clear that the sequence 
\begin{equation}\label{eq:ErrExpTest}
n\longmapsto \bar c^n_1 = 2n\big( \bar \psi^{n}(T) -\bar \psi^{2n}(T)\big) 
\end{equation}
converges to $ c_1$  as $n\to +\infty$.

 In Table~\ref{table:cn}  we display the values of the constant coefficient $c_1=\bar c^{n}_1$ appearing in~~\eqref{eq:DevtErreur} for different values of $n$ ranging in $[8,131 072]$.
\begin{table} 
\begin{tabular}{|c c|c c|c c|} 
\hline$[n:2n]$& $\bar c^n_1$ & $[n:2n]$& $\bar c^n_1$& $[n:2n]$& $\bar c^n_1$
  \\ \hline\hline
 [8--16]& 123.8478 & [256--512]& 103.8532 &    [8\,192--16\,384] &101.1105
    \\ \hline
  [16--32]& 118.0696 &  [512--1\,024]& 102.9883&    [ 16\,384--32\,768]  &100.9268
    \\ \hline
  [32--64]& 113.9827 &   [1\,024--2\,048]&102.5672&  [32\,768--65\,536 ]&     100.8097
    \\ \hline
  [64--128]& 108.7523 &     [2\,048--4\,096]& 101.8524&  [65\,536--131\, 072 ]&100.6396 
    \\ \hline
  [128--256]&104.8304& [4\,096--8\,192]&101.3989   & [131\,072--262144 ] & 100.5652
  \\ \hline
\end{tabular} \caption{\em Values of $\bar c_1^n$ in Formula~~\eqref{eq:DevtErreur} for $n$ ranging from $8$ to $2^{17}=131 072$. The last value, obtained for $n=n_{\max} = 131\,072$, is taken as reference value.}
\label{table:cn} 
\end{table}
 We take $c_1\simeq \bar c^{n_{\max}}_1=100.5652=c_1^{ref}$ in Formula~~\eqref{eq:DevtErreur} as a reference value, obtained with an accuracy level $\varepsilon_0=0.005$ and $n_{\max} = 2^{17} = 131\,072$.
%
%
 \normalsize
%
%
%
%
%
%

\noindent  
\begin{figure}[h!] 
\begin{tabular}{cc}
\hskip -1.5cm  \includegraphics[width=10cm]{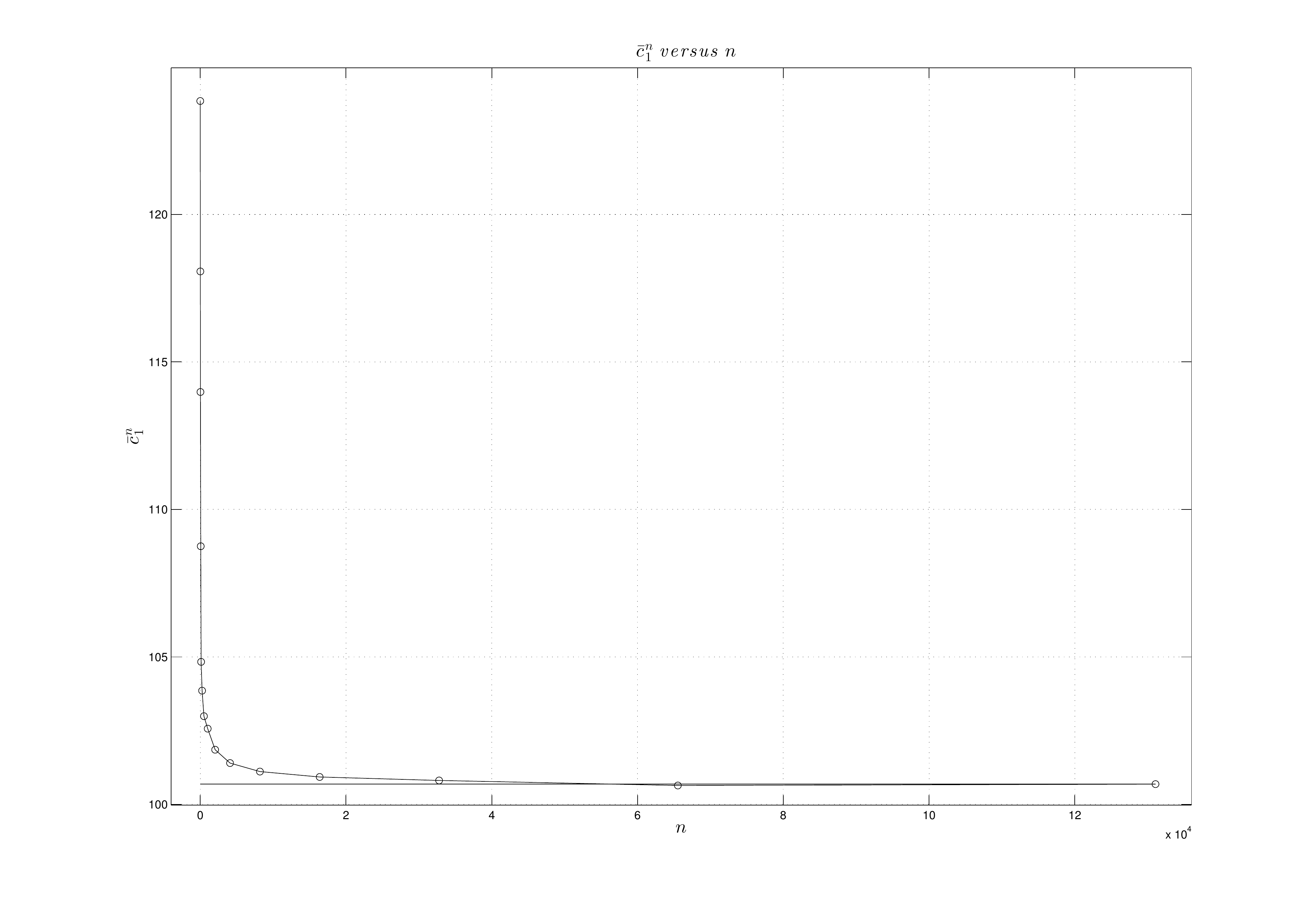}&\hskip -1.25cm \includegraphics[width=10cm, height =6.75cm]{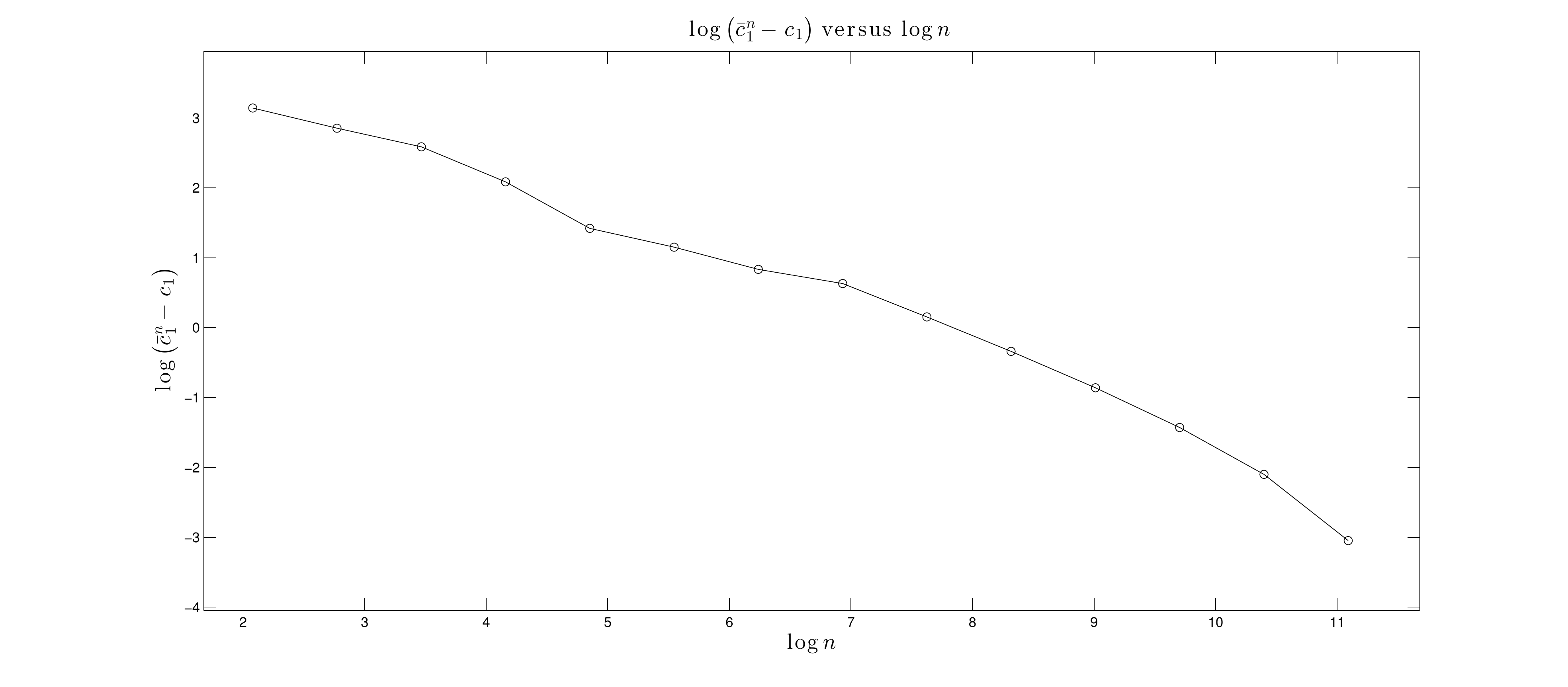}
 \end{tabular}
 \caption{\em Empirical illustration/confirmation of the first order expansion. Left: plot $\bar c_1^n$ versus $n$. Right:  $\log$-$\log$ plot of $\bar c_1^n-c_1^{ref}$ versus $n$.}
\label{fig:ErrExpOrd1}
 \end{figure}

Figure~\ref{fig:ErrExpOrd1} (left hand side) strongly supports the existence of a first order expansion, whereas Figure~\ref{fig:ErrExpOrd1} (right hand side)  is quite consistent with the existence of a second order expansion. 
Unfortunately,  Figure~\ref{fig:ErrExpOrd1} (right hand side) suggests that the higher order expansion does exist but rather of the form 
\[
 \bar \Psi^n(T)-\Psi(T) = \frac{c_1}{n}+\frac{c_2}{n^{2-\beta}}+ o\big(n^{\beta -2}\big)\quad \mbox{with } \beta \!\in (0,1)
\]
where $\beta$  seems to depend on the value of the parameters $\lambda$, $\mu$ and $\nu$.
When we consider the regression coefficient  in the $\log$-$\log$ plot of $\bar c_1^n-c_1^{ref}$ versus $n$, we find  a slope of $-0.5999\simeq -0.6$. Hence, we do not find an expansion of the form $c_2n^{-2}+o(n^{-2})$, since $\log(\bar c_1^n- \bar c^{n_{\max}}_1)\simeq -0.6. \log n + b$ which suggests a second term $c_2n^{-1.52}+ o(n^{-1.52})$. The numerical test seems to suggest that the exponent of this second term of the expansion varies as $n$ increases. In order to avoid the calibration of this additional parameter, we set $\beta = 0$, which  numerically yields by far the most stable and accurate results (see also  Section~\ref{sec:numerics}).
Hence, in all our numerical tests we use the ``regular'' extrapolation formula of order three \eqref{eq:4.32plus1}. 

$\rhd$ {{\em Testing the efficiency of the Richardson-Romberg meta-schemes.}  Let us now turn our attention to the convergence of the hybrid scheme. To evaluate  its efficiency we proceed as follows: we 
%
artificially introduce the hybrid scheme by setting
\[
k_0 =  \Big\lfloor \frac{ 0.5\times n\vartheta \widehat  R_{\psi}}{T}\Big\rfloor
\]
which differs from the original $k_0$ by the $0.5$ factor. As a consequence, the series expansion is only used approximately between $0$ and $0.5.\vartheta \widehat R_{\psi}$ and the time discretization scheme is used between $t_{k_0}$ and $T$. This artificial switch is applied to each of the three scales $T/n$, $T/(2n)$ and $T/(4n)$ of  the extrapolated meta-scheme implemented.

As a benchmark for the triplet we use the value obtained via the fractional power series expansion with $r_0= 200$, namely 
\begin{align*}
\Psi(T)&=  \Big(\psi(T), I_1(\psi)(T), I_{1-\alpha}(\psi)(T)\Big)\\
 & = (165.7590,   21.2394,    0.4409).
\end{align*}

In the numerical test reproduced in Figure~\ref{fig:ErrRR2et3} below,  the convergence of both the $RR2$ and the ``regular'' (namely, associated to error expansion in the scale $n^{-k}$, $k=0,1,2$) $RR3$ Richardson-Romberg meta-schemes (which were introduced, respectively, in Sections~\ref{sec:step3} and~\ref{sec:step4}) is tested, by plotting $\bar \Psi_{_{RR,2}}^n$ and $\bar \Psi_{_{RR,3}}^n$ as functions of $n$ and of the computational time.    
 

Although we could not exhibit through numerical experiments the existence of a third order expansion of the error at rate $c_2 n^{-2}$~--~corresponding to $\beta =0$~--~as mentioned above, it turns out that the weights  resulting from this value of $\beta$, i.e. the ``regular'' extrapolation formula~~\eqref{eq:4.32plus1} in the  third order Richardson-Romberg extrapolation (RR3) yields by far the most stable and accurate results (see also  Section~\ref{sec:numerics}).
\vspace{-1.5 cm}
\noindent  
\begin{figure}[h!] 
\begin{tabular}{cc}
\includegraphics[width=8cm, height=9cm]{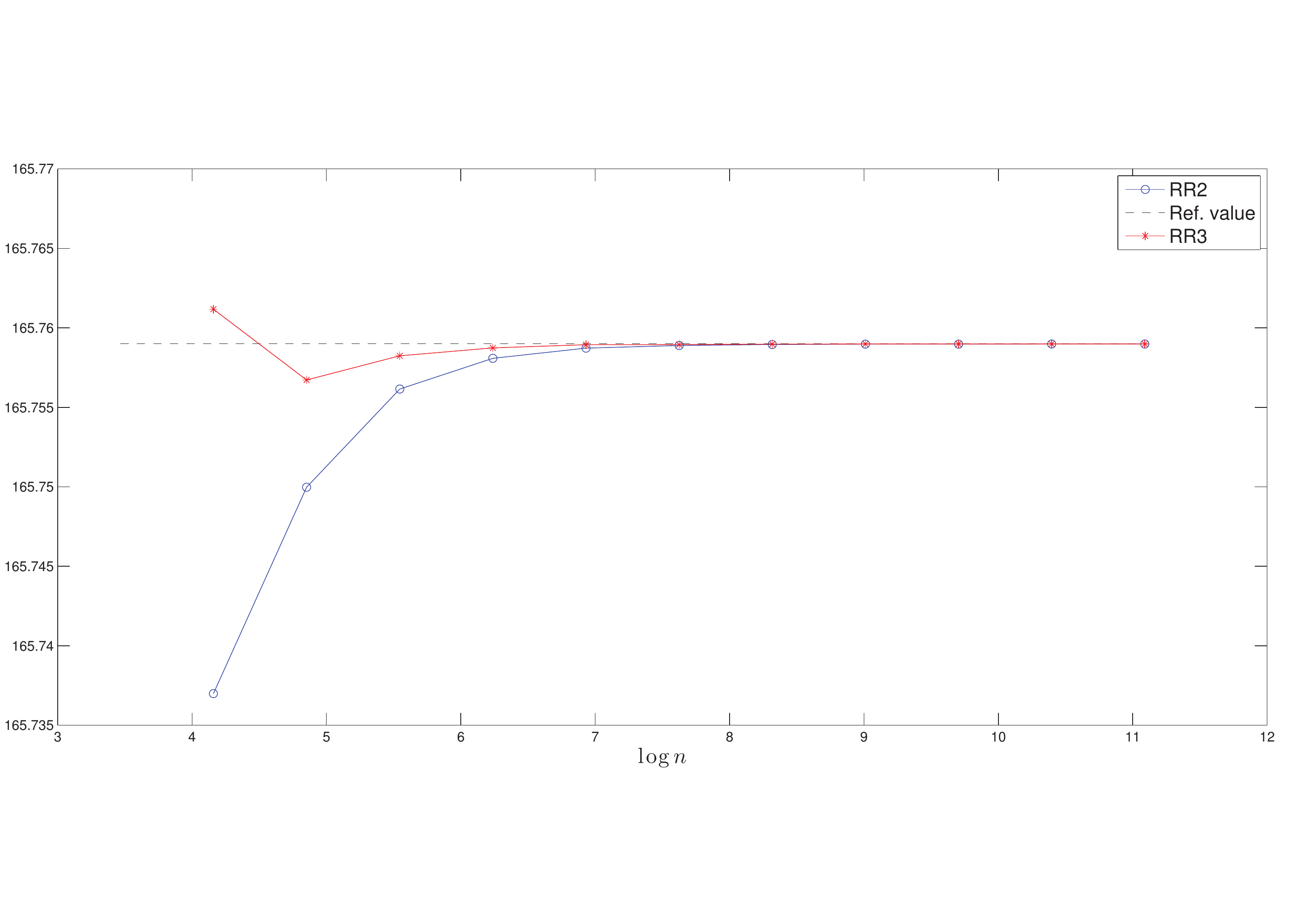}&\hskip -0.25cm \includegraphics[width=8cm, height=9cm]{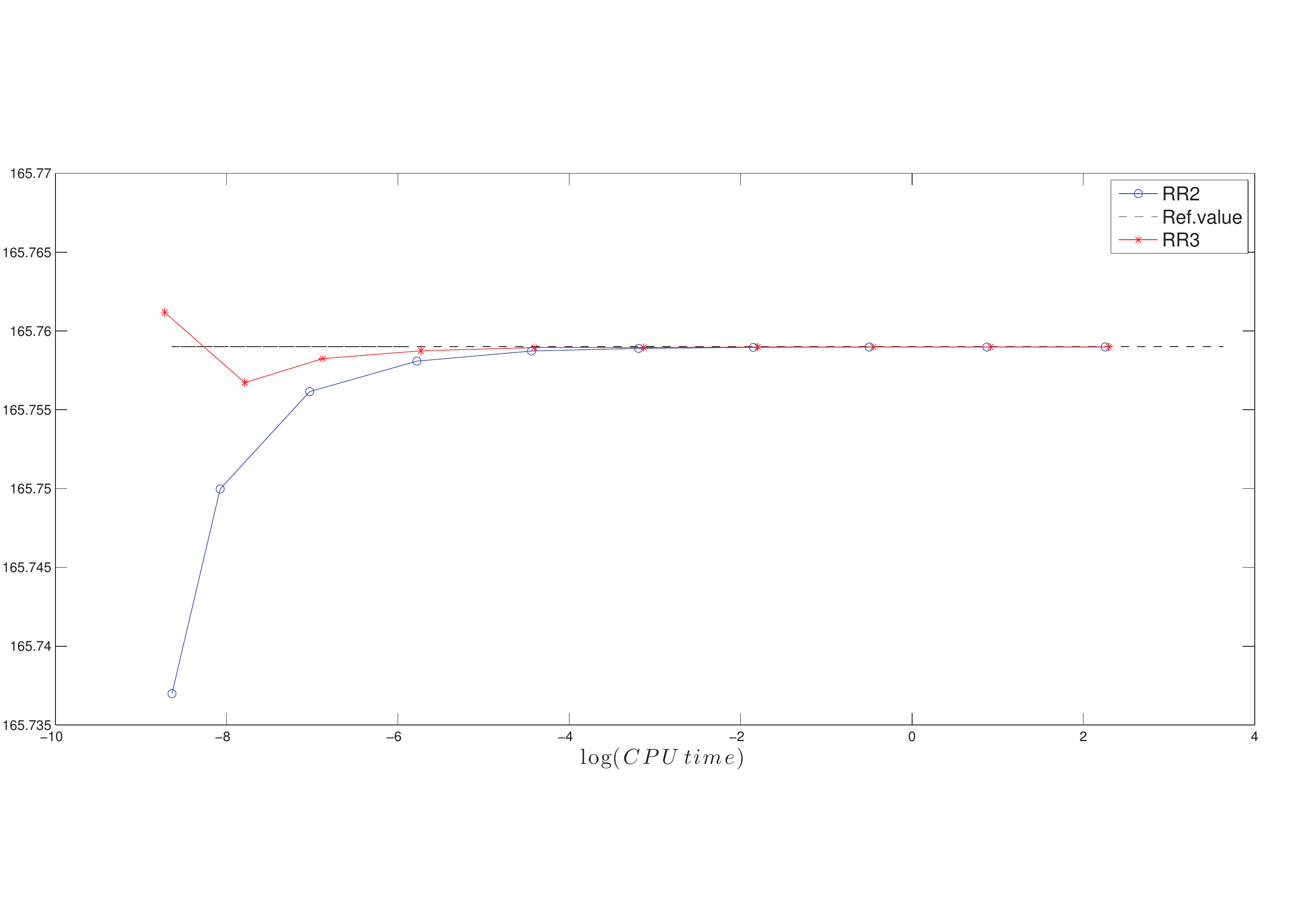}
 \end{tabular}
 \caption{\em Function $\psi$: $RR2$ (--$o$--) and $RR3$ (--$*$--) meta-schemes versus $\log n$ (left) and $\log(CPU\, time)$ (right), $n=2^5,\ldots,2^{16}$.}
\label{fig:ErrRR2et3}
 \end{figure}

\begin{remark} 
When $t<\vartheta \widehat R_{\psi}$, the computation is performed exclusively via the series expansion and is  extremely faster than that involving the meta-schemes.
\end{remark}

We end this subsection by highlighting that in all numerical experiments that follow we will adopt, in our hybrid algorithm, the ``regular" third order extrapolation meta-scheme \eqref{eq:4.32plus1}.

\medskip
 We provide in the short section that follows further technical specifications of the hybrid scheme.

 \subsection[Practitioner's Corner]{Practitioner's Corner}\label{practitioner}

\begin{itemize}
\item[$\rhd$]  {\em Complexity reduction}.  To significantly reduce the complexity of the computations, one may note that if $\phi $ is solution to 
$ (\mathcal E^0_{1,\,\mu,\nu/\lambda})$ then $ \psi = \frac{\phi}{\lambda}$ is solution to $ (\mathcal E^0_{\lambda,\mu,\nu})$. Solving directly $(\mathcal E^0_{1,\,\mu,\,\nu/\lambda})$ allows to cancel all multiplications by $\lambda$ throughout  the numerical scheme, at the price of  a unique division by $\lambda$ at the end.
\item[$\rhd$] {\em Calibration of $\vartheta$}.  Numerical experiments carried out with $T = 1/252$ (one trading day) suggest that, at least for
small values of $t$, the optimal threshold $\vartheta = \vartheta(h)$ is a function of the time discretization step $h$.
The coarser $h$ is,  the lower $\vartheta$ should be to minimize the execution time. It seems that for $h = T/16$,
$\vartheta(h) \simeq 0.5$ whereas for $h = T/4096$, $\vartheta(h) \simeq 0.925$. This leads us to set, when $h=\frac Tn$, 
\[
\vartheta(h) = \min\left(0.65 +0.3\left(\frac{n - 32}{4064}\right)^{0.25},0.925\right), \quad 32\le n \le 4096.
\]
\end{itemize}


%
%
%

\section{Numerical performance in the rough Heston model}\label{sec:numerics}

In this section we test the performance of our results in solving the homogeneous fractional Riccati equation $(\mathcal E^0_{\lambda,\mu,\nu})$ in  ~\eqref{eq:Riccatialpha} that we recall for the reader's convenience:
$$
D^\alpha \psi(t) = \lambda \psi^2(t) +  \mu \psi(t)  + \nu , \quad I_{1-\alpha}\psi(0)=0.
$$
We apply our methodology to the fractional Riccati equation arising in the Rough Heston pricing model considered in~\cite{Euch2017}. 

We test the series approximation and the hybrid procedure we introduced in two steps: first  we consider the power series approximation to the Riccati solution, which reveals to be extremely fast. Then we consider the hybrid method, i.e., the series combined with the Richardson-Romberg extrapolation method, in order to allow for horizons beyond the convergence radius of the power series representation. Remarkably, we find that also the hybrid method is very fast and stable when compared with the only competitor in the literature, represented by the Adams method. 
All the test have been obtained in C++ using a standard laptop endowed with a 3.4GHz processor.

\subsection{Testing the Fractional Power  Series Approximation}\label{sec:testFrac}

Let us consider the fractional power series expansion representation ~\eqref{eq:DSEalpha} $ \psi_{\lambda,\mu,\nu}(t)= \sum_{k\geq0}a_k t^{k\alpha}$.

We would like to use the calibrated parameters in  \cite{Euch2017}, namely we would like to work in the following setting (for clarity, we add a subscript $R$ in the parameters below)
\begin{equation}\label{eq:frHestonR}
\left\{
\begin{array}{rcl}
d S_t & = & S_t \sqrt{V_t} dW_t, \quad S_0=s_0 \in \mathbb R_+ \\
V_t &=& V_0 + \frac{1}{\Gamma (\alpha)}  \int_0^t (t-s)^{\alpha-1} \gamma_R (\theta_R - V_s) ds + \frac{1}{\Gamma(\alpha)}  \int_0^t (t-s)^{\alpha-1} \gamma_R \nu_R \sqrt{ V_s} dB_s , V_0 \in \mathbb R_+,
\end{array}
\right.
\end{equation}
where $\gamma_R, \theta_R, \nu_R$ are positive real numbers. We take the following values
\begin{equation}\label{eq:param}
\alpha=0.62 \quad \gamma_R = 0.1 \quad \rho = -0.681 \quad V_0=0.0392 \quad \nu_R=0.331 \quad \theta_R=0.3156,
\end{equation}
where $\rho $ denotes the correlation between the two Brownian motions $W$ and $B$.


The  fractional Riccati equation to be solved in the setting of \cite{Euch2017} is:
\begin{equation}\label{fracRiccatiR}
D^\alpha \psi(t) = \frac{1}{2} (u_1^2 - u_1) + \gamma_R (u_1 \rho \nu_R - 1) \psi(t) + \frac{{(\gamma_R \nu_R)}^2}{2} \psi^2(t), \quad I_{1-\alpha}\psi(0)=0,
\end{equation}
so that the correspondence for the fractional Riccati coefficients is as follows:
\medskip

\begin{center}
	\begin{tabular}{|c c c| c |}
		\hline
		Riccati ~\eqref{eq:Riccatialpha} & \cite{Euch2017} & Our eq. ~\eqref{fracRiccati0} & Value\\
		\hline
		& & & \\
		$\lambda$ & $\displaystyle \frac{{(\gamma_R \nu_R)}^2}{2}$ & $\displaystyle \frac{{(\eta \zeta)}^2}{2}$ & $0.000547805$ \\
		& & & \\
		$\mu$ & $\gamma_R (u_1 \rho \nu_R - 1)$ & $\eta (u_1 \rho \zeta - 1)$ & $0.1(- u_1 \ 0.225411 - 1)$ \\
		& & & \\
		$\nu$ & $\frac{1}{2} (u_1^2 - u_1)$ & $\frac{1}{2} (u_1^2 - u_1)$ & $\frac{1}{2} (u_1^2 - u_1)$ \\
		& & & \\
		\hline
	\end{tabular}
\end{center}
\medskip

Of course, the parameters for the fractional Riccati  will depend on the frequency $u_1 \in \mathbb C$ of the Fourier-Laplace transform. 
To provide more insight into the convergence domain of the series solution studied in Section \ref{subsec:convR}, we show in Table \ref{ValuesTau*} the value $\tau_{*}$ introduced in Equation \eqref{eq:lowerbound} for the specified parameters and for different values of $u_1$ (recall that this bound is conservative, in that it is not optimal).
\begin{center}
	\begin{table}[h!]
\begin{center}
		\begin{tabular}{|c| c c c c c c |}
			\hline
			$u_1$ &$0.5$ & $5$  & $10$  & $50$ & $100$ & $500$ \\
			\hline
			$\tau_*$ & $21.0481$  & $5.6586$  & $2.3846$ & $0.2201$ & $0.0739$ & $0.0056$ \\
			\hline
		\end{tabular}  \caption{Values of the convergence radius $\tau_*$ for different values of $u_1$.} \label{ValuesTau*} 
\end{center}
	\end{table} 
\end{center}

In order to give an idea of the computational time required by the fractional power series solution, we set $ \Re e (u_1)=100$ (in the pricing procedure we will  consider several values for $ \Re e (u_1)$, as we shall integrate over this parameter in order to compute the inverse Fourier transform) and we fix the dampening factor $\Im m(u_1) = -2.1 $, in line with the Fourier approach of \cite{article_Carr99}. 
Finally, we  focus on short term maturities (namely $T\leq 1$ month), in order to test the pure series expansion. 

In Figure
~\ref{fig:compTimePsi} we plot the computational time required to obtain the solution  $\psi_{\lambda,\mu,\nu}(T)$ when $T \in \{ 1 \ day, 2 \ days, 3 \ days, 4 \ days, 1 \ week, 2 \ weeks, 3 \ weeks, 1 \ month \}$. We stress the fact that this time is expressed in microseconds, i.e., in $10^{-6}$ seconds. The corresponding convergence radius is equal to $0.198036$, which is beyond $T$.  Namely, with the notation of Section~\ref{sec:hybrid}, $T < \vartheta R^{r_{\max}}_{\psi}$ and so $\psi_{\lambda,\mu,\nu}(T)$  is approximated via the fractional power series ~\eqref{eq:Psibis} truncated into sums from $r=1$ to $r=250$. 
Figure~\ref{fig:compTimePsi} confirms that the power series representation is extremely fast and the computational time is basically constant with respect to small maturities $T$.  

\begin{figure}[h!] 
		\includegraphics[scale=0.5]{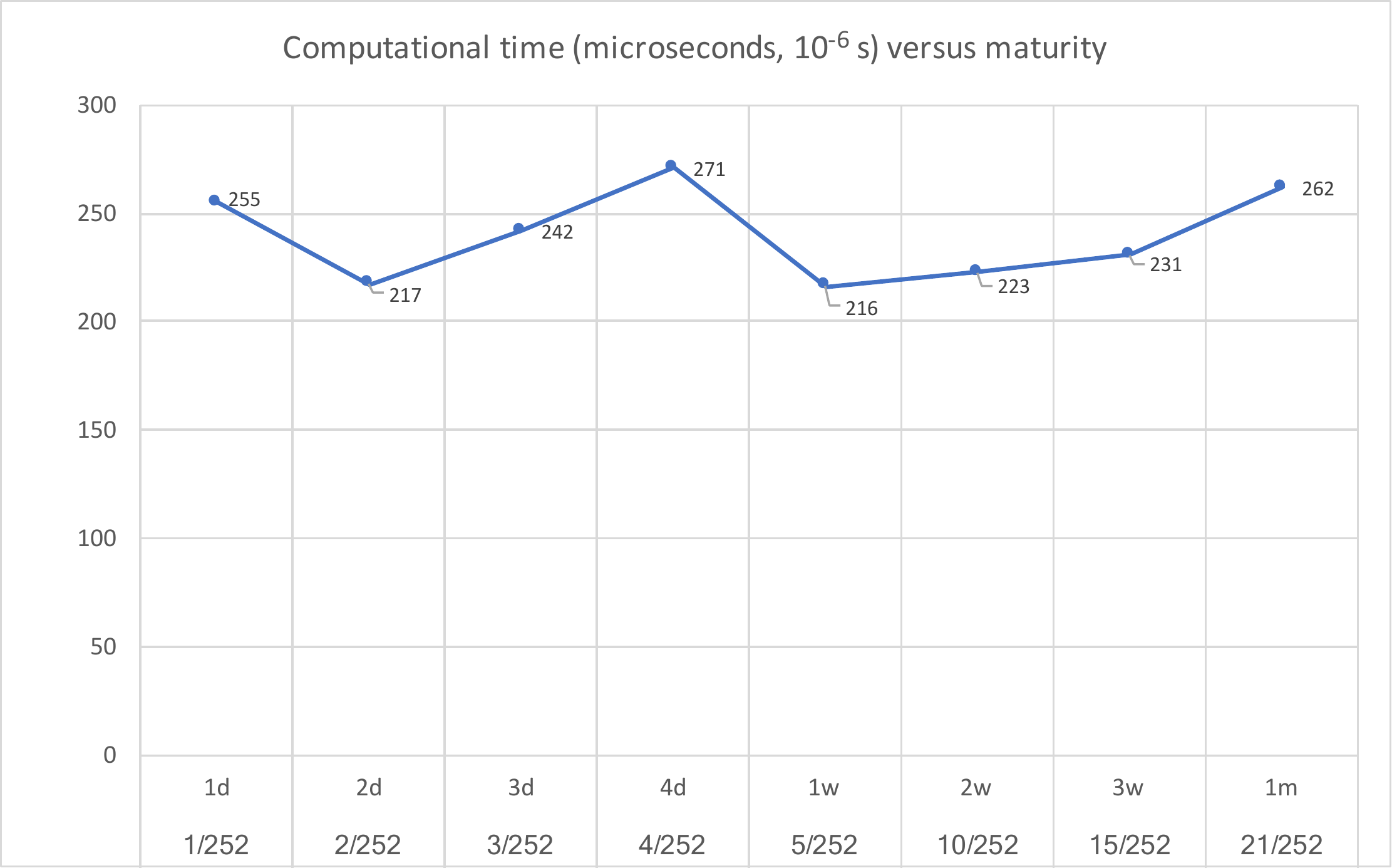} 
	\caption{\em Computational time, in microseconds, required to obtain $\psi_{\lambda,\mu,\nu}(T)$ varying as a function of $T$. Here $\lambda = 0.000547805; \mu=0.1(- u_1 \ 0.225411 - 1); \nu=\frac{1}{2} (u_1^2 - u_1)$ with  $\Re e (u_1)=100, \Im m(u_1) = -2.1$. }
\label{fig:compTimePsi}
\end{figure}

\subsection{The Hybrid Algorithm at Work}

We now test the performance of our hybrid method presented in Section~\ref{sec:hybrid} when applied to the option pricing problem.   We consider a book of European Call options on $S$, in a Heston-like stochastic volatility framework, where the volatility exhibits a rough behaviour (the so-called fractional Heston model) as in \cite{Euch2017}, with maturities ranging from $1$ day till $2$ years and strikes in the interval $[80\% ; 120\%]$ in the moneyness.  
The model parameters are those specified in Section~\ref{sec:testFrac}, i.e., the same parameters calibrated by \cite{Euch2017}.
In order to obtain the prices we use the Carr-Madan inversion technique, namely the one presented in \cite[Section 3]{article_Carr99}, with dampening factor $\alpha_{CM} = 1.1$.  In the numerical inversion of the Fourier transform in Equation ~\eqref{charF}, we integrate with respect to the real part of the frequency parameter  $u_1$. Here, we consider $u_1$ varying as follows (the choice of $250$ turns out to be a good tradeoff between the stability of the results and the computational time)
$$
 \Re e (u_1) \in \{ 0.1 , 0.2, \dots,249.9, 250 \} \quad \textrm{and} \quad  \Im m(u_1) = -2.1.
$$

That is, we compute $2500$  times a Fourier transform.
Notice that, in general, the maturities of the options can go beyond the convergence radius of the power series representation. In other words, when computing the triplet 
\[
\Psi(T)=  \Big(\psi(T), I_1(\psi)(T), I_{1-\alpha}(\psi)(T)\Big)
\]
one can be forced to switch to the hybrid method in order to get the solution of the fractional Riccati, since $T > \vartheta R^{r_{\max}}_{\psi}$. In this case, we set $n=128$ (recall Section~\ref{sec:hybrid}, Step 2, Phase I), which turns out to be a value leading to stable results, in the sense that for larger values of $n$ the prices do not change.


{\small
\begin{table}[h!]
	\begin{center}
			\rotatebox{90}{
			\begin{tabular}{|c|c|c|c|c|c|c|c|}
			\hline
	\multicolumn{1}{|c|}{} &		\multicolumn{1}{|c|}{} & \multicolumn{1}{|c|}{$1$ day } & \multicolumn{1}{|c|}{$1$ week} & \multicolumn{1}{|c|}{$1$ month } & \multicolumn{1}{|c|}{$6$ months} & \multicolumn{1}{|c|}{$1$ year} & \multicolumn{1}{|c|}{$2$ years} \\
			\hline
	\multicolumn{1}{|c|}{Strike} &		\multicolumn{1}{|c|}{} & \multicolumn{1}{|c|}{price(CT)} & \multicolumn{1}{|c|}{price(CT)} & \multicolumn{1}{|c|}{price(CT)} & \multicolumn{1}{|c|}{price(CT)} & \multicolumn{1}{|c|}{price(CT)} & \multicolumn{1}{|c|}{price(CT)} \\
			\hline
\multicolumn{1}{|c|}{$80\%$} &			\multicolumn{1}{|l|}{Hybrid}&$ 20(180) $&    $ 20(154)                              $&$ 20.0005(410) $&$ 20.6112(672) $&$ 22.1366(553) $&$ 25.4301(667) $\\
\multicolumn{1}{|c|}{} &			\multicolumn{1}{|l|}{Adams}&$ 19.9988(108)**   $&$  20(108)                        $&$  20.0005(108) $&$  20.6095(16095)$&$ 22.1331(37768)$&$  25.4258(244381) $\\
			\hline
\multicolumn{1}{|c|}{$85\%$} &			\multicolumn{1}{|l|}{Hybrid}& $ 15(170) $&$ 15(164)                        $&$ 15.0108(421) $&$ 16.2807(689) $&$ 18.3529(553) $&$ 22.2091(596) $\\
\multicolumn{1}{|c|}{} &			\multicolumn{1}{|l|}{Adams}&$  15(107)           $&$ 15(108)                         $&$  15.0108(9282) $&$  16.2783(16658) $&$  18.3486(39557)*$&$ 22.2044(244056) $\\
			\hline
\multicolumn{1}{|c|}{$90\%$} &			\multicolumn{1}{|l|}{Hybrid}& $  10(155) $&$ 10.0002(155)              $&$ 10.1144(423) $&$ 12.3948(671) $&$ 14.9672(549) $&$ 19.2898(594) $\\
\multicolumn{1}{|c|}{} &			\multicolumn{1}{|l|}{Adams}&$  9.9985(109)**    $&$ 10.0002(109)                $&$  10.1141(9598) $&$  12.3924(38252) $&$ 14.9623(38236)$&$  19.2847(248059) $\\
			\hline
\multicolumn{1}{|c|}{$95\%$} &			\multicolumn{1}{|l|}{Hybrid}&$5.0003(156) $&$5.0491(156)            $&$ 5.6723(410) $&$ 9.0636(676) $&$ 12.0059(557) $&$ 16.6676(594) $\\
\multicolumn{1}{|c|}{} &			\multicolumn{1}{|l|}{Adams}&$  4.9967(112)**      $&$ 5.0489(2178)             $&$  5.6712(2359) $&$  9.0609(37826) $&$  12.0006(38672)$&$  16.6622(248779) $\\
			\hline
\multicolumn{1}{|c|}{$100\%$} &			\multicolumn{1}{|l|}{Hybrid}& $ 0.5012(154)$&$1.1347(156)         $&$ 2.3896(416) $&$6.3497(672)  $&$ 9.4737(548) $&$ 14.3319(596) $\\
\multicolumn{1}{|c|}{} &			\multicolumn{1}{|l|}{Adams}&$ 0.5071(108)* $&$  1.1339(108)                    $&$  2.3885(112) $&$  6.3461(16076) $&$ 9.4683(38909)$&$  14.3264(249641) $\\
			\hline
\multicolumn{1}{|c|}{$105\%$} &			\multicolumn{1}{|l|}{Hybrid}& $6.39$E-05$(159)$&$ 0.04113(125) $&$ 0.6809(416) $&$4.2550(666)  $&$ 7.3563(551) $&$ 12.2676(601) $\\
\multicolumn{1}{|c|}{} &			\multicolumn{1}{|l|}{Adams}&$ 6.33$E-04$(121)* $&$  0.04118 (108)           $&$  0.6804(2141) $&$  4.2516(16539)$&$  7.3510(38256)$&$ 12.2621(258262)$\\
			\hline
\multicolumn{1}{|c|}{$110\%$} &			\multicolumn{1}{|l|}{Hybrid}& $ 2.37$E-05$(163) $&$ 9.22$E-05$(125) $&$ 0.1205(414) $&$2.7251(681)  $&$ 5.6234(562) $&$ 10.4562(599) $\\
\multicolumn{1}{|c|}{} &			\multicolumn{1}{|l|}{Adams}&$ 2.06$E-03$(109)* $&$  9.28$E-05$(71309)*       $&$  0.01205(2259) $&$ 2.7223(16670) $&$  5.6195(73084)$&$ 10.4508(248696) $\\
			\hline
\multicolumn{1}{|c|}{$115\%$} &			\multicolumn{1}{|l|}{Hybrid}& $ 1.51$E-05$(155) $&$ 6.82$E-09$(155) $&$ 0.0124(410) $&$1.6680(683)  $&$ 4.2343(582) $&$ 8.8773(593) $\\
\multicolumn{1}{|c|}{} &			\multicolumn{1}{|l|}{Adams}&$  1.49$E-03$(107)* $&$ 8.85$E-07$(70685)*       $&$ 0.0125(2194) $&$  1.6658(16302) $&$  4.2307(73086)$&$ 8.8720(248391) $\\
			\hline
\multicolumn{1}{|c|}{$120\%$} &			\multicolumn{1}{|l|}{Hybrid}& $ 1.14$E-05$(196) $&$ 1.80$E-13$(156) $&$ 7.32$E-04$(414) $&$0.9761(667)  $&$ 3.1424(577) $&$ 7.5093(593) $\\
\multicolumn{1}{|c|}{} &			\multicolumn{1}{|l|}{Adams}&$ 1.08$E-03$(110)* $&$  3.70$E-7$(108)*              $&$  7.37$E-04$(2316) $&$  0.9674(16269) $&$  3.1393(73184)$&$ 7.5042(245999) $\\
			\hline
		\end{tabular}
}	\end{center}
	\caption{{\small Call option pricing with the Hybrid and Adams methods. The parameters are as in  \cite{Euch2017}. Maturities range from 1 day till 2 years, strikes range between $80\%-120\%$ of the moneyness. The computational time $(CT)$ is in milliseconds (i.e. $10^{-3}$s). For the hybrid method we fix $n=128$, while for the Adams method the discretization step is chosen in order to satisfy $\vert \sigma_{IMP}(hybrid)-\sigma_{IMP}(Adams)\vert \leq 10^{-2}$. When this is not possible for any discretization step, we put $(*)$ besides the values, while $(**)$ are associated to prices that lead to arbitrage opportunities. }}
\label{prezzi}
\end{table}
}

In  Table~\ref{prezzi} we display the prices together with the  computational times (in milliseconds) obtained by our hybrid method for the entire book of options. 
A quick look at the Table~\ref{prezzi} shows that our method is extremely fast. In fact, all prices are computed in less than one second. Moreover, one can easily verify that using a larger value for $n$ (which here is set to $n=128$) does not change the prices. Therefore, our hybrid method is also very stable and can be used as the benchmark. 
\medskip

Now we  compare the performance of our hybrid algorithm  with the (only) other competitor present in the literature, namely the fractional Adams method,  a numerical discretization procedure described e.g. in~\cite{Euch2017} (see their Section 5.1). As for any discretization algorithm, also for the Adams method one should select the discretization step, and according to this choice the corresponding price can be different. Of course, the smaller the time step in the discretization procedure, the longer will take the pricing procedure.  As we consider  our hybrid method as the benchmark, now we look for the discretization step for the Adams method that leads to prices that are close enough to ours, according to a given tolerance. Here, the error  is measured in terms of the difference of the corresponding implied volatilities associated to the prices generated by the two methods. We fix for example a maximal difference  of $1\%$. Notice that this maximal error is very large, as  for the calibration of the classic Heston model one can typically reach an average for the  RESNORM (sum of the squares of the differences) around E-05$=10^{-5}$. 
\medskip

First of all, let us focus our attention on the very short term maturities in Table~\ref{prezzi}, namely 1 day and 1 week. It turns out that, apart from a couple of situations for 1 week,  the Adams method is also very fast. However, we notice that Adams prices can lead to some arbitrage opportunities. In fact, for example, for the maturity of 1 day and strike $80\%$, Adams method leads to a price smaller than the intrinsic value of the Call (recall that the interest rate here is set to be zero). We put $(**)$ in the table when this situation occurs.  Also, one can check that in many cases it is not possible to find the discretization step for the Adams method in order to generate a price within  the tolerance. We put $(*)$ for the cases where the error is greater than the tolerance, regardless the choice of the discretization step (we pushed the discretization till 150 steps without observing any relevant change). The reason for this phenomenon  is quite intuitive: if the maturity is very short, adding  discretization steps in the procedure does not necessarily produce different prices because the process has not enough time to move. On the other hand, our hybrid method takes benefit of the fractional power series expansion that works extremely well mostly for very short maturities. In conclusion, our method is very fast, stable and accurate in the short maturities when compared to the Adams method.
\medskip

Now let us consider the other maturities till 2 years. Here, in order to get prices close to our benchmark, we are forced to choose an {\em ad hoc} discretization grid for the Adams method, including a number of steps ranging from 10 to 150, depending on the particular  maturity and strike. As a consequence, the corresponding  computational time turns out to be much higher than ours, to the point that for 2 years the computation of prices for the  Adams method require about 4 minutes with 150 discretization steps, while our hybrid algorithm still takes less than one second. In conclusion, we can state that our hybrid algorithm dominates the Adams method for all maturities. 

\medskip

We end this subsection by  reproducing the analogue of Figure 5.2 in~\cite{Euch2017}, namely the term structure of the at-the-money skew, that is the derivative of the implied volatility with respect to the log-strike for at-the-money Calls.  
\begin{figure}[h!]
		\includegraphics[scale=0.5]{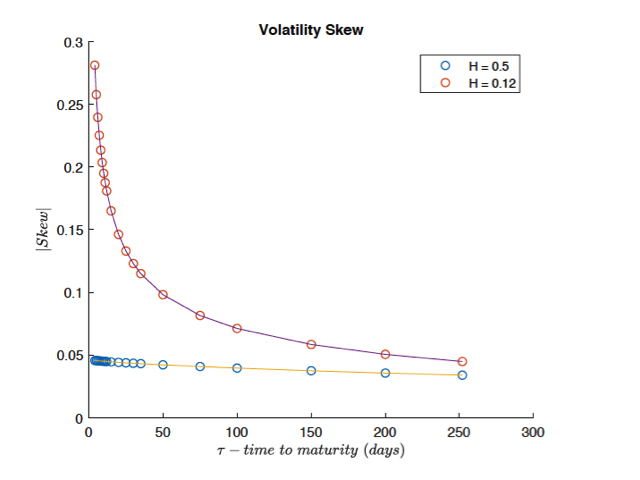} 
	\caption{\em At-the-money skew as a function of the maturity, ranging from 1 day till 1 year, for $\alpha =1$ (corresponding to the classic Heston model with $H=0,5$) and $\alpha =0,62$ (Rough  Heston model with $H=0,12$). The other parameters of the model are as in  \cite{Euch2017}. }
 \label{fig:skew}
\end{figure}
Figure~\ref{fig:skew} confirms the results in~\cite{Euch2017}:  in particular, for the Rough Heston model we see that the skew explodes for short maturities, while it remains quite flat for the classic Heston model, which is well known to be unable to reproduce the slope of the skew for short-maturity options.

\section{The Non-Homogeneous Fractional Riccati Equation}

In this section, we consider the more general case where the fractional Riccati equation has a non zero starting point.
The fractional Riccati ODE arising in finance and associated to the characteristic function of the log-asset price is very special insofar it starts from zero. Recently, \cite{Pulido2017} extended the results of ~\cite{Euch2017} to the case where the volatility is a Volterra process, which includes the (classic and) fractional Heston model for some particular choice of the kernel. Such extension leads to a fractional Riccati ODE but with a general (non-zero) initial condition. This case  is mathematically more challenging and requires additional care. 
Nevertheless, in this section we prove that it is still possible to provide bounds for the convergence domain of the corresponding power series expansion, at the additional cost of extending the  implementation of the algorithm to a  doubly indexed series, in the spirit of ~\cite{jacquier2014}.

For the reader's convenience, first of all we recall the  general Riccati equation (see~~\eqref{eq:Riccatialpha})
\begin{equation*}
(\mathcal E^{u,v}_{\lambda, \mu,\nu})\;\equiv\; D^{\alpha} \psi = \lambda  \psi^2 +\mu \psi+\nu \quad \mbox{and } \left\{\begin{array}{ll}\quad I^{1 - \alpha} \psi(0)=u&\mbox{if }\alpha\!\in (0,1]\hskip 1.5cm\\
I^{1 - \alpha} \psi(0)=u,\;  I^{2 - \alpha}\psi(0)= v &\mbox{ if }\; \alpha\!\in (1,2],\end{array}\right.
\end{equation*}
where $ \lambda$, $\mu$, $\nu$ and $u$, $v$  are {\em complex} numbers. Our aim is to find a solution as a fractional power series as we did in  Section~\ref{sec:Riccati0}  in the case $\alpha\!\in (0,1]$, $u=0$ and $\alpha\!\in (1,2]$, $u=v=0$. This leads us to deal with  the integral form~~\eqref{eq:Riccating2} of this equation. However, the solution will be in this more general setting a doubly index series based on the fractional monomial  functions $t^{\alpha k-\ell}$. Of course, we will again take advantage of the fact that for every $\alpha>0$ and every $r>-\alpha$, 
\[
I_{\alpha}(t^r)= \frac{\Gamma(r+1)}{\Gamma(r+\alpha+1)}t^{r+\alpha}, \; t\ge 0.
\]
 
We now state the result of this section, which represents the main mathematical contribution of the paper. 

\begin{theorem}\label{thm:non-homog} Equation $(\mathcal E^{u,v}_{\lambda, \mu,\nu})$ admits a  solution expandable on a non-trivial interval $[0, R_{\psi})$,  $R_{\psi}>0$, as follows 
\begin{align}\label{eq:defpsi}
\psi(t) =&\sum_{\ell\ge 0}\psi_{\ell}(t) = \sum_{\ell\ge 0} \sum_{k\ge  k(\ell)}a_{k,\ell} t^{\alpha k-\ell}, \quad t\!\in (0, R_{\psi}), 
\end{align}
where the coefficients $a_{k,\ell}\!\in \mathbb{C}$ and, for every $\ell\ge 0$,  $k(\ell)= \min\big\{k\ge 1 : a_{k,\ell} \neq 0 \big\}$ denotes the valuation of $(a_{k,\ell})_{k\ge 1}$. Moreover, the above doubly indexed series is normally convergent on any compact interval of $(0, R_{\psi})$.

\smallskip
\noindent $(a)$ \textbf{Case $\alpha\!\in (\frac 12,1)$:} we have $k(\ell) = (2\ell-1)\vee 1$ if $\nu$, $u\neq 0$, $k(0)=+\infty$ if $\nu=0$ and $k(\ell)=+\infty$ for $\ell\ge 1$ if $u=0$. In particular, one always has $k(\ell) \ge  (2\ell-1)\vee 1$.

The coefficients $a_{k,\ell}$ are recursively defined as follows:  $a_{1,0}= \frac{\nu}{\Gamma(\alpha+1)}$, $a_{1,1}= \frac{u}{\Gamma(\alpha)}$, and, for every $\ell\ge 0$ and every $k\ge k(\ell)\vee 2$,
\begin{equation}\label{eq:akell2}
a_{k,\ell}= \big(\mu a_{k-1,\ell} + \lambda a^{*2}_{k-1,\ell}\big)\frac{\Gamma\big(\alpha(k-1)-\ell+1\big)}{\Gamma(\alpha k-\ell+1)}
\end{equation}
where
\begin{equation}\label{eq:Convol2}
a_{k,\ell}^{*2} = \sum_{\begin{smallmatrix}k_1+k_2= k, \,k_i\geq k(\ell_i)\\ \ell_1+\ell_2=\ell,\, \ell_i\ge 0, i=1, 2\end{smallmatrix}} a_{k_1, \ell_1}a_{k_2,\ell_2}
\end{equation}
Note that  $a_{1,\ell}= 0$, $\ell\ge 2$, and $a_{1,\ell}^{*2}= 0$, $\ell\ge0$.

\smallskip
\noindent $(b)$ \textbf{Case $\alpha\!\in (1,2]$:} we have  $k(\ell)\ge 1+ (\ell-1)\mbox{\bf 1}_{\{\ell\ge 3\}}$ with equality if $\nu$, $u$, $v\neq 0$. 
The coefficients $a_{k,\ell}$ still satisfy~~\eqref{eq:akell2} and~~\eqref{eq:Convol2}  with $a_{1,0}= \frac{\nu}{\Gamma(\alpha+1)}$, $a_{1,1}= \frac{u}{\Gamma(\alpha)}$, $a_{1,2}= \frac{v}{\Gamma(\alpha-1)}$   (and $a_{1,\ell}= 0$, $\ell\ge 3$).
\end{theorem}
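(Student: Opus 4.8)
The plan is to mimic the verification strategy already used for Theorem~\ref{thm:Radius}, but now carried out in two indices. I would start from the integral form~\eqref{eq:Riccating2} of $(\mathcal E^{u,v}_{\la,\mu,\nu})$ and substitute the ansatz~\eqref{eq:defpsi}. Two elementary facts drive the bookkeeping: first, by~\eqref{eq:Ialphatr}, the operator $I_{\alpha}$ maps a monomial $t^{\alpha k-\ell}$ to a multiple of $t^{\alpha(k+1)-\ell}$, so it shifts $k\mapsto k+1$ while leaving $\ell$ fixed; second, the termwise square of~\eqref{eq:defpsi} produces the genuinely two-dimensional Cauchy convolution~\eqref{eq:Convol2}, since the product of $t^{\alpha k_1-\ell_1}$ and $t^{\alpha k_2-\ell_2}$ equals $t^{\alpha(k_1+k_2)-(\ell_1+\ell_2)}$. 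Matching the coefficient of each monomial $t^{\alpha k-\ell}$ on both sides then forces the recursion~\eqref{eq:akell2}, while the three inhomogeneous contributions $\frac{\nu}{\Gamma(\alpha+1)}t^{\alpha}$, $\frac{u}{\Gamma(\alpha)}t^{\alpha-1}$ and $\frac{v}{\Gamma(\alpha-1)}t^{\alpha-2}$ supply the base coefficients $a_{1,0}$, $a_{1,1}$, $a_{1,2}$. One should note here that, except for rational $\alpha$, distinct pairs $(k,\ell)$ yield distinct exponents, so the identification is unambiguous; the rational case only amounts to regrouping finitely many coinciding terms, which does not affect the estimates below.

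The second step is to pin down the valuations $k(\ell)$ by induction on $\ell$. The key structural observation is that the slice $\psi_{\ell}$ receives input only from the term $\mu a_{k-1,\ell}$, which merely propagates an already nonzero coefficient to larger $k$ within the same level $\ell$, and from the quadratic term $\lambda a^{*2}_{k-1,\ell}$, which is the only genuine source once the direct initial data are exhausted. Hence, for the levels carrying no direct datum, the first nonzero coefficient of $(a_{k,\ell})_k$ sits one $k$-step above the lowest nonzero coefficient of the convolution, so that $k(\ell)\ge 1+\min_{\ell_1+\ell_2=\ell}\big(k(\ell_1)+k(\ell_2)\big)$, with equality under the genericity $\la\,\nu\,u\,v\neq 0$. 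Solving this min-plus recursion against the base values yields $k(\ell)=(2\ell-1)\vee 1$ in case~$(a)$ and the stated bound in case~$(b)$. These lower bounds are not mere bookkeeping: they guarantee that the exponent $\alpha k-\ell$ of every surviving monomial stays $>-1$, so that each application of $I_{\alpha}$ is legitimate and each $\psi_{\ell}\in L^1_{\mathrm{loc}}$ near $0$, and this is exactly where the standing hypotheses $\alpha>\tfrac12$ (resp. $\alpha>1$) enter, since they force $\alpha k-\ell$ to be bounded below by a positive multiple of $k$ on the admissible range $k\ge k(\ell)$.

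The third and decisive step, which I expect to be the main obstacle, is the convergence estimate. I would control the Gamma ratio in~\eqref{eq:akell2} by the asymptotics $\frac{\Gamma(\alpha(k-1)-\ell+1)}{\Gamma(\alpha k-\ell+1)}\asymp(\alpha k-\ell)^{-\alpha}$, which is uniformly of order $k^{-\alpha}$ precisely because the valuation bounds keep $\alpha k-\ell\ge c\,k$ for some $c>0$. Equipped with this, I would run a propagation-of-bounds (majorant) argument in the spirit of the proof of Theorem~\ref{thm:Radius}$(a)$, but now two-dimensional: one seeks constants $C,\rho,\sigma>0$ so that a candidate bound of the form $|a_{k,\ell}|\le C\,\rho^{k}\sigma^{\ell}\,k^{\alpha-1}$ (the factor $k^{\alpha-1}$ being the two-index analogue of the profile appearing in Theorem~\ref{thm:Radius}, and $\sigma^{\ell}$ absorbing the growth of the valuation) is stable under~\eqref{eq:akell2}--\eqref{eq:Convol2}. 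The delicate point is the inductive verification for the quadratic term: one must show that the two-dimensional convolution of the majorant, after multiplication by the $k^{-\alpha}$ factor coming from the Gamma ratio, is reabsorbed into the same bound; after summation in both indices this reduces to a discrete Beta-type inequality $\sum_{j} j^{\alpha-1}(k-j)^{\alpha-1}\lesssim k^{2\alpha-1}B(\alpha,\alpha)$, exactly analogous to the one controlling $a^{*2}_k$ in the single-index case, together with a summable geometric control in $\ell$ furnished by $\sigma$. Once such a bound holds, the double series $\sum_{k,\ell}|a_{k,\ell}|\,t^{\alpha k-\ell}$ is dominated on every compact subinterval of $(0,R_{\psi})$ by a convergent numerical series, which gives the asserted normal convergence and, in particular, $R_{\psi}>0$.
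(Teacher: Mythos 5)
Your first two steps (identification of the recursion \eqref{eq:akell2}--\eqref{eq:Convol2} by substitution into the integral form, and the min-plus recursion for the valuations) do follow the paper's Step~1, and your remark about exponent collisions for rational $\alpha$ is a legitimate point the paper passes over silently. The genuine gap is in your third step: the majorant $|a_{k,\ell}|\le C\rho^{k}\sigma^{\ell}k^{\alpha-1}$ cannot be propagated through \eqref{eq:akell2}. Indeed, with a profile that is flat in $\ell$ (apart from the geometric factor), every one of the $\ell+1$ decompositions $\ell_1+\ell_2=\ell$ contributes a full Beta-type sum in $k$, so
\[
|a^{*2}_{k-1,\ell}|\;\le\; C^{2}\rho^{k-1}\sigma^{\ell}\sum_{\ell_1+\ell_2=\ell}\;\sum_{k_1+k_2=k-1}k_1^{\alpha-1}k_2^{\alpha-1}\;\le\; C^{2}\rho^{k-1}\sigma^{\ell}\,(\ell+1)\,B(\alpha,\alpha)\,(k-1)^{2\alpha-1}.
\]
Multiplying by the Gamma ratio, of order $k^{-\alpha}$, and comparing with the target $C\rho^{k}\sigma^{\ell}k^{\alpha-1}$, the factor $\sigma^{\ell}$ and all powers of $k$ cancel (since $k^{-\alpha}(k-1)^{2\alpha-1}\asymp k^{\alpha-1}$), and you are left needing $\frac{|\lambda|\,C\,B(\alpha,\alpha)}{\rho}\,(\ell+1)\le \mathrm{const}$ uniformly in $\ell$, which is impossible: the geometric weight buys nothing inside the induction because it appears identically on both sides. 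Nor is the $(\ell+1)$ an artifact of crude bounding: for $k\gg\ell$ each admissible decomposition genuinely contributes a term of order $k^{2\alpha-1}$, so the growth in $\ell$ is intrinsic and must be compensated by the shape of the majorant itself.

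This is exactly what the paper's bound \eqref{eq:ubuneq0} is engineered for, and it is the key idea your plan lacks: one propagates $|a_{k,\ell}|\le C\theta^{\ell}\rho^{k}\big(k-k(\ell)+1\big)^{\frac{\alpha}{2}-1}(\ell\vee1)^{\frac{\alpha}{2}-1}$ with the \emph{halved} exponent $\frac{\alpha}{2}-1$ in \emph{both} indices, so that the sum over $\ell_1+\ell_2=\ell$ is itself a Beta-type sum (Lemma~\ref{lem:f}) producing $(\ell\vee1)^{\alpha-1}$, and the Gamma-ratio decay is then \emph{split} between the two indices via Kershaw's inequality \eqref{eq:Kershawb} combined with $(a+b)^{-\alpha}\le(2ab)^{-\alpha/2}$ (inequality \eqref{eq:elemab}), yielding a factor $\big(k-k(\ell)+1\big)^{-\alpha/2}(\ell\vee1)^{-\alpha/2}$ that restores the profile in both variables and closes the induction. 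Two further ingredients are also missing from your outline: (i) the within-level recursion never \emph{creates} the seed coefficients $a_{k(\ell),\ell}$, so a separate induction along the valuation diagonal (the paper's Step~2, using the closed recursion for $a_{2\ell-1,\ell}$) is required before the within-level propagation can start; and (ii) even with the coefficient bounds, normal convergence of $\sum_{\ell}\psi_{\ell}$ is not automatic, because the admissible $\rho_*$ depends on $\theta$ through the level-one datum ($C\rho\theta\ge|u|/\Gamma(\alpha)$ forces $\rho_*(\theta)=O(\theta^{-1/2})$), and one must prove the existence of $\theta$ with $\theta\,\rho_*(\theta)^{1/\alpha}<1$ --- the paper's $\theta_*$ construction, which is a second place where $\alpha>\tfrac12$ enters; your phrase ``summable geometric control furnished by $\sigma$'' presumes this compatibility rather than establishing it. Finally, a caution on your valuation step: solving the min-plus recursion in case $(b)$ with base values $k(0)=k(1)=k(2)=1$ gives, through the split $(\ell_1,\ell_2)=(2,2)$, $k(4)\le 1+2k(2)=3$ rather than $4$, so the stated valuation in case $(b)$ does not in fact follow from the recursion as you assert (the paper's own treatment of this point is equally loose).
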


The constructive proof of this result is divided into several steps and is provided in full details in Appendix~\ref{sec:thm5.1}. 
A full numerical illustration of the general case is beyond the scope of our paper. Here, we just mention that the computation of the solution through the doubly index fractional power series representation turns out to be still extremely fast as in the previous single-index case. In practice very few $\ell$-layers are needed to compute $\psi(t)$ for a standard accuracy, say $10^{-3}$ or $10^{-4}$. Also, a  hybrid algorithm based on  Richardson-Romberg extrapolation  
can also be  devised in order to allow for maturities longer than the convergence radius of the above double index series. We skip the details for sake of brevity. 

\section[Conclusion]{Conclusion}
In this paper, motivated by recent advances in Mathematical Finance, we solved a family of fractional Riccati differential equations, with constant (and possibly complex) coefficients, whose solution is the main ingredient of the characteristic function of the log-spot price in the fractional Heston stochastic volatility model.   
We first considered the case of a zero initial condition  and we then analyzed the case of a general starting value, which is closely related to the theory of affine Volterra processes.

The solution to the fractional Riccati equation with null initial condition takes the form of power series, whose coefficients satisfy a convolution equation. We showed that this solution has a positive convergence domain, which is typically finite. In order to allow for maturities that are longer than the convergence domain of the fractional power series representation, we provide a hybrid algorithm based on a Richardson-Romberg extrapolation method that reveals to be very powerful. Our theoretical results naturally suggest an efficient numerical procedure to price vanilla options in the Rough Heston model that is quite encouraging in terms of computational performance, when compared with the usual benchmark, represented by the Adams method. 

In the case of a non-null initial condition, the solution takes the form of a double indexed series and, working with additional technical care, we provided error bounds for its convergence domain.

In the last years, Deep Neural Networks(DNN)-based algorithms have been massively adopted in finance in order to solve long-standing problems like robust calibration and hedging of large portfolios, see e.g. \cite{STONE2019}, \cite{HMT2019}, \cite{BS2019} and references therein. Thanks to their flexibility and extremely fast execution time, DNN represent nowadays the standard in the banking industry. However, any deep neural network requires a learning phase, which typically takes a lot of time and needs the pricing technology to feed the network.  
Our methodology, in view of pricing, is useful in order to speed up the learning phase. In this sense, our results should not be seen in competition with DNN since, on the contrary, they are a useful and efficient ingredient to feed the DNN with. So our work is a prerequisite for any DNN based algorithm.

 \begin{appendix}
  \section[Toolbox: Riemann Sums, Convexity  and Kershaw's Inequalities]{Toolbox: Riemann Sums, Convexity  and Kershaw's Inequalities}
  
\subsection{Riemann sums, convexity}
We will extensively  need the following elementary lemma on Riemann sums.

\begin{lemma} \label{lem:f}
Let $f: (0,1)\rightarrow \mathbb{R}_+$ be a function, non increasing on $(0,1/2]$ and symmetric, i.e. such that $f(1-x)=f(x), \,x\!\in (0,1)$, hence  convex. Assume that
$\int_0^1f(u)du<+\infty$. Then,  $\inf_{x\in(0,1)}f(x)= f(\frac 12)$ and:
\begin{align}
\label{fboundeven}
\left(1-\frac 1k \right) f\left(\frac 12\right)
\leq \frac{1}{k}\sum_{\ell =1}^{k-1}f\left(\frac{\ell}{k}\right)\leq\int_0^1\hskip-0.25cmf(u)du -\mbox{\bf 1}_{\{k\, \tiny even\}} \int_{1/2}^{1/2+1/k}\hskip-0.5cm f(u)du-\mbox{\bf 1}_{\{k\,\tiny odd\}}  \int_{(k-1)/2k}^{(k+1)/2k}\hskip-0.5cm f(u)du. 
\end{align}

In particular,  it follows that, for every $k\ge 2$, 
\begin{align*}
\frac 12 f \left(\frac 12\right) \le \frac 1k \sum_{\ell =1}^{k-1}f\left(\frac{\ell}{k}\right)\leq\int_0^1f(u)du\quad \mbox{ and }\quad \lim_{k} \frac{1}{k}\sum_{\ell =1}^{k-1}f\left(\frac{\ell}{k}\right) = \int_0^1f(u)du
\end{align*}
so that 
\begin{equation}\label{eq:upperf}
 \sup_{k\ge 1} \frac{1}{k}\sum_{\ell =1}^{k-1}f\left(\frac{\ell}{k}\right) = \int_0^1f(u)du\quad\mbox{ and }\quad 
 \min_{k\ge 2}\frac 1k \sum_{\ell =1}^{k-1}f\!\left(\frac{\ell}{k}\right)=  \frac 12\, f\!\left(\frac 12\right).
\end{equation}
\end{lemma}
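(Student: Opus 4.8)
The plan is to prove the two bounds of~\eqref{fboundeven} separately and then read off the ``in particular'' consequences by sending $k\to\infty$ and by inspecting the boundary case $k=2$. First I record the unimodality of $f$: being non-increasing on $(0,1/2]$ and symmetric about $1/2$, it is non-decreasing on $[1/2,1)$, so its infimum is attained at the midpoint, $\inf_{x\in(0,1)}f(x)=f(\tfrac12)$. The lower bound in~\eqref{fboundeven} is then immediate, since each of the $k-1$ summands satisfies $f(\ell/k)\ge f(\tfrac12)$, whence $\frac1k\sum_{\ell=1}^{k-1}f(\ell/k)\ge \frac{k-1}{k}f(\tfrac12)=(1-\tfrac1k)f(\tfrac12)$.

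The heart of the argument is the upper bound, which I would obtain by an interval-comparison estimate exploiting monotonicity on each half. For an index $\ell$ with $\ell/k\le\tfrac12$ the interval $[(\ell-1)/k,\ell/k]$ lies in the non-increasing region, so $f(u)\ge f(\ell/k)$ there and $\frac1k f(\ell/k)\le\int_{(\ell-1)/k}^{\ell/k}f$; symmetrically, for $\ell/k\ge\tfrac12$ the interval $[\ell/k,(\ell+1)/k]$ lies in the non-decreasing region and $\frac1k f(\ell/k)\le\int_{\ell/k}^{(\ell+1)/k}f$. Summing the left-hand indices against their left neighbouring intervals and the right-hand indices against their right neighbouring intervals produces pairwise disjoint subintervals of $[0,1]$ whose union is all of $[0,1]$ except one central slot of length $1/k$; telescoping gives $\frac1k\sum_{\ell}f(\ell/k)\le\int_0^1 f-\int_J f$ for that missing slot $J$. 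The bookkeeping splits by the parity of $k$: when $k=2m$ the node $m/k=\tfrac12$ is assigned to exactly one half and the leftover slot is the length-$1/k$ interval adjacent to $\tfrac12$, whose integral, by the symmetry $f(1-x)=f(x)$, equals $\int_{1/2}^{1/2+1/k}f$; when $k=2m+1$ the two central nodes straddle $\tfrac12$ and the leftover slot is $[m/k,(m+1)/k]=[(k-1)/2k,(k+1)/2k]$. This is exactly the correction displayed in~\eqref{fboundeven}. I expect this parity bookkeeping, together with the correct use of symmetry to rewrite the leftover integral, to be the main obstacle; everything else is routine.

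Finally I would deduce the remaining assertions. Since $f\ge0$, the subtracted integral is non-negative, so $\frac1k\sum_{\ell}f(\ell/k)\le\int_0^1 f$ for every $k$ (the case $k=1$ being the empty sum); combined with $(1-\tfrac1k)f(\tfrac12)\ge\tfrac12 f(\tfrac12)$ for $k\ge2$ this yields the two-sided bound $\tfrac12 f(\tfrac12)\le\frac1k\sum_{\ell}f(\ell/k)\le\int_0^1 f$. For the limit I would squeeze: the correction integral is over an interval of length $1/k$ of the integrable $f$, hence tends to $0$, giving $\limsup_k\frac1k\sum_{\ell}f(\ell/k)\le\int_0^1 f$; a matching liminf follows from the reversed comparison, in which on each monotone half $\frac1k f(\ell/k)$ dominates the integral over the opposite neighbouring interval, the discrepancy being integrals over $O(1/k)$-length intervals near $0$, $1$ and $\tfrac12$ that vanish by absolute continuity of $u\mapsto\int_0^u f$. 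Thus the limit equals $\int_0^1 f$. The supremum in~\eqref{eq:upperf} is then $\int_0^1 f$, being an upper bound for all $k$ and also the value of the limit, while the minimum over $k\ge2$ equals $\tfrac12 f(\tfrac12)$: the bound $(1-\tfrac1k)f(\tfrac12)\ge\tfrac12 f(\tfrac12)$ holds for all $k\ge2$ and is attained at $k=2$, where the single-term sum is precisely $\tfrac12 f(\tfrac12)$.
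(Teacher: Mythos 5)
Your proof is correct and follows essentially the same route as the paper's: the upper bound is obtained by exactly the same device of comparing each node value $f(\ell/k)$ with the integral of $f$ over the adjacent interval of length $1/k$ on its monotone side, with the same parity bookkeeping identifying the missing central slot $[\tfrac12,\tfrac12+\tfrac1k]$ (for $k$ even) or $[\tfrac{k-1}{2k},\tfrac{k+1}{2k}]$ (for $k$ odd). The only differences are minor: you derive the lower bound from the pointwise inequality $f(\ell/k)\ge f(\tfrac12)$ (unimodality) where the paper invokes Jensen's inequality for the convex $f$ (both yield $(1-\tfrac1k)f(\tfrac12)$), and you make explicit the reversed-comparison argument giving $\liminf_k \frac1k\sum_{\ell}f(\ell/k)\ge \int_0^1 f(u)\,du$, a step the paper's proof leaves implicit.
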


\begin{proof} 
The lower bound is a straightforward consequence of the convexity of  the function $f$ since for every $k\ge 2$, 
\begin{align*}
\frac{1}{k}\sum_{\ell =1}^{k-1}f\left(\frac{\ell}{k}\right)&=\left(1-\frac 1k\right)\frac{1}{k-1}\sum_{\ell =1}^{k-1}f\left(\frac{\ell}{k}\right)\\
&\geq \left(1-\frac 1k\right)f\left(\frac{1}{k-1} \frac{k(k-1)}{2k}\right) =\left(1-\frac 1k\right)f\left(\frac 12\right)
\geq \frac 12f\left(\frac 12\right)
 \end{align*}
with equality if and only if $k=2$. Let us consider now  the upper bounds. 

\medskip
\noindent $\rhd$ {\em Case~$k$ even}. We consider separately the  half sums from $1$ to $\frac{k}{2}$ and from $  \frac{k}{2}+1$ to $k-1$.

For $\ell \in \{ 1,\ldots,\frac{k}{2}\}$, $f\left( \frac{\ell}{k}\right)\leq f(u)$ for $u\in \left( \frac{\ell -1}{k},\frac{\ell}{k}\right)$, while  
for $\ell \in \{ 1,\ldots,\frac{k}{2}-1\}$, $f\left( \frac{\ell}{k}\right)\geq f(u)$ for $u\in \left( \frac{\ell }{k},\frac{\ell+1}{k}\right)$. 
Therefore 
\begin{align*}
\frac{1}{k}\sum_{\ell =1}^{\frac{k}{2}}f\left(\frac{\ell}{k}\right)&\leq \int_0^{1/2}f(u)du. 
 \end{align*}
  
On the other hand, for the second half sum, 
if $\ell \in \{ \frac{k}{2}+1,\ldots,k-1\}$, $f\left( \frac{\ell}{k}\right)\leq f(u)$ for $u\in \left( \frac{\ell }{k},\frac{\ell+1}{k}\right)$, while  
for $\ell \in \{ \frac{k}{2}+1,\ldots,k-1\}$, $f\left( \frac{\ell}{k}\right)\geq f(u)$ for $u\in \left( \frac{\ell-1 }{k},\frac{\ell}{k}\right)$.

Therefore 
\begin{align*}
\frac{1}{k}\sum_{\ell =\frac{k}{2}+1}^{k-1}f\left(\frac{\ell}{k}\right)&\leq \int_{1/2+1/k}^{1}f(u)du. 
 \end{align*}

Summing up both sums  yields ``even part'' of~~\eqref{fboundeven}.

\medskip
\noindent $\rhd$  {\em Case~$k$ odd}. One shows likewise that 
\begin{align*}
\frac{1}{k}\sum_{\ell =1}^{(k-1)/2}f\left(\frac{\ell}{k}\right)&\leq \int_0^{1/2-1/2k}f(u)du 
 \end{align*}
for the first half sum, while 
\begin{align*}
\frac{1}{k}\sum_{\ell =(k+1)/2}^{k-1}f\left(\frac{\ell}{k}\right)&\leq \int_{(k+1)/2k}^1f(u)du 
 \end{align*}
for the second one. Summing up gives the``odd part'' of~~\eqref{fboundeven}.
\end{proof} 

\subsection{\em Kershaw inequalities}
We also rely on these inequalities (see~\cite{KER}) controlling ``ratios of close terms'' of the Gamma function. For $x>0$, and every $s\!\in (0,1)$, 
\begin{align}\label{eq:Kershaw}
\left(x+\frac{s}{2}\right)^{1-s} < \frac{\Gamma (x+1)}{\Gamma (x+s)}< \left(x-\frac{1}{2}+\sqrt{s+\frac{1}{4}}\right)^{1-s}.
 \end{align}
For $s=1$ this double  inequality becomes an  equality.

\section[Proof of Theorem~3.1]{Proof of Theorem~\ref{thm:Radius}}\label{sec:thm3.1}

\subsection{Proof when  $\alpha\!\in (0,1]$}
\subsubsection{Lower Bound for the Radius by Upper-Bound Propagation (Claim~$(a)$)} \label{subsec:upperpropag0}
We first  want to  prove  by induction  the following upper-bound of the coefficients $a_k$, namely  
\begin{align}\label{eq:propaga1}
\forall\, k\geq  1, \quad \vert a_{k}\vert&\leq C k^{\alpha- 1}\rho^k
\end{align}
for some $C$ and $\rho >0$ (note that $ \alpha- 1\!\in (-1, 0]$). We assume the  $|a_1|  \le C\rho$ (with $a_1=  \frac{|\nu}{\Gamma(\alpha+1}$: this condition will be double-checked later) and we want to propagate this inequality by induction. Assume that~~\eqref{eq:propaga1} holds for some  $k\ge 1$. Plugging this bound in~~\eqref{eq:Eqak0} yields
\begin{align}\label{eq:Ineqak}
\vert a_{k+1}\vert&\leq \frac{\Gamma (\alpha k +1)}{\Gamma (\alpha k +\alpha +1 )} \big( \lambda \vert {a_k^*}^2\vert +\vert \mu\vert  \vert a_k\vert\big).
  \end{align}

As ${a_k^*}^2=\sum_{\ell =1}^{k-1}a_\ell a_{k-\ell}$ (see~~\eqref{eq:convoldef1}),  we have
\begin{align}
 \nonumber 
 \vert {a_k^*}^2\vert &\leq C^2 \rho^k \sum_{\ell =1}^{k-1}\ell^{\alpha- 1} (k-\ell)^{\alpha- 1} =  C^2 \rho^k k^{2\alpha- 2} \sum_{\ell =1}^{k-1}\left(\frac{\ell}{k}\right)^{\alpha- 1}\left( 1-\frac{\ell}{k}\right)^{\alpha- 1}.
\end{align}

Applying Inequality~~\eqref{eq:upperf}  from Lemma~\ref{lem:f} to the function $f_{\alpha}$ defined by  $f_{\alpha}(x) = x^{\alpha-1}(1-x)^{\alpha-1}$, $\alpha\!\in(0,1]$, yields for every $k\ge 1$, 

\begin{equation}\label{eq:upperconva_k}
 \vert {a_k^*}^2\vert 
                                     \le C^2 \rho^k k^{2\alpha-1} \int_{0}^{1}u^{\alpha-1} (1-u)^{\alpha-1} du
 = C^2 \rho^k k^{2\alpha-1} B(\alpha,\alpha)
\end{equation}
where $B(a,b)$ denotes the Beta function (note that $a^{*2}_1=0$).

\smallskip From the Kershaw  inequality~~\eqref{eq:Kershawb}, we obtain in particular that, for every $x>0$ and every $s\!\in (0,1)$, 
\begin{align}\label{eq:Kershawb}
 \frac{\Gamma (x+s)}{\Gamma (x+1)}< \left(x+\frac{s}{2}\right)^{s-1}.
 \end{align}

Now set $x=\alpha (k+1)$ and  $s=1-\alpha$. We get
\begin{align*}
 \frac{\Gamma (\alpha k +1)}{\Gamma (\alpha k +\alpha +1 )}&< \left(\alpha (k+1)+\frac{1-\alpha}{2}\right)^{-\alpha} = (k+1)^{-\alpha}\alpha^{-\alpha}\left(1+\frac{1-\alpha}{2\alpha (k+1)}\right)^{-\alpha}< (k+1)^{-\alpha}\alpha^{-\alpha}
                      \end{align*}
since $\left(1+
\frac{1-\alpha}{2\alpha (k+1)}\right)^{-\alpha}<1$. Plugging successively  this inequality and~~\eqref{eq:upperconva_k} into~~\eqref{eq:Ineqak}  yields for every $k\ge 1$, 
\begin{align}
\nonumber \vert a_{k+1}\vert&\leq \frac{\Gamma (\alpha k +1)}{\Gamma (\alpha k +\alpha +1 )} \big( |\lambda| \vert {a_k^*}^2\vert +\vert \mu\vert  \vert a_k\vert  \big) < (k+1)^{-\alpha}\alpha^{-\alpha}
   \big( |\lambda| \vert {a_k^*}^2\vert +\vert \mu\vert  \vert a_k\vert \big)\\
  \nonumber    &\leq  (k+1)^{-\alpha}\alpha^{-\alpha}
   \big( |\lambda| C^2 \rho^k k^{2\alpha-1} B(\alpha,\alpha) +C\vert \mu\vert  \rho^k k^{\alpha-1} \big)\\            
\label{eq:majorfonda}  &\leq  C\alpha^{-\alpha}  (k+1)^{-\alpha}\rho^{k+1} \frac{|\lambda| 
 C k^{2\alpha-1} B(\alpha,\alpha)  +\vert \mu\vert    k^{\alpha-1}}{\rho  }.        
 \end{align}

Keeping in mind that we want to get $|a_{k+1}|\le C(k+1)^{\alpha-1}\rho^{k+1}$, we rearrange the terms as follows
\begin{align}
\nonumber \vert a_{k+1}\vert 
 & \le   C  (k+1)^{\alpha-1}\rho^{k+1}\left(\frac{k}{k+1}\right)^{2\alpha-1} \alpha^{-\alpha}\frac{|\lambda| C B(\alpha,\alpha) + |\mu| k^{-\alpha}}{\rho}\\
\label{eq:akplusun}  &\le C  (k+1)^{\alpha-1}\rho^{k+1} 2^{(1-2\alpha)_+} \alpha^{-\alpha}\frac{|\lambda| C B(\alpha,\alpha) + |\mu|  }{\rho}
 \end{align}
 where we used that 
$ \left(\frac{k}{k+1}\right)^{2\alpha-1}\le 2^{(1-2\alpha)_+}$ and we recall the notation $x_+=\max\{x,0\}$.

\bigskip
 Finally, the propagation of Inequality~~\eqref{eq:propaga1} is satisfied for every $k\ge 1$ by  any couple $(C, \rho)$ satisfying
 \[
|a_1|=  \left|\frac{\nu}{\Gamma(\alpha +1)}\right| \le C\rho\quad \mbox{ and }\quad  2^{(1-2\alpha)_+}\alpha^{-\alpha}\big(|\lambda| C B(\alpha,\alpha) + |\mu|\big) \le \rho.
 \]
It is clear that that, the  lower $\rho$ is,  the higher  our lower bound for the convergence radius of the series will be. Consequently, we need to saturate both inequalities which leads to the system
\[
\rho= \frac{|\nu|}{\Gamma(\alpha +1)C} \quad \mbox{ and }\quad \rho = 2^{(1-2\alpha)_+} \alpha^{-\alpha}\Big( \frac{|\lambda||\nu|}{\Gamma(\alpha +1)\rho}  B(\alpha,\alpha) + |\mu| \Big)
\]
or, equivalently, using both identities $B(\alpha, \alpha)= \frac{\Gamma(\alpha)^2}{\Gamma(2\alpha)}$ and  $\Gamma(\alpha+1)= \alpha\Gamma(\alpha)$, 
\[
C= \frac{|\nu|}{\Gamma(\alpha +1)\rho} \quad \mbox{ and }\quad 2^{- (1-2\alpha)_+} \alpha^{\alpha}\rho^2-|\mu|\rho -\frac{|\lambda|   |\nu| \Gamma(\alpha) }{\alpha\Gamma(2\alpha )}=0.
\]
The positive  solution $\rho_*= \rho_*(\alpha, |\lambda|, \mu, \nu)$ of the above  quadratic equation in $\rho$ is given by 
\begin{equation}\label{eq:rho_*}
\rho_* = \frac{|\mu| + \sqrt{\mu^2 + 2^{2 - (1-2\alpha)_+}\frac{ \alpha^{\alpha-1} \Gamma(\alpha) }{\Gamma(2\alpha )}|\lambda|  |\nu|  }}{2^{1 - (1-2\alpha)_+}\alpha^{\alpha}}.
\end{equation}
Consequently, setting $C_* =  \frac{|\nu|}{\Gamma(\alpha +1)\rho_*}=  \frac{|\nu|}{\alpha\Gamma(\alpha )\rho_*}$, we finally find that  
\begin{equation}\label{eq:UpperBoundak}
\forall\, k\ge 1,\quad |a_k| \le C_*k^{\alpha-1}\rho_*^k
\end{equation}
so that   the convergence radius $R_{\psi} = \liminf_{k}|a_k|^{-\frac{1}{\alpha k}}$ of the function $\psi$  satisfies
\[
R_{\psi_{\lambda,\lambda,\nu}} \ge  \rho_*^{-\frac{1}{\alpha}}=    \frac{2^{\frac{1}{\alpha} -(\frac{1}{\alpha}-2)^+}\alpha}{\Big(|\mu| + \sqrt{\mu^2 + 2^{2-(1-2\alpha)^+} \frac{\alpha^{\alpha-1} \Gamma(\alpha) }{\Gamma(2\alpha )}|\lambda|  |\nu| }\Big)^{\frac{1}{\alpha}} }.
\]

\bigskip
\noindent {\bf Remarks.} $\bullet$ Note that, when $\lambda\neq 0$, one deduces from~~\eqref{eq:rho_*}
\[
\rho_*\ge 2^{(1-2\alpha)_+} \alpha^{-\alpha}\max\left(|\mu|, \alpha^{\frac{\alpha-1}{2}}\,2^{-(\frac 12-\alpha)^+} \Big(\frac{B(\alpha,\alpha) }{\Gamma(\alpha)}\Big)^{\frac 12}|\lambda|  |\nu| \right).
\]

\smallskip
\noindent  $\bullet$ A slight improvement of the theoretical lower bound is possible by imposing the constraints $|a_1|\leq C\rho$ and $|a_2|\leq C\rho^2 2^{\alpha-1}$ and using that $k^{-\alpha}\le 2^{-\alpha}$ when $k\ge 2$ in~~\eqref{eq:akplusun}.
  
\subsubsection{Upper-Bound for the Radius via Lower Bound  Propagation, $\la$, $\mu$, $\nu\!\in \R_+$, (Claims~$(b)$, $(c)$, $(d)$)}
In this subsection, we assume that the parameters $\lambda$, $\mu$, $\nu$ are real numbers. We will prove a comparison result between the case $\mu \ge 0$ and $\mu=0$. 

The case of $\mu \le 0$ can be reduced to the case $\mu\geq 0$ owing to the next Section~\ref{subsubsec:muneg}: we will see that the triplets   $(\lambda,\mu,\nu)$ ($\mu \ge 0$) and $(\lambda,-\mu,\nu)$ lead to solutions as fractional power series having the same convergence radius.

\begin{proposition}\label{prop:lowerb} Let $\alpha>0$. Let $(a_k)_{k\ge 0}$ and $(a^0_k)_{k\ge0}$ be solutions to $(A_{\lambda,\mu,\nu})$ and $(A_{\lambda,0,\nu})$ respectively, where $\lambda, \mu,\,  \nu$ are real numbers.

 \smallskip
\noindent $(a)$ For every $k\ge 1$, $a^0_{2k}  =0$ and $(a^0)^{*2}_{2k-1}=0$.
Moreover, the sequence defined  for every $k\ge 1$ by  $b_k=a^0_{2k-1}$ is solution of the recursive equation
\begin{equation}\label{eq:convb}
b_1= \frac{\nu}{\Gamma(\alpha+1)} \quad \mbox{ and  } \quad b_{k+1} = \lambda \frac{\Gamma(2\alpha k+1)}{\Gamma((2k+1)\alpha +1)}b^{*2}_{k+1},\; k\ge 1,
\end{equation}
where the squared convolution is  still defined by~~\eqref{eq:convoldef1} (the equation is consistent since  $b^{*2}_{k+1}$ only involves terms $b_\ell$, $\ell\le k$).

\smallskip
\noindent $(b)$ Assume $\alpha\!\in (0,2]$ and $\lambda$, $\mu$, $\nu \ge 0$. Then for every $k\ge 1$, $a_k\ge a^0_k\ge 0$,  so that $  R_{\psi_{\lambda,\mu,\nu}}\le  R_{\psi_{\lambda,0,\nu}}$.

\smallskip
\noindent $(c)$  Assume $\alpha\!\in (0,2]$. If $\la,\nu\ge 0$ and $\mu\le 0$, then (with obvious notations) $a^{(\la,\mu,\nu)}_k= (-1)^{k} a^{(\lambda,-\mu,\nu)}_k$, $k\ge 1$ so that $R_{\psi_{\lambda,\mu,\nu}}=   R_{\psi_{\lambda,-\mu,\nu}}$. Moreover if the non-negative sequence $a_k^{(\lambda,-\mu,\nu)}$   decreases for large enough $k$, then the expansion of $\psi_{\la,\mu,\nu}$  converges at $R_{\psi_{\lambda,-\mu,\nu}}$.
\end{proposition}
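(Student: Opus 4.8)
All three parts rest on the single scalar recursion $(A_{\lambda,\mu,\nu})$ of \eqref{eq:Eqak0}, which determines $(a_k)_{k\ge 1}$ uniquely from $a_1=\nu/\Gamma(\alpha+1)$; throughout I write $g_k=\frac{\Gamma(\alpha k+1)}{\Gamma(\alpha k+\alpha+1)}>0$ for the positive Gamma ratio, so that $a_{k+1}=(\lambda a_k^{*2}+\mu a_k)g_k$, with $a_k^{*2}$ as in \eqref{eq:convoldef1}. \emph{Part (a).} I would first prove by strong induction on $m$ that $a^0_{2m}=0$ for all $m\ge 0$. The base case is $a^0_0=0$; for the step, since $\mu=0$ one has $a^0_{2m}=\lambda\,(a^0)^{*2}_{2m-1}\,g_{2m-1}$, and in $(a^0)^{*2}_{2m-1}=\sum_{\ell=1}^{2m-2}a^0_\ell a^0_{2m-1-\ell}$ the two indices of each product sum to the odd number $2m-1$, hence exactly one is even and $<2m$, so the product vanishes by the induction hypothesis. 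The same parity remark gives $(a^0)^{*2}_{2k-1}=0$ directly. To get the recursion for $b_k:=a^0_{2k-1}$, I would specialise $(A_{\lambda,0,\nu})$ at the odd index $2k+1$, namely $a^0_{2k+1}=\lambda\,(a^0)^{*2}_{2k}\,g_{2k}$ with $g_{2k}=\frac{\Gamma(2\alpha k+1)}{\Gamma((2k+1)\alpha+1)}$, and then reindex the surviving (odd$\times$odd) terms of $(a^0)^{*2}_{2k}$ by $\ell=2i-1$, which yields $(a^0)^{*2}_{2k}=\sum_{i=1}^{k}b_i b_{k+1-i}=b^{*2}_{k+1}$. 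This is precisely \eqref{eq:convb}.

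\emph{Part (b).} Here I would run two nested inductions on $k$. A first one shows $a^0_k\ge 0$ and $a_k\ge 0$ simultaneously: both follow because $g_k>0$, the convolutions are sums of products of nonnegative earlier terms, and $\lambda,\mu,\nu\ge 0$. A second induction then proves $a_k\ge a^0_k\;(\ge 0)$, with equality at $k=1$. For the step, dropping the nonnegative term $\mu a_k g_k$ gives $a_{k+1}\ge \lambda a_k^{*2}g_k$, while termwise monotonicity of products of nonnegative reals yields $a_k^{*2}=\sum_\ell a_\ell a_{k-\ell}\ge \sum_\ell a^0_\ell a^0_{k-\ell}=(a^0)^{*2}_k$; since $\lambda\ge 0$ this chains to $a_{k+1}\ge a^0_{k+1}$. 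Finally $a_k\ge a^0_k\ge 0$ gives the pointwise inequality $|a_k|^{-1/(\alpha k)}\le |a^0_k|^{-1/(\alpha k)}$ (with the convention that the right-hand side is $+\infty$ on the even indices, where $a^0_k=0$), and monotonicity of $\liminf$ in Hadamard's formula \eqref{eq:Hadamard} delivers $R_{\psi_{\lambda,\mu,\nu}}\le R_{\psi_{\lambda,0,\nu}}$.

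\emph{Part (c).} The sign relation I would obtain by the substitution $b_k:=(-1)^{k-1}a^{(\lambda,\mu,\nu)}_k$. Writing $a^{(\lambda,\mu,\nu)}_k=(-1)^{k-1}b_k$ and tracking signs in $(A_{\lambda,\mu,\nu})$ — each convolution product $a_\ell a_{k-\ell}$ carries $(-1)^{k-2}=(-1)^k$, whereas the linear term carries $(-1)^{k-1}$ — one finds that $(b_k)$ satisfies $(A_{\lambda,-\mu,\nu})$ with $b_1=\nu/\Gamma(\alpha+1)$. By uniqueness of the recursion, $b_k=a^{(\lambda,-\mu,\nu)}_k$, so $a^{(\lambda,\mu,\nu)}_k=(-1)^{k-1}a^{(\lambda,-\mu,\nu)}_k$ (an alternating sign; in particular $|a^{(\lambda,\mu,\nu)}_k|=|a^{(\lambda,-\mu,\nu)}_k|$), and equality of the two radii is then immediate from \eqref{eq:Hadamard}. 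For the last assertion, at $t=R:=R_{\psi_{\lambda,-\mu,\nu}}$ the expansion of $\psi_{\lambda,\mu,\nu}$ is the alternating series $\sum_{k\ge 1}(-1)^{k-1}\widetilde a_k R^{k\alpha}$ with $\widetilde a_k:=a^{(\lambda,-\mu,\nu)}_k\ge 0$, which I would settle by Leibniz's test once the terms $\widetilde a_k R^{k\alpha}$ are shown to decrease to $0$.

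\emph{Main obstacle.} The inductions of (a)–(b) and the sign flip of (c) are essentially mechanical; the genuinely delicate point is this boundary (semi-)convergence in (c). Bare monotonicity of $(\widetilde a_k)$ does not by itself force the series terms $\widetilde a_k R^{k\alpha}$ to be monotone, nor to vanish (a geometric $\widetilde a_k\asymp R^{-\alpha k}$ would give non-vanishing terms). The argument must therefore use that at the blow-up radius of the all-nonnegative companion $\psi_{\lambda,-\mu,\nu}$ the coefficients actually satisfy $\widetilde a_k R^{k\alpha}\downarrow 0$, so that Leibniz's criterion (equivalently Dirichlet's test at the conjugate boundary point $-R^{\alpha}$, Pringsheim's singularity living only at $+R^{\alpha}$) applies. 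Making this monotone decay precise is the crux of (c).
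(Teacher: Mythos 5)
Your treatment of (a) and (b) coincides with the paper's own proof: the same parity induction and reindexing of the convolution for (a), and the same nested nonnegativity-then-comparison induction for (b) (the paper leaves the Hadamard step at the end of (b) implicit; you spell it out, correctly handling the vanishing even-index terms). For (c), the sign flip $\tilde a_k=(-1)^{k-1}a^{(\lambda,\mu,\nu)}_k$, the verification that $(\tilde a_k)$ solves $(A_{\lambda,-\mu,\nu})$, and the resulting equality of radii are exactly the paper's argument; note that the exponent $(-1)^{k}$ in the statement is a typo for $(-1)^{k-1}$, as both your computation and the paper's proof show (with $(-1)^k$ one would get $a_1=-a_1$), and this does not affect the radius conclusion since the absolute values agree.

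Concerning the boundary claim in (c): the obstacle you flag is genuine, and you should know that the paper does not overcome it either. Its proof of the proposition stops at the equality of radii ("see also the comments further on"), and convergence at $R_{\psi_{\lambda,-\mu,\nu}}$ is only addressed in a subsequent remark which silently replaces the stated hypothesis (eventual monotonicity of the coefficients $a^{(\lambda,-\mu,\nu)}_k$) by the Leibniz condition that the terms $R^{\alpha k}_{\psi_{\lambda,-\mu,\nu}}a^{(\lambda,-\mu,\nu)}_k$ decrease to $0$ — precisely the condition you name. Your geometric example $\tilde a_k\asymp R^{-\alpha k}$ shows the hypothesis as literally written is insufficient (decreasing coefficients, radius $R$, yet non-vanishing boundary terms), so your refusal to claim more than Leibniz delivers is the honest position; closing the gap would require either strengthening the hypothesis to monotone decay of the terms themselves, as the paper's remark implicitly does, or proving that solutions of $(A_{\lambda,-\mu,\nu})$ with eventually decreasing coefficients automatically satisfy that stronger condition, which neither you nor the paper attempts.
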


Note that  claim~$(c)$ is that of Theorem~\ref{thm:Radius} and  claim~$(a)$ is claim~$(d)$.

\bigskip
\noindent {\bf Proof.}  $(a)$ We proceed again by induction on $k$. If $k=1$, $(a^0)^{*2}_1=0$ and $a^0_{2} = \frac{\nu \mu }{\Gamma(2\alpha +1)} \frac{\Gamma(\alpha+1)}{\Gamma(2\alpha+1)}= 0$. 
 Assume $a^0_{2\ell}=0$, $1\le \ell \le k-1$, then 
 \[
 (a^0)^{*2}_{2k+1} = \sum_{\ell=1}^{2k} a^0_\ell a^0_{2k+1-\ell}.
 \]
 It is clear that either $\ell$ or   $ 2k+1-\ell$ is even. Consequently, $a_{\ell}a_{2k+1-\ell}=0$ so that  $ (a^0)^{*2}_{2k+1} =0$ and 
 \[
 a^0_{2(k+1)} =a^0_{2k+1+1}= \lambda \,\frac{\Gamma( \alpha (2k+1)+1)}{\Gamma(2\alpha(k+1)+1)} (a^0)^{*2}_{2k+1} =0.
 \]
 Let us look first  at the convolution at an odd even index. As $a^0_{\ell}=0$ for even index $\ell$, one has
 \begin{align*}
 (a^0)^{*2}_{2k} &= \sum_{\ell=1}^{2k-1} a^0_\ell a^0_{2k-\ell}  = \sum_{r=1}^{k}a^0_{2r-1} a^0_{2(k-r+1)-1} = \sum_{r=1}^kb_{r}b_{k+1-r} = b^{*2}_{k+1}.
 \end{align*}
 Plugging this in $(A_{\lambda,0,\nu})$ at index $2k+1$ yields~~\eqref{eq:convb}.

Notice that, by induction,  $a_k\ge 0$ for every $k\ge 1$ if $\lambda$, $\mu$, $\nu\ge 0$ (in particular  $a^0_k\ge 0$ as well). 

\smallskip
\noindent $(b)$ We proceed by induction on $k$.  It holds as an equality for $k=1$: $a_1=a^0_1 =\frac{\nu}{\Gamma(\alpha+1)}$. Assume $a_\ell\ge a^0_\ell\ge 0$, $1\le \ell \le k$. Then, using~~\eqref{eq:convoldef1},  
\[
a^{*2}_k = \sum_{\ell=1}^{k-1} a_{\ell}a_{k-\ell}\ge \sum_{\ell=1}^{k-1} a^0_{\ell}a^0_{k-\ell} = (a^0)^{*2}_k
\]
so that, using that $\mu \ge 0$, 
\[
a_{k+1} = \frac{\Gamma(\alpha k+1)}{\Gamma(\alpha(k+1)+1)}\big(\lambda a^{*2}_k+\mu a_k \big)\ge  \lambda\frac{\Gamma(\alpha k+1)}{\Gamma(\alpha(k+1)+1)} a^{*2}_k\ge  \lambda\frac{\Gamma(\alpha k+1)}{\Gamma(\alpha(k+1)+1)} (a^0)^{*2}_k= a^0_{k+1}.
\]
 
 \noindent $(c)$  
 \label{subsubsec:muneg}  
 Let $\tilde a_k = (-1)^{k-1} a_k$. It is clear that 
 $$
 \tilde a^{*2}_k = \sum_{\ell=1}^{k-1} (-1)^{\ell-1} a_\ell (-1)^{k-\ell-1}a_{k-\ell}= (-1)^k a^{*2}_k
 $$
(also obvious by setting $\rho=-1$ and replacing $\alpha-1$ by $0$ in former computations).
Consequently, $\tilde a_1 = a_1= \frac{\nu}{\Gamma (\alpha+1)}$ and 
\[
\tilde a_{k+1} =\frac{\Gamma(\alpha k +1)}{\Gamma(\alpha (k+1) +1)} (-1)^{k} \big(\lambda a^{*2}_k +\mu a_k\big)= \frac{\Gamma(\alpha k +1)}{\Gamma(\alpha (k+1) +1)}(\lambda\tilde a^{*2}_k-\mu \tilde a_k),
\]
so that $(\tilde a_k)_{k\ge 1}$ is solution to $(A_{\lambda, -\mu,\nu})$.  In particular, if we set formally
$$
\widetilde \psi_{\lambda, \mu,\nu}(u)= \sum_{k\ge1}a_ku^k
$$
then 
$$
\psi_{\lambda, \mu,\nu}(t)= \widetilde \psi_{\lambda, \mu,\nu}(t^{\alpha})
\qquad \mbox{ and }\qquad
\psi_{\lambda, -\mu,\nu}(t)= - \widetilde \psi_{\lambda, \mu,\nu}(-t^{\alpha})
$$
so that both expansions of  $\psi_{\lambda, \mu,\nu}$ and $\psi_{\lambda, -\mu,\nu}$ have the same convergence radius $R_{\lambda, \mu,\nu}= R_{\lambda, -\mu,\nu}$. See also the comments further on.$\qquad_{\diamondsuit}$

\begin{remark} Note that when $\lambda,\, \mu,\ \nu>0$ the coefficients $a_k>0$ so that $\lim_{t\to R_{\lambda, \mu,\nu}}(t)=+\infty$. As a consequence,  the  definition domain of the solution $\psi_{\lambda, \mu,\nu}$ of the Riccati equation on the positive real line is $[0,R_{\lambda, \mu,\nu})$. 

By contrast,  the series with terms $(-1)^kR_{\lambda, \mu,\nu}^ka_k$  is most likely  alternate  ($i.e.$ the absolute value of the generic term decreases toward $0$ for $k$ large enough). This implies  that the series will still converge at $t= R_{\psi_{\lambda,0,\nu}}$ {\it i.e.}
\[
\lim_{t\to R_{\lambda, -\mu,\nu}(t)}\psi_{\lambda, - \mu,\nu}(t) = \sum_{k\ge 1} (-1)^{k-1}R_{\lambda, -\mu,\nu}^k a_k\!\in \R.
\]
This explains the highly unstable numerical behavior observed near the explosion time compared to the case where all $a_k>0$, but also that the solution of the Riccati equation may be defined beyond $R_{\psi_{\lambda,0,\nu}}$, as already mentioned in the introduction.
\end{remark}

\bigskip
 Now, we are in position to prove claim~$(b)$ (lower bound of the radius). In the same manner as we proceed  for upper bound, we aim this time at propagating   a lower bound for  the non-zero subsequence of $(a^0_k)_{k\ge 0}$ {\it i.e.} the sequence $(b_k)_{k\ge1}$, namely 
 $$b_k \ge c\,\rho^k k^{\alpha-1},\; k\ge 1.
 $$
 Keeping in mind that the function $f_{\alpha}(x)= \big(x(1-x)\big)^{\alpha-1}$ is convex since $0<\alpha\le 1$,
 \begin{align*}
 b^{*2}_{k+1} = \sum_{\ell=1}^kb_{\ell}b_{k+1-\ell}  &\ge c^2\rho^{k+1}\sum_{\ell=1}^k  \ell^{\alpha-1} (k+1-\ell)^{\alpha-1} = c^2\rho^{k+1} (k+1)^{2(\alpha-1)}\sum_{\ell=1}^kf\left(\frac{\ell}{k+1}\right)\\
 &\ge c^2\rho^{k+1} (k+1)^{2(\alpha-1)} k \,f_{\alpha}\!\left(\sum_{1\le \ell\le k}\frac{\ell}{k(k+1)}\right)\\
 &= c^2\rho^{k+1} (k+1)^{2\alpha-1} \frac{k}{k+1} f_{\alpha}\left(\frac 12\right)= c^2\rho^{k+1} (k+1)^{2\alpha-1}\left(1+\frac 1k\right)^{-1} 2^{-2(\alpha-1)}.
 \end{align*}

Using Kershaw's Inequality with $x=2\alpha k$ and $s= \alpha$
 \begin{align*}
 \frac{\Gamma(2 \alpha k+1)}{\Gamma((2k+1)\alpha+1)} &= \frac{1}{\alpha(2k+1)}\frac{\Gamma(2 \alpha k+1)}{\Gamma(2\alpha k+ \alpha)} \ge  \frac{1}{\alpha(2k+1)}\left( 2\alpha k+ \frac{\alpha}{2}\right)^{1-\alpha}\\
 &= (2\alpha k)^{-\alpha} \left(1+\frac{1}{2k}\right)^{-1} \left(1+\frac{1}{4k}\right)^{1-\alpha}\\
 &=  (2\alpha )^{-\alpha} (k+1)^{-\alpha} \left(1+\frac1k\right)^{\alpha}\left(1+\frac{1}{2k}\right)^{-1} \left(1+\frac{1}{4k}\right)^{1-\alpha}.
 \end{align*}

 Plugging the above  two lower bounds for $ b^{*2}_{k+1}$ and $ \frac{\Gamma(2 \alpha k+1)}{\Gamma((2k+1)\alpha+1)}$  into~~\eqref{eq:convb} yields
 \[
 b_{k+1}\ge \lambda c^2\rho^{k+1}(k+1)^{\alpha-1} (2\alpha)^{-\alpha} 2^{-2(\alpha-1)}\widetilde b_k
 \] 
 where
 \[
 \widetilde b_k= \left(1+\frac 1k\right)^{\alpha-1}\left(1+\frac{1}{2k}\right)^{-1} \left(1+\frac{1}{4k}\right)^{1-\alpha}, \; k\ge 1.
 \]
 Consequently the propagation holds if 
 \[
 b_1=\frac{\nu}{\Gamma(\alpha+1)}\ge c\rho \quad \mbox{ and }\quad  (2\alpha)^{-\alpha} 2^{-2(\alpha-1)}\lambda \,c\, \widetilde b_k \ge 1, \; k\ge 1.
 \]
If we saturate the left inequality by setting $c= \frac{\nu}{\rho\alpha \Gamma(\alpha)}$, then the right  condition boils  down to 
$
\rho \le  2^{2-3\alpha}\frac{\alpha^{-(1+\alpha)}}{\Gamma(\alpha)}\lambda \nu \, \widetilde b_k, \; k\ge 1.
$
One checks that $\displaystyle \min_{k\ge1} \widetilde b_k = \widetilde b_1= \frac 23\left(\frac 58\right)^{1-\alpha}=2^{3\alpha-2}\frac{5^{1-\alpha}}{3}$
 which yields
 \[
 \rho^* = \frac{5^{1-\alpha}}{3}\frac{\alpha^{-\alpha}}{\Gamma(\alpha+1)\lambda\nu }\quad \mbox{ and }\quad c^*=  \frac{\nu}{\rho^*\alpha \Gamma(\alpha)}.
 \]
 Now,
 \begin{align*}
 R_{\psi_{\lambda,0,\nu}}^{-1} & =\limsup_k |a_k|^{\frac{1}{\alpha k}}\ge \limsup_k |a^0_k|^{\frac{1}{\alpha k}} = \limsup_k |a^0_{2k+1}|^{\frac{1}{(2k+1)\alpha }}\quad \mbox{ since }\quad a^0_{2k}=0,\, k\ge 0,\\
 &= \limsup_k |b_k|^{\frac{1}{(2 k+1)\alpha}} = \Big(\limsup_k |b_k|^{\frac{1}{k}}\Big)^{\frac{1}{2\alpha}} \ge\Big( \limsup_k \big(c^*k^{\alpha-1}(\rho^*)^k \big)^{\frac{1}{k}}\Big)^\frac{1}{2\alpha }= (\rho^*)^\frac{1}{2\alpha }
 \end{align*}
 which finally leads to the announced upper-bound 
 \[
  R_{\psi_{\lambda,\mu,\nu}}\le  R_{\psi_{\lambda,0,\nu}}\le \left(\frac{3}{5^{1-\alpha}}\right)^{\frac{1}{2\alpha}} \frac{ \alpha^{\frac 12 +\frac{1}{2\alpha}}\Gamma(\alpha)^{\frac{1}{2\alpha}}}{(\lambda\nu)^{\frac{1}{2\alpha}}}. \qquad_{\diamondsuit}
 \]
 
\noindent {\bf  Remark.} From the upper-bound result, we know that 
\[
 R_{\psi_{\lambda, 0,\nu}}  \ge  \frac{\alpha^{\frac12+\frac{1}{2\alpha}}\Gamma(2\alpha )^{\frac{1}{2\alpha}}}{2^{(\frac{1}{2\alpha}-1)^+}\big(  \Gamma(\alpha) |\lambda|  |\nu| \big)^{\frac{1}{2\alpha}} }.
\]

In particular we have established that, if $\lambda$, $\nu >0$,  there exist real constants $0<c_1(\alpha)<c_2(\alpha)$, only depending on $\alpha$, such that 
 \[
  \frac{c_1(\alpha)}{(\lambda\nu)^{\frac{1}{2\alpha}}}\le R_{\psi_{\lambda,0,\nu}}\le \frac{c_2(\alpha)}{(\lambda\nu)^{\frac{1}{2\alpha}}}
 \]
 with $c_1(\alpha) =  \frac{\alpha^{\frac12+\frac{1}{2\alpha}}\Gamma(2\alpha )^{\frac{1}{2\alpha}}}{2^{(\frac{1}{2\alpha}-1)^+} \Gamma(\alpha)^{\frac{1}{2\alpha}} }$ and 
 $c_2(\alpha) = \left(\frac{3}{5^{1-\alpha}}\right)^{\frac{1}{2\alpha}} \alpha^{\frac 12 +\frac{1}{2\alpha}}\Gamma(\alpha)^{\frac{1}{2\alpha}}$.

\subsection{Proof when $\alpha\!\in (1,2]$}
\subsubsection{Upper-Bound of the Radius by Lower Bound Propagation (Claim~$(a)$)} We start from the same the equation 
$(E_{\lambda, \mu,\nu})$  (see~~\eqref{eq:Eqak0}). If $\alpha\!\in (1,2]$, then we may write
\[
\frac{\Gamma (\alpha k +1)}{\Gamma (\alpha k +\alpha +1 )} = \frac{\Gamma (\alpha k +1)}{\Gamma (\alpha (k +1))} \frac{1}{\alpha(k+1)}.
\]

By Kershaw's Inequality we have, by setting $x= \alpha(k+1)-1$ and $s= 2-\alpha\!\in [0,1)$,

\begin{align*}
\frac{\Gamma (\alpha k +1)}{\Gamma (\alpha (k +1))} &\le \Big(\alpha(k+1)-\frac{\alpha}{2}\Big)^{1-\alpha} = \Big(\alpha(k+1/2)\Big)^{1-\alpha}
\end{align*}
so that 
\begin{align*}
\frac{\Gamma (\alpha k +1)}{\Gamma (\alpha k +\alpha +1 )} &\le \frac{1}{(\alpha\, k)^\alpha}\frac{k}{k+1}\left(\frac{2k}{2k+1}\right)^{\alpha-1}\le  \frac{1}{(\alpha\, k)^\alpha}
\end{align*}
since $(\alpha-1)\ge 0$. Now, using the concavity of the function $f(x)= x^{\alpha-1}(1-x)^{\alpha-1}$ over $[0,1]$ since $\alpha\ge 1$, we derive by Jensen's Inequality that 
\begin{align*}
\sum_{\ell=1}^{k-1}\ell^{\alpha-1}(k-\ell)^{\alpha-1}
&= k^{2(\alpha-1)}\sum_{\ell=1}^{k-1}f_{\alpha}\Big(\frac{\ell}{k}\Big) \le k^{2(\alpha-1)}(k-1)f_{\alpha}\!\left(\frac{1}{k-1}\sum_{\ell=1}^{k-1}\frac{\ell}{k}\right) \\
&=  k^{2\alpha-1}\left(1-\frac 1k\right) f_{\alpha}(1/2) =  k^{2\alpha-1}\left(1-\frac 1k\right) 2^{-2(\alpha-1)}\\
&\le k^{2\alpha-1} 2^{-2(\alpha-1)}.
\end{align*}
Consequently, assuming that $a_\ell\le C\rho^\ell \ell^{\alpha-1}$ for every $\ell=1,\ldots,k$, we derive that 
\begin{eqnarray*}
|a_{k+1}| &\le & C\rho^{k}  \alpha^{-\alpha}\left[|\lambda| Ck^{\alpha-1}2^{-2(\alpha-1)}+\frac{|\mu|}{ k}\right] = C\rho^{k+1}k^{\alpha-1}\frac{\alpha^{-\alpha}}{\rho}\left[|\lambda| C2^{-2(\alpha-1)}+ \frac{|\mu|}{k^\alpha} \right]\\
&\le & C\rho^{k+1}(k+1)^{\alpha-1}\frac{\alpha^{-\alpha}}{\rho}\left[|\lambda| C2^{-2(\alpha-1)}+ |\mu| \right],
\end{eqnarray*}
where we used that  $\alpha$ and  $\alpha-1\ge0$.  Hence, the propagation of the upper-bound holds if and only if 
\[
\frac{|\nu|}{\Gamma(\alpha+1)}\le C\rho \quad\mbox{ and }\quad |\lambda| C 2^{-2(\alpha-1)}+|\mu| \le \alpha^{\alpha}\rho.
\]


Following the lines of the case $\alpha\!\in (0,1]$, we derive that propagation does hold when  
\begin{equation}\label{eq:rho_*2}
\rho=\rho_*= \frac{|\mu| +\sqrt{\mu^2 +\frac{2^{2(2-\alpha)}\alpha^{\alpha-1}|\lambda||\nu|}{\Gamma(\alpha)}} }{2\,\alpha^{\alpha}} \quad\mbox{ and }\quad C=C_* = \frac{|\nu|}{\Gamma(\alpha+1)\rho_*},
\end{equation}
so that the convergence radius of $\psi$ satisfies 
$
R_{\psi} \ge (\rho_*)^{-\frac{1}{\alpha}}.
$

\noindent {\bf Remark.} It is the same formula as~~\eqref{eq:rho_*} except for the term $2^{2(2-\alpha)}$ which replaces $4B(\alpha,\alpha)$ since $2^{2(2-\alpha)}= 4\cdot2^{2(1-\alpha)} =  4f(1/2)$. This is due to the inversion of the convexity of the function $f_{\alpha}$ when $\alpha$ switches from $(0,1]$ to $[1,2)$.

\subsubsection{Upper Bound of the Radius by Lower Bound Propagation (Claim~$(b)$)} 

\medskip As a preliminary task, we note that the function  $f_{\alpha}(x)=\big(x(1-x)\big)^{\alpha-1}$ defined on $[0,1]$ is strictly concave when $\alpha\ge 1$, is symmetric with respect to $\frac 12$ and attains its maximum at $1/2$. Hence, $\tilde f_{\alpha}(x)=f_{\alpha}(1/2)-f_{\alpha}(x)$ satisfies the assumptions of Lemma~\ref{lem:f}, so that
\begin{equation*}
\frac 1k \sum_{\ell=1}^{k-1}\tilde f_{\alpha}\Big(\frac {\ell}{k}\Big)\geq \int_0^1\tilde f_{\alpha}(u)du
\end{equation*}
which finally yields, after easy manipulations, that, for every $k\ge 2$, 
\begin{equation}\label{eq:minorfconvex}
\frac 1k \sum_{\ell=1}^{k-1}f_{\alpha}\Big(\frac {\ell}{k}\Big)\geq \widetilde B(\alpha):= \int_0^1f_{\alpha}(u)du-\frac 12 f_{\alpha}\Big(\frac 12\Big) = B(\alpha,\alpha)-2^{1-2\alpha}>0.
\end{equation}
Notice that the positivity of $\widetilde B(\alpha)$ simply  follows from the strict concavity of $f_{\alpha}$.

\medskip
We assume that $\lambda$, $\nu>0$, $\mu \geq 0$ and that, for $\ell = 1, \ldots, k$, $a_k \ge c\rho^k k^{\alpha-1}$ for some real constant $c>0$. 

\medskip As in the case $\alpha\!\in (\frac 12,1]$, we will focus on the sequence $(b_k)$ since  Lemma~\ref{prop:lowerb} still applies. 

\medskip
As for the factor $\frac{\Gamma(2\alpha k+1)}{\Gamma(\alpha(2k+1)+1)} $, we may proceed as follows, still using Kershaw's Inequality, this time with $x=2\alpha k$ and $s= \alpha-1\!\in [0,1]$:
\begin{eqnarray*}
 \frac{\Gamma(2\alpha k+1)}{\Gamma(\alpha(2k+1)+1)}&=& \frac{1}{\alpha(2k+1)(\alpha(2k+1)-1)}\frac{\Gamma(2\alpha k+1)}{\Gamma(\alpha(2k+1)-1)}\\
\nonumber&\geq & \frac{1}{\alpha(2k+1)(\alpha(2k+1)-1)}\left(2\alpha k+ \frac{\alpha-1}{2}\right)^{2-\alpha}\\
\label{eq: minorGammaalphage1}
&\geq& \frac{(2\alpha)^{-\alpha}}{k^{\alpha}}\frac{2k}{2k+1}\frac{2\alpha k}{2\alpha k+\alpha-1}\left(1+\frac{\alpha-1}{4\alpha k}\right)^{2-\alpha} =  (2\alpha)^{-\alpha}(k+1)^{-\alpha}\,\widetilde b_k
\end{eqnarray*}
with
\[
\widetilde b_k
 = \left(1+\frac 1k\right)^{\alpha} \left(1+\frac{1}{2k}\right)^{-1}\left(1+\frac{\alpha-1}{2\alpha k}\right)^{-1}\left(1+\frac{\alpha-1}{4\alpha k}\right)^{2-\alpha}, \; k\ge 1.
\]
One checks that this sequence decreases toward $1$, so that $\inf_{k\ge 1} \widetilde b_k \ge 1$. Following the lines of the case $\alpha\!\in (0,1]$ yields
\[
b_{k+1} \ge  c^2 \lambda (2 \alpha)^{-\alpha}\widetilde B(\alpha) \rho^{k+1}(k+1)^{\alpha-1}, \; k\ge 0
\]
whereas  $\displaystyle b_1 = \frac{\nu}{\Gamma(\alpha+1)}$.  Hence, the propagation of the lower bound is satisfied if 
\[
\frac{\nu}{\Gamma(\alpha+1)}\ge c\rho \quad \mbox{ and }\quad c \lambda (2 \alpha)^{-\alpha}\widetilde B(\alpha) \ge 1.
\]
Finally, the lowest solution $\rho$ to this system is
$
\rho_* = \lambda \nu \frac{(2 \alpha)^{-\alpha}\widetilde B(\alpha) }{\Gamma(\alpha+1)},
$
corresponding to $C_*= \frac{(2 \alpha)^{\alpha}}{\lambda \widetilde B(\alpha)}$.

\section[Proof of Theorem~5.1]{Proof of Theorem~\ref{thm:non-homog}}\label{sec:thm5.1}
We now focus on the two separate cases on the next two subsections.
\subsection{Proof of Theorem~\ref{thm:non-homog}$(a)$ (Case $\alpha\!\in (\frac 12,1)$)} 
\medskip
\noindent {\sc \textbf{Step~1}} {\em Induction formula and propagation principle}. Let $\psi$ be formally defined by~~\eqref{eq:defpsi} and let $k(\ell)$ be   the valuation of the sequences $a_{.\ell}$.

The induction equation~~\eqref{eq:akell2} is obvious by identification. Note that $(a_{k,0})_{k\ge1}$ satisfies the  recursion~~\eqref{eq:Eqak0} of the case $u=0$, so that $k(0)= +\infty$ if $\nu=0$ and $k(0)=1$ otherwise. The main point is to determine the valuation $k(\ell)$ when $\nu\neq 0$. 

We start with the fact that $k(0)=1$ (corresponding to the expansion when $u=0$) and $k(1)=1$ due to the presence of the fractional monomial
$\frac{u}{\Gamma(\alpha)}t^{\alpha-1}$.

Let $\ell\ge 1$. A term $t^{\alpha k-\ell}$ comes in~~\eqref{eq:defpsi} for the $\alpha$-fractional integration of a term $t^{(k-1)\alpha -\ell}$, which itself comes either directly, either from the expansion at the same level $\ell$ of $\mu \psi$ or from a product $t^{\alpha k_1-\ell_1} \cdot t^{\alpha k_2-\ell_2}$ with $\ell_1+\ell_2 = \ell$ and $k_1+k_2 = k-1$ induced by the convolution term. Hence, $k(0)=1$ and, for every $\ell\ge 1$, 
\[
k(\ell) =\min\left[ \min_{\ell_1+\ell_2=\ell}\big[k(\ell_1)+k(\ell_2)\big], k(\ell-1)\right]+1.
\] 
It is clear that this minimum cannot be attained at $\ell_1$ or $\ell_2 =0$, since it leads to a non-sense. Then we can check that the formula
\[
k(\ell)= 2\ell-1, \quad \ell\ge 1,
\]
is solution to the above minimization problem. Finally, $k(\ell)= (2\ell-1)\vee 1$, $\ell\ge 0$. This justifies the definition of~~\eqref{eq:defpsi} and the double index discrete convolution~~\eqref{eq:akell2}.

\medskip To show the existence of a positive convergence radius $R_{\psi}$ shared by all the fractional series at all the  levels, we will prove that he coefficients $a_{k,\ell}$ satisfy the following upper bound for every level $\ell$ and every $k\ge k(\ell)$:
\begin{equation}\label{eq:ubuneq0}
|a_{k,\ell}| \le C \theta^{\ell}\rho^{k} \big(k-k(\ell)+1\big)^{\frac{\alpha}{2}-1}(\ell\vee1 )^{\frac{\alpha}{2}-1},\; k\ge k(\ell), \; \ell\ge 0
\end{equation}
(with $k(\ell)= (2\ell-1)\vee 1$). The method of proof consists in propagating this bound by a nested induction on the index $k$   and  through the levels $\ell$.

\smallskip

\medskip
 \noindent {\sc \textbf{Step~2}} {\em Propagation of the initial value across the levels $\ell\ge 0$}.  Following~~\eqref{eq:ubuneq0}, we want to propagate by induction the bound
 \begin{equation}\label{eq:ubuneq0start}
 |a_{k(\ell),\ell}|\le C \rho ^{k(\ell)} \theta^{\ell}(\ell\vee 1)^{\frac{\alpha}{2}-1}, \quad \ell\ge 0,
 \end{equation}
keeping in mind that $k(\ell)= (2\ell-1)\vee 1$, $a_{1,0}= \frac{\nu }{\Gamma(\alpha+1)}$ and $a_{1,1}= \frac{u}{\Gamma(\alpha)}$. The levels $\ell=0,1$ yield direct conditions to be used later. Let $\ell\ge 2$.

Applying the induction formula~~\eqref{eq:akell2}  with $k=k(\ell)= 2\ell-1$, we obtain
 \[
 a_{2\ell-1,\ell} =\Big(\mu \,a_{2(\ell-1),\ell}+ \lambda a^{*2}_{2(\ell-1), \ell}\Big)\frac{\Gamma((2\alpha-1)(\ell-1))}{\Gamma((2\alpha-1)(\ell-1)+\alpha)}.
 \] 
First note that  $a_{2(\ell-1),\ell}=0$ since $2(\ell-1)\le 2\ell-1$ and $\ell\ge 2$. Moreover, 
\[
a^{*2}_{2(\ell-1), \ell}= \widetilde a^{*2}_{2(\ell-1), \ell}= \sum_{\begin{smallmatrix}k_1+k_2= 2(\ell-1)\\ \ell_1+\ell_2=\ell\\ k_i\geq 2\ell_i-1, \,  \ell_i\ge 1\end{smallmatrix}}a_{k_1,\ell_1}a_{k_2,\ell_2} = \sum_{\ell'=1}^{\ell-1}a_{2\ell'-1,\ell'}a_{2(\ell-\ell')-1,\ell-\ell'}
\]
so that we get the following induction formula for the starting values $a_{2\ell-1, \ell}$, $\ell\ge 1$: 
\[
 a_{2\ell-1,\ell} = \lambda  \left(\sum_{\ell'=1}^{\ell-1}a_{2\ell'-1,\ell'}a_{2(\ell-\ell')-1,\ell-\ell'}\right)\frac{\Gamma((2\alpha-1)(\ell-1))}{\Gamma((2\alpha-1)(\ell-1)+\alpha)}, \quad\ell\ge2,\quad a_{1,1}= \frac {u}{\Gamma(\alpha)}.
\]
Let $\ell\ge 2$. Assume that ~~\eqref{eq:ubuneq0start} is satisfied by $a_{k(\ell'), \ell'}$  for every lower level $\ell'\!\in \{0,1,\ldots, \ell-1\}$.  Then
\begin{align*}
|a_{2\ell-1, \ell}| &\le \lambda\left[\sum_{\ell'=1}^{\ell-1} C^2 \rho^{2\ell'-1}\theta^{\ell'} (\ell')^{{\frac{\alpha}{2}-1}}\ \rho^{2(\ell-\ell')-1}\theta^{\ell-\ell'} (\ell-\ell')^{{\frac{\alpha}{2}-1}}\right]\frac{\Gamma((2\alpha-1)(\ell-1))}{\Gamma((2\alpha-1)(\ell-1)+\alpha)}\\
&\le \lambda C^2\rho^{2(\ell-1)}\theta^{\ell} \left[\sum_{\ell'=1}^{\ell-1}  (\ell')^{{\frac{\alpha}{2}-1}} (\ell-\ell')^{{\frac{\alpha}{2}-1}}\right]\frac{\Gamma((2\alpha-1)(\ell-1))}{\Gamma((2\alpha-1)(\ell-1)+\alpha)}\\
&\le C\rho^{2\ell-1}\theta^{\ell}\, \frac{\lambda C}{\rho}B\Big(\frac{\alpha}{2},\frac{\alpha}{2}\Big) \ell^{\alpha-1}\frac{\Gamma((2\alpha-1)(\ell-1))}{\Gamma((2\alpha-1)(\ell-1)+\alpha)}.
\end{align*}
It follows from Kershaw's Inequality~~\eqref{eq:Kershawb}, applied with $x=(2\alpha-1)(\ell-1)+\alpha$ and $s= 1-\alpha$, and the elementary identity $\Gamma(z+1)=z\Gamma(z)$  that 
\begin{align*}
\frac{\Gamma((2\alpha-1)(\ell-1))}{\Gamma((2\alpha-1)(\ell-1)+\alpha)}&= \frac{(2\alpha-1)(\ell-1)+\alpha}{(2\alpha-1)(\ell-1)}\frac{\Gamma((2\alpha-1)(\ell-1)+1)}{\Gamma((2\alpha-1)(\ell-1)+\alpha+1)}\\
&\le  \Big(1+ \frac{\alpha}{(2\alpha-1)(\ell-1)}\Big)\Big((2\alpha-1)(\ell-1) +\alpha+\frac{1-\alpha}{2}\Big)^{-\alpha}\\
&= \ell^{-\alpha}(2\alpha-1)^{-\alpha}\Big(1+ \frac{\alpha}{(2\alpha-1)(\ell-1)}\Big)\Big(1+ \frac{3(1-\alpha)}{2(2\alpha-1)\ell}\Big)^{-\alpha}.
\end{align*}

Hence
\begin{equation}\label{eq:ubuneq0final}
|a_{2\ell-1, \ell}| \le  C\rho^{2\ell-1}\theta^{\ell}\ell^{{\frac{\alpha}{2}-1}}\, \frac{\lambda C}{\rho}\kappa^{(1)}_{\alpha}
B\Big(\frac{\alpha}{2},\frac{\alpha}{2}\Big) 
\end{equation}

\begin{align*}
\mbox{where  }\qquad \kappa^{(1)}_{\alpha}&= (2\alpha-1)^{-\alpha}\sup_{\ell\ge 2}\left[\Big(1+ \frac{\alpha}{(2\alpha-1)(\ell-1)}\Big)\Big(1+ \frac{3(1-\alpha)}{2(2\alpha-1)\ell}\Big)^{-\alpha}\ell^{-\frac{\alpha}{2}}\right]\qquad\qquad\\
&= \left(1+\frac{\alpha}{2\alpha-1}\right)2^{-\frac{\alpha}{2}}(2\alpha-1)^{-\alpha}\left(1+\frac{3(1-\alpha)}{4(2\alpha-1)}\right)^{-\alpha} 
\end{align*}
since one easily checks (e.g. with the use of Mathematica) that the maximum is achieved at $\ell=2$. The condition on $\rho$ and $\theta$ for propagation hence reads
\begin{equation}\label{eq:propag-Init}
 \frac{|\nu|}{\Gamma(\alpha+1)}\le C\rho,\quad  \frac{|u|}{\Gamma(\alpha)} \le C\rho\, \theta  \quad\mbox{ and }\quad \frac{\lambda C}{\rho} \kappa^{(1)}_{\alpha} B\Big(\frac{\alpha}{2},\frac{\alpha}{2}\Big) \le 1,
\end{equation}
where the first two inequalities come from the initial values at levels $\ell=0,1$ and the third one ensures the propagation of the upper-bound in~~\eqref{eq:ubuneq0final}. These three inequalities are in particular satisfied if 
\[
\frac{\lambda |u|}{\rho^2\theta\,\Gamma(\alpha)} \kappa^{(1)}_{\alpha}
B\Big(\frac{\alpha}{2},\frac{\alpha}{2}\Big) \le 1 \quad \mbox{ and }\quad C =  \frac{C_0}{\rho}
\] 
where 
\begin{equation}\label{eq:fC0}
C_0=C_0(\alpha,\theta)= \Big[ \frac{|\nu| }{\Gamma(\alpha+1)}\vee \frac{|u|}{\theta \Gamma(\alpha)}\Big],
\end{equation} 
so that 
\begin{equation}\label{eq:r1c1}
\rho\ge \rho_1=\rho_1(\theta) = \sqrt{\frac{\lambda |u|B\big(\frac{\alpha}{2},\frac{\alpha}{2}\big) \kappa^{(1)}_{\alpha}}{\theta\, \Gamma(\alpha)}}\quad \mbox{ and }\quad C\ge  \frac{|u|}{\Gamma(\alpha)\rho\,\theta}.
\end{equation}
\medskip
\noindent {\sc \textbf{Step~3}} {\em Propagation through a level $\ell\ge 0$}.  Let $k\geq k(\ell)+1$. Assume that the above bound holds for every  couple $(k',\ell')$ such that level $\ell'<\ell$ and  $k'\ge 2\ell'-1$ or $\ell'=\ell$ and  $2\ell-1\le k'\le k-1$.  
\begin{align*}
|a^{*2}_{k-1,\ell} |&\le \;C^2 \sum_{\begin{smallmatrix}k_1+k_2= k-1\\ \ell_1+\ell_2=\ell\\ k_i\geq k(\ell_i), \,  \ell_i\ge 0\end{smallmatrix}} |a_{k_1, \ell_1}| |a_{k_2,\ell_2}|\\
& = C^2 \theta^{\ell}\rho^{k-1}\hskip-0.25cm \sum_{\begin{smallmatrix} \ell_1+\ell_2=\ell\\   \ell_i\ge 0\end{smallmatrix}} \big((\ell_1\vee 1) (\ell_2\vee 1)\big)^{\frac{\alpha}{2}-1}\hskip-0.25cm \sum_{\begin{smallmatrix}k_1+k_2= k-1\\   k_i\geq k(\ell_i)\end{smallmatrix}} (k_1-k(\ell_1)+1)^{\frac{\alpha}{2}-1}  (k_2-k(\ell_2)+1)^{\frac{\alpha}{2}-1}\\
&=  C^2 \theta^{\ell}\rho^{k-1}\hskip-0.25cm \sum_{\begin{smallmatrix} \ell_1+\ell_2=\ell\\   \ell_i\ge 0\end{smallmatrix}}\big((\ell_1\vee 1) (\ell_2\vee 1)\big)^{\frac{\alpha}{2}-1}\hskip-0.25cm 
\sum_{\begin{smallmatrix}k_1+k_2= k-(k(\ell_1)+k(\ell_2))+1\\   k_i\ge 1 \end{smallmatrix}} k_1^{\frac{\alpha}{2}-1}  k_2^{\frac{\alpha}{2}-1}\\
&\le C^2 \theta^{\ell}\rho^{k-1}\hskip-0.25cm \sum_{\begin{smallmatrix} \ell_1+\ell_2=\ell\\   \ell_i\ge 0\end{smallmatrix}}\hskip-0.1cm \big((\ell_1\vee 1) (\ell_2\vee 1)\big)^{\frac{\alpha}{2}-1}\hskip-0.1cm 
 B\Big(\frac{\alpha}{2},\frac{\alpha}{2} \Big) \big(k-(k(\ell_1)+k(\ell_2))+1\big)^{2(\frac{\alpha}{2}-1)+1}
 \end{align*}
owing (twice) to Lemma~\ref{lem:f}$(a)$ since $\alpha/2- 1 <0$. Now, as $k(\ell_1)+k(\ell_2) \ge k(\ell) $ by defintion of the valuation, we deduce that
\[
\big(k-(k(\ell_1)+k(\ell_2))+2\big)^{2(\frac{\alpha}{2}-1)+1}\le \big(k-k(\ell)+1\big)^{2(\frac{\alpha}{2}-1)+1}\le \big(k-1-k(\ell)+1\big)^{2(\frac{\alpha}{2}-1)}
\]
so that 
\[ 
|a^{*2}_{k-1,\ell} |\le C^2 \theta^{\ell}\rho^{k-1} B\Big(\frac{\alpha}{2},\frac{\alpha}{2} \Big) \big(k-1-k(\ell)+1\big)^{2(\frac{\alpha}{2}-1)+1}\hskip-0.25cm \sum_{\begin{smallmatrix} \ell_1+\ell_2=\ell\\   \ell_i\ge 0\end{smallmatrix}}\hskip-0.1cm \big((\ell_1\vee 1) (\ell_2\vee 1)\big)^{\frac{\alpha}{2}-1}.
\]

Now note that, if $\ell\ge 1$, 
\begin{align*}
\sum_{\begin{smallmatrix}  \ell_1+\ell_2=\ell\\   \ell_i\ge 0\end{smallmatrix}}( \ell_1\vee 1)^{\frac{\alpha}{2}-1}(\ell_2\vee1)^{\frac{\alpha}{2}-1} &\le \ell^{\frac{\alpha}{2}-1}+\sum_{\ell_1=1}^{\ell-1}\ell_1^{\frac{\alpha}{2}-1} (\ell-\ell_1)^{\frac{\alpha}{2}-1}\le \ell^{\frac{\alpha}{2}-1}+
B\Big(\frac{\alpha}{2},\frac{\alpha}{2}\Big)\ell^{\alpha-1}
\end{align*}
owing to Lemma~\ref{lem:f}$(a)$. 

If $\ell=0$, $\displaystyle \sum_{\begin{smallmatrix}  \ell_1+\ell_2=\ell\\   \ell_i\ge 0\end{smallmatrix}}( \ell_1\vee 1)^{\frac{\alpha}{2}-1}(\ell_2\vee1)^{\frac{\alpha}{2}-1} =1$ so that the above right inequality  still holds by replacing $\ell $ by $\ell\vee 1$. 
Now,  combining these inequalities yields

\begin{align*}
|a^{*2}_{k-1,\ell} |&\le  C^2 \theta^{\ell}\rho^{k-1} B\Big(\frac{\alpha}{2},\frac{\alpha}{2}\Big) 
\big(k-k(\ell)+1\big)^{\alpha-1}\left(\ell^{\frac{\alpha}{2}-1}+B\Big(\frac{\alpha}{2},\frac{\alpha}{2}\Big)\ell^{\alpha-1}\right).
\end{align*}

As for the ratio of Gamma functions, one has, using $\Gamma(x+1)=x\Gamma(x)$ and Kershaw's inequality~~\eqref{eq:Kershawb}, 
\begin{align*}
\frac{\Gamma(\alpha (k-1)-\ell+1)}{\Gamma(\alpha k-\ell+1)} &= \frac{\alpha k-\ell+1}{\alpha k-\ell+1-\alpha}\frac{\Gamma(\alpha k-\ell+2-\alpha)}{\Gamma(\alpha k-\ell+2)}\\
&\le  \frac{\alpha k-\ell+1}{\alpha k-\ell+1-\alpha}\Big(\alpha k -\ell +1 +\frac{1-\alpha}{2}\Big)^{-\alpha}.
\end{align*}
For every $\ell\ge 1$ and  $k\ge k(\ell)+1\ge 2\ell$, 
\[
\frac{\alpha k-\ell+1}{\alpha k-\ell+1-\alpha}= 1+ \frac{\alpha}{\alpha k-\ell+1-\alpha} \le 1+ \frac{\alpha}{(2\alpha-1)\ell+1-\alpha}\le 2.
\]
Now, we note that 
\[
\alpha k -\ell +1 +\frac{1-\alpha}{2}= \alpha (k -2(\ell-1)) +(2\alpha-1)(\ell-1) +\frac{1-\alpha}{2}.
\]
Combining the above inequality with this identity and the  elementary inequality between non-negative real numbers
\begin{equation}\label{eq:elemab}
(a+b)^{-\alpha}\le (2ab)^{-\frac{\alpha}{2}}, \; a,\, b\ge 0,
\end{equation}
 we obtain (once noted that $2(\ell-1)=k(\ell)-1$) 
\[
\frac{\Gamma(\alpha (k-1)-\ell+1)}{\Gamma(\alpha k-\ell+1)} \le \underbrace{\frac{\alpha}{3\alpha-1}\big(2\alpha(2\alpha-1)\big)^{-\frac{\alpha}{2}}}_{=:  \kappa^{(1)}_{\alpha}}\big(k-k(\ell)+1\big)^{-\frac{\alpha}{2}}\Big(\ell-1 +\frac{1-\alpha}{2(2\alpha-1)}\Big)^{-\frac{\alpha}{2}}.
\]
Now, still using that  $\ell\ge 1$, 
\begin{align*}
\Big(\ell-1 +\frac{1-\alpha}{2(2\alpha-1)}\Big)^{-\frac{\alpha}{2}}&= \ell^{-\frac{\alpha}{2}}\Big(\frac{\ell}{\ell-1 +\frac{1-\alpha}{2(2\alpha-1)}}\Big)^{\frac{\alpha}{2}}\le   \Big(\frac{2(2\alpha-1)}{1-\alpha}\vee 1\Big)^{\frac{\alpha}{2}}
\ell^{-\frac{\alpha}{2}}.
\end{align*} 
However, if $\ell= \alpha=1$, this bound is infinite. Coming back to the original ratio yields, for every $k\ge k(1)+1=2$, 
\[
\frac{\Gamma(\alpha (k-1)-\ell+1)}{\Gamma(\alpha k-\ell+1)} =\frac{\Gamma( k-1 )}{\Gamma(  k ) } =\frac{1}{k-1}=\frac{\sqrt{k}}{k-1}  k^{-\frac 12}\le \sqrt{2}\, k^{-\frac 12}.
\]

If $\ell=0$, Kershaw's Inequality~~\eqref{eq:Kershawb} yields 
\[
\frac{\Gamma(\alpha(k-1)+1)}{\Gamma(\alpha k+1)}\le \Big(\alpha k +\frac{1-\alpha}{2}\Big)^{-\alpha}.
\]
This implies,
\[
\frac{\Gamma(\alpha(k-1)+1)}{\Gamma(\alpha k+1)}\le \left(\alpha \sqrt{k} + \frac{1-\alpha}{2}\right)^{-\alpha}\le   \alpha ^{-\alpha}  k^{-\frac{\alpha}{2}}=  \alpha ^{-\alpha} (k-k(0)+1)^{-\frac{\alpha}{2}}(0\vee 1)^{-\frac{\alpha}{2}}.
\]

Finally, collecting the above ienqualities, we obtained that, for every $\ell\ge 0$ and every $k\ge k(\ell)+1$,
\[
\frac{\Gamma(\alpha (k-1)-\ell+1)}{\Gamma(\alpha k-\ell+1)}\le \kappa_{\alpha}^{(2)} (k-k(\ell)+1)^{-\frac{\alpha}{2}}\ell^{-\frac{\alpha}{2}}
\]
with $\kappa_{\alpha}^{(2)} =  \frac{\alpha}{3\alpha-1}\big(2\alpha(2\alpha-1)\big)^{-\frac{\alpha}{2}} \Big[\Big(\frac{2(2\alpha-1)}{1-\alpha}\vee 1\Big)^{\frac{\alpha}{2}}\mbox{\bf 1}_{\{\alpha\neq 1\}}+\sqrt{2}\,
\mbox{\bf 1}_{\{\alpha =  1\}} \Big]$.

\smallskip Collecting all these partial results and plugging them in~~\eqref{eq:Convol2} yields 
\[
|a_{k,\ell}|\le C\theta^{\ell} \rho^{k}\big(k-k(\ell)+1\big)^{\frac{\alpha}{2}-1}\ell^{\frac{\alpha}{2}-1}\frac{ \kappa^{(2)}_{\alpha}}{\rho}\left(|\lambda| C  B\Big(\frac{\alpha}{2},\frac{\alpha}{2}\Big)^2 +|\lambda|CB\Big(\frac{\alpha}{2},\frac{\alpha}{2}\Big) +  2^{1-\frac{\alpha}{2}}|\mu|   \right)
\]
where we used $\ell^{-\frac{\alpha}{2}}\le 1$ and $\Big(\frac{k-k(\ell)}{k-k(\ell)+1}\Big)^{\frac{\alpha}{2}-1}\le 2^{1-\frac{\alpha}{2}}$.

%
Consequently,  propagation inside a  level   boils down to
\begin{align}\label{eq:propaglk2a}
\kappa^{(2)}_{\alpha}\left(|\lambda| CB\Big(\frac{\alpha}{2},\frac{\alpha}{2}\Big)\Big[   B\Big(\frac{\alpha}{2},\frac{\alpha}{2}\Big)   +1\Big]  +2^{1-\frac{\alpha}{2}}|\mu|   \right)\le \rho.
\end{align}


We combine now this constraint on $\rho $   with those on $C$ and $\rho$ coming 
the propagation across initial values, that is~~\eqref{eq:r1c1} {\it i.e.} $C=\frac{C_0(\alpha,\theta)}{\rho}$, where $ C_0(\alpha,\theta)$ is given by~~\eqref{eq:fC0} and~~\eqref{eq:r1c1}.
The constraint~~\eqref{eq:propaglk2a} reads
\begin{equation}\label{eq:rhoalpah0.5}
\rho^2-2^{1-\frac{\alpha}{2}}\kappa^{(2)}_{\alpha}|\mu|\rho  - \kappa^{(2)}_{\alpha}C_0(\alpha,\theta)\bar B(\alpha/2)|\lambda| \ge 0,
\end{equation}
where 
\begin{equation}\label{eq:Bbaralpha/2}
 \bar B(\alpha/2)= B\Big(\frac{\alpha}{2},\frac{\alpha}{2}\Big)\Big[   B\Big(\frac{\alpha}{2},\frac{\alpha}{2}\Big)   +1\Big].
 \end{equation}
Then all the constraints are fulfilled by parameters $(\rho, C)$ satisfying
\[
\rho\ge \rho_* (\alpha,\theta)=   \rho_2(\theta)
 \vee \rho_1(\theta)\quad \mbox{ and }\quad C = \frac{C_0(\alpha,\theta)}{\rho}.
\]
where $\rho_1(\theta)$ is given by~~\eqref{eq:r1c1} and 
\[
\rho_2(\theta)= 2^{-\frac{\alpha}{2}}\kappa^{(2)}_{\alpha}\left(|\mu|+\sqrt{|\mu|^2+\frac{2^{\alpha-2}|\lambda| C_0(\alpha,\theta)\bar B(\alpha/2)}{\kappa^{(2)}_{\alpha}}}\right)
\]
is the positive solution of the equation associated to Inequation~~\eqref{eq:rhoalpah0.5}.
In what follows we consider the admissible pair $(\rho, C)= \big( \rho_* (\alpha,\theta), \frac{C_0(\alpha,\theta)}{\rho_* (\alpha,\theta)}\big)$. We  derive that the   convergence radii $R_{{\psi}_{\ell}}$ of the functions $\psi_{\ell}$  all satisfy 
\[
R_{{\psi}_{\ell}}\ge \rho_*(\alpha,\theta)^{-\frac{1}{\alpha}}.
\]
Now we have to ensure the summability of the functions $\psi_{\ell}$. It suffices to consider the   levels $\ell\ge 1$, so that $k(\ell)= 2\ell-1$. Let $t\!\in I_*= \big[0, \rho_*^{-\frac{1}{\alpha}}\big)$, where $\rho_*= \rho_*(\alpha, \theta)$. Elementary computations using the upper bound~~\eqref{eq:ubuneq0} for the coefficients $a_{k,\ell}$ and a change of index $k-2(\ell-1)\to k$ show that
\[
|\psi_{\ell}(t)| \le C_* \theta^{\ell} \ell^{\frac{\alpha}{2}-1} \rho_*^{2(\ell-1)} t^{(2\alpha-1)\ell-2\alpha}\widetilde \psi(t),
\]
where $\widetilde \psi(t)= \sum_{k\ge 1} \rho_*^k k^{\frac{\alpha}{2}-1} t^{\alpha k}<+\infty$ does not depend neither on $\ell$ nor on $\theta$ and is uniformly bounded on any compact interval of $I_*$.

To ensure the summability  of the functions $\psi_{\ell}$ for every $t\!\in \big[0, \rho^{-\frac{1}{\alpha}}\big)$, we note that 
\begin{align*}
\sum_{\ell\ge 1}  \theta^{\ell} \ell^{\frac{\alpha}{2}-1} \rho_*^{2(\ell-1)}t^{(2\alpha-1)\ell-2\alpha} & < t^{-2\alpha} \sum_{\ell\ge 1}  \theta^{\ell} \ell^{\frac{\alpha}{2}-1} \rho_*^{2(\ell-1)}\rho_*^{-\frac{(2\alpha -1)\ell-1}{\alpha}}  = t^{-2\alpha} \rho_*^{-2-\frac{1}{\alpha}}\sum_{\ell\ge 1}  (\theta\rho_*^{\frac{1}{\alpha}})^{\ell} \ell^{\frac{\alpha}{2}-1} .
\end{align*}

Since $\rho_*(\alpha, \theta)= O\big(\theta^{-\frac 12}\big)$ as $\theta \to 0$, it is clear that $\displaystyle \lim_{\theta\to 0}\theta \rho_*(\alpha, \theta)= 0$ so there exists $\theta>0$ such that $0<\theta \rho_*(\alpha, \theta)<1$. Hence
\[
\theta_* = \min\big\{\theta : \theta \rho_*(\alpha, \theta)\ge 1\big\}<+\infty \quad \mbox{ and }\quad  \theta_* \rho_*(\alpha, \theta_*)=1
\]
owing to the continuity of $\theta \mapsto \theta \rho_*(\alpha, \theta)$. As $\rho_*(\alpha,\theta)$ is non-increasing in $\theta$, $\theta_*$ yields the highest admissible value for $\rho_*(\alpha, \theta)$ so that 
\[
R_{\psi}\ge  \rho_*(\alpha, \theta_*)^{-\frac{1}{\alpha}}
\]
in the sense that  {\em the doubly indexed series~~\eqref{eq:defpsi}  that defines the function $\psi$ is normally converging on any compact interval of $(0, R_{\psi})$.}

The following proposition establishes a semi-closed form for the starting values $a_{k(\ell), \ell}=a_{2\ell-1, \ell}$ at each level $\ell\ge 1$.
\begin{proposition}[Closed form for the starting values] For every $\ell\ge 1$,
\[
a_{2\ell-1, \ell}=\lambda^{\ell -1}\left(\frac{ u}{\Gamma(\alpha)}\right)^{\!\!\ell}c_\ell,
\]
with $c_1=1$ and for $\ell\geq2$, 
\begin{equation}
\label{eq:c-ell1}
c_\ell=\frac{\Gamma ((2\alpha -1)(\ell -1))}{\Gamma ((2\alpha -1)\ell+1-\alpha)}\sum_{j=1}^{\ell-1}c_j c_{\ell-j}.
\end{equation}
\end{proposition}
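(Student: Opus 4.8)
The plan is to prove the formula by strong induction on $\ell$, taking as starting point the recursion for the diagonal starting values $a_{2\ell-1,\ell}$ already isolated in Step~2 of the proof of Theorem~\ref{thm:non-homog}$(a)$, namely
\[
a_{2\ell-1,\ell} = \lambda\left(\sum_{j=1}^{\ell-1} a_{2j-1,j}\,a_{2(\ell-j)-1,\ell-j}\right)\frac{\Gamma((2\alpha-1)(\ell-1))}{\Gamma((2\alpha-1)(\ell-1)+\alpha)},\quad \ell\ge 2,
\]
with $a_{1,1}=\frac{u}{\Gamma(\alpha)}$. The base case $\ell=1$ is immediate, since $c_1=1$ gives $\lambda^{0}\big(u/\Gamma(\alpha)\big)^{1}c_1 = u/\Gamma(\alpha)=a_{1,1}$.

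For the inductive step, I would assume $a_{2j-1,j}=\lambda^{j-1}\big(u/\Gamma(\alpha)\big)^{j}c_j$ for all $1\le j\le \ell-1$ and substitute into the recursion. Each product in the sum contributes $\lambda^{(j-1)+(\ell-j-1)}=\lambda^{\ell-2}$ and $\big(u/\Gamma(\alpha)\big)^{j+(\ell-j)}=\big(u/\Gamma(\alpha)\big)^{\ell}$, both independent of the summation index $j$; combined with the outer factor $\lambda$ this produces exactly $\lambda^{\ell-1}\big(u/\Gamma(\alpha)\big)^{\ell}$. Factoring these constants out of the sum leaves
\[
a_{2\ell-1,\ell}=\lambda^{\ell-1}\Big(\frac{u}{\Gamma(\alpha)}\Big)^{\ell}\,\frac{\Gamma((2\alpha-1)(\ell-1))}{\Gamma((2\alpha-1)(\ell-1)+\alpha)}\sum_{j=1}^{\ell-1}c_j c_{\ell-j},
\]
so that it only remains to identify the Gamma ratio above with the one appearing in~\eqref{eq:c-ell1}.

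The single point to verify is therefore the algebraic identity of the two denominators, i.e. that $(2\alpha-1)(\ell-1)+\alpha=(2\alpha-1)\ell+1-\alpha$. Expanding the left-hand side gives $(2\alpha-1)\ell-(2\alpha-1)+\alpha=(2\alpha-1)\ell-\alpha+1$, which is precisely the right-hand side. Hence the Gamma ratio coincides with $\frac{\Gamma((2\alpha-1)(\ell-1))}{\Gamma((2\alpha-1)\ell+1-\alpha)}$, and comparing the resulting expression with the ansatz lets me read off the recursion~\eqref{eq:c-ell1} for $c_\ell$, which closes the induction. There is genuinely no deep step here: the only way the argument can go wrong is by mis-tracking the exponents of $\lambda$ and of $u/\Gamma(\alpha)$, or by botching the shift in the Gamma argument, so the ``main obstacle'' is purely bookkeeping. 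The substantive work has already been done in deriving the diagonal recursion in Step~2, and this proposition merely solves that recursion in closed form.
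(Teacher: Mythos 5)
Your proof is correct and follows essentially the same route as the paper's: an induction on $\ell$ based on the diagonal recursion $a_{2\ell-1,\ell}=\lambda\big(\sum_{j=1}^{\ell-1}a_{2j-1,j}a_{2(\ell-j)-1,\ell-j}\big)\frac{\Gamma((2\alpha-1)(\ell-1))}{\Gamma((2\alpha-1)(\ell-1)+\alpha)}$, with the powers of $\lambda$ and $u/\Gamma(\alpha)$ tracked and the identity $(2\alpha-1)(\ell-1)+\alpha=(2\alpha-1)\ell+1-\alpha$ matching the Gamma ratio to~\eqref{eq:c-ell1}. The only cosmetic difference is that the paper re-derives the diagonal structure of the convolution $a^{*2}_{2\ell,\ell+1}$ from~\eqref{eq:Convol2} inside the proposition's proof, whereas you cite the formula already isolated in Step~2, which is equally legitimate.
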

\begin{proof}  We prove the identity by induction: for $\ell=1$ it is obvious. Assume now it holds for $\ell\ge 1$. Then 
\[
a_{2(\ell+1)-1,\ell+1}= \lambda   a^{*2}_{2\ell, \ell+1}\frac{\Gamma(\ell(2\alpha-1))}{\Gamma((\ell+1)(2\alpha-1)+1-\alpha)}
\]
since $a_{2\ell, \ell+1}=0$ (keep in mind that $2\ell < k(\ell+1)= 2\ell+1$). Now, we rely on~~\eqref{eq:Convol2}. First note that $k_i(\ell_i)= 2\ell_i-1$, $i=1,2$ if both $\ell_i\ge1$. Hence $k_i\ge k_i(\ell_i)$ implies $k_1+k_2\ge 2(\ell_1+\ell_2)-1= 2\ell+1$ and consequently $k_i= 2\ell_i-1$, $i=1,2$. If $\ell_1=0$, $\ell_2 = \ell+1$ so that $k_2\ge k_2(\ell_2)= 2(\ell+1)-1$ which implies $k_1=0$. As $a_{0,0}$ is always $0$ by construction, we finally obtain 
\[
 a^{*2}_{2\ell, \ell+1}=\sum_{j=1}^\ell a_{2j-1, j}a_{2(\ell+1-j)-1, \ell+1-j}
=\lambda^{\ell-1}\left(\frac{u}{\Gamma(\alpha)}\right)^{\ell+1}\sum_{j=1}^\ell c_j c_{\ell+1-j}.
\]
Therefore 
\[
a_{2(\ell+1)-1,\ell+1}= \lambda^{\ell}\left(\frac{u}{\Gamma(\alpha)}\right)^{\ell+1}c_{\ell+1},
\]
where $c_{\ell+1}$ is given by~~\eqref{eq:c-ell1} (at level $\ell+1$).
The conclusion follows by induction.
\end{proof}

\subsection{Proof of Claim$(b)$ of Theorem~\ref{thm:non-homog} (Case $\alpha\!\in 
(1,2]$)}

We still search for a function of the form~~\eqref{eq:defpsi}, more precisely
\begin{align*}\label{eq:defpsi}
\psi(t) =&\sum_{\ell\ge 0} \psi_{\ell}(t) =\sum_{\ell\ge 0} \sum_{k\ge k(\ell)}a_{k,\ell} t^{\alpha k-\ell}
\end{align*}
where the valuation $k(\ell)$ is specified in the Lemma below. We set for convenience $a_{k,\ell}= 0$ for $k<k(\ell)$.
\begin{lemma}\label{lem:recakl2} $(a)$ {\em  Valuation}.  When $\nu, u,v\neq0$, then $k(\ell):=\min\{k\ge 1: a_{k,\ell}\neq 0\}$ satisfies  
\[
k(\ell)= 1,\; \ell=0,1,2\quad \mbox{ and }\quad k(\ell)= \ell,\quad \ell\ge 3.
\]
If $\nu$, $u$ or $v = 0$, $k(\ell)$ as defined above is still admissible as a lower  bound in the above sum.

\smallskip
\noindent $(b)$ {\em Induction formula}.  The coefficients $a_{k,\ell}$ still  satisfy the doubly indexed recursion~~\eqref{eq:akell2}, 
this time with $a_{1,0}= \frac{\nu}{\Gamma(\alpha+1)}$, $a_{1,1}= \frac{u}{\Gamma(\alpha)}$,  $a_{1,2} = \frac{v}{\Gamma(\alpha-1)}$ and $a_{1,\ell}= 0$, $\ell\ge 3$. Note that $a_{1,\ell}^{*2}= 0$ for every $\ell\ge0$.

\smallskip
\noindent $(c)$ {\em Closed form for the starting values}.   For every  $\ell \geq 1$, 
\begin{align}
a_{2\ell, 2\ell } &= \frac{\mu }{2\lambda} \left(\frac{2\lambda v }{\Gamma (\alpha -1)}\right)^\ell 
\prod_{j=1}^\ell 
\frac{\Gamma((2j-1)(\alpha-1))}{\Gamma((2j-1)(\alpha-1)+\alpha)}\label{eq:ellpari}\\
a_{2\ell +1, 2\ell +1} &= \frac{u }{\Gamma(\alpha)} \left( \frac{2\lambda v }{\Gamma (\alpha -1)}\right)^\ell 
\prod_{j=1}^\ell 
\frac{\Gamma(2j(\alpha-1))}{\Gamma(2j(\alpha-1)+\alpha)}\label{eq:elldispari}.
\end{align}
\end{lemma}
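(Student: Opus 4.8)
The plan is to establish the three claims in the logical order (b), then (a), then (c), since the valuation in (a) rests on the recursion of (b), and the closed forms in (c) rest on the valuation. First I would prove (b) by plugging the ansatz~\eqref{eq:defpsi} into the integral form~\eqref{eq:Riccating2} and identifying the coefficient of each monomial $t^{\alpha k-\ell}$. Applying $I_{\alpha}$ term by term through $I_{\alpha}\big(t^{\alpha(k-1)-\ell}\big)=\frac{\Gamma(\alpha(k-1)-\ell+1)}{\Gamma(\alpha k-\ell+1)}\,t^{\alpha k-\ell}$ turns the quadratic right-hand side into the doubly indexed recursion~\eqref{eq:akell2}, the discrete convolution~\eqref{eq:Convol2} being exactly the Cauchy product coming from $\psi^{2}$. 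The three seeds are read off from the explicit initial terms: $\frac{\nu}{\Gamma(\alpha+1)}t^{\alpha}$ gives $a_{1,0}$, $\frac{u}{\Gamma(\alpha)}t^{\alpha-1}$ gives $a_{1,1}$, and $\frac{v}{\Gamma(\alpha-1)}t^{\alpha-2}$ gives $a_{1,2}$, while $a_{1,\ell}=0$ for $\ell\ge 3$ since no seed produces $t^{\alpha-\ell}$. The identity $a_{1,\ell}^{*2}=0$ is immediate: a nonzero term in~\eqref{eq:Convol2} at $k=1$ would require $k_{1}+k_{2}=1$ with $k_{i}\ge k(\ell_{i})\ge 1$, which is impossible.

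Next, for the valuation (a), I would set up the induction $k(\ell)=1+\min_{\ell_{1}+\ell_{2}=\ell,\ \ell_{i}\ge 1}\big[k(\ell_{1})+k(\ell_{2})\big]$ for $\ell\ge 3$, seeded by $k(0)=k(1)=k(2)=1$ from the three nonzero $a_{1,\cdot}$; pairings with some $\ell_{i}=0$ are self-referential and non-binding. The subtle point, and the main obstacle, is that the effective minimiser is \emph{not} the naive one: the lowest monomial $t^{\alpha-2}$ at level $2$ is the engine driving the whole level structure, and one must control precisely which products of it survive and are admissible for the fractional integration (the pre-integration exponent of each generated monomial must exceed $-1$). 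I would argue that the minimal admissible route to a new level $\ell$ is the \emph{diagonal-climbing} one, pairing the seed $a_{1,2}$ with the diagonal coefficient $a_{\ell-2,\ell-2}$, which forces $k(\ell)=\ell$. The delicate verification is to rule out all lower-order off-diagonal coefficients $a_{k,\ell}$ with $k<\ell$; the cleanest route is a nested induction on $\ell$ and then on $k$, showing that every such $a_{k,\ell}$ receives only contributions that already vanish by the induction hypothesis.

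Finally, for the closed forms (c), I would exploit the valuation to collapse the recursion on the diagonal $a_{m,m}$. Since $a_{m-1,m}=0$ by (a) (because $m-1<m=k(m)$), the linear $\mu$-term disappears from~\eqref{eq:akell2} at $(k,\ell)=(m,m)$, leaving $a_{m,m}=\lambda\,a_{m-1,m}^{*2}\,\frac{\Gamma((\alpha-1)(m-1))}{\Gamma((\alpha-1)m+1)}$, where I used $\alpha(m-1)-m+1=(\alpha-1)(m-1)$ and $\alpha m-m+1=(\alpha-1)m+1$. A bookkeeping of~\eqref{eq:Convol2} at $(m-1,m)$, discarding via the valuation every pairing except $\ell_{1}=2,\ \ell_{2}=m-2$ and its symmetric counterpart, yields $a_{m-1,m}^{*2}=2\,a_{1,2}\,a_{m-2,m-2}$, hence the two-step recursion
\[
a_{m,m}=\frac{2\lambda v}{\Gamma(\alpha-1)}\,\frac{\Gamma((\alpha-1)(m-1))}{\Gamma((\alpha-1)m+1)}\,a_{m-2,m-2},\qquad m\ge 3.
\]
This decouples into an even and an odd diagonal. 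Telescoping from the base cases $a_{2,2}=\mu\,a_{1,2}\,\frac{\Gamma(\alpha-1)}{\Gamma(2\alpha-1)}=\frac{\mu v}{\Gamma(2\alpha-1)}$ and $a_{1,1}=\frac{u}{\Gamma(\alpha)}$, together with the identity $(2j-1)(\alpha-1)+\alpha=2j(\alpha-1)+1$ which reconciles the denominators, produces exactly~\eqref{eq:ellpari} and~\eqref{eq:elldispari}. I expect the bulk of the genuine difficulty to sit in (a); parts (b) and (c) are then essentially careful identification and a telescoping induction.
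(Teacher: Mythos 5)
Your proposal is correct and follows essentially the same route as the paper's proof: part (b) by term-by-term identification, part (a) by the same recursive optimisation for the valuation (your integrability filter, requiring each pre-integration exponent to exceed $-1$, is exactly what the paper compresses into the word ``admissible''), and part (c) by collapsing~\eqref{eq:akell2} on the diagonal through the identity $a^{*2}_{m-1,m}=2\,a_{1,2}\,a_{m-2,m-2}$ (the paper's $a^{*2}_{\ell,\ell+1}=2a_{1,2}a_{\ell-1,\ell-1}$) and telescoping separately along the even and odd diagonals. One minor point in your favour: your base value $a_{2,2}=\mu v/\Gamma(2\alpha-1)$ is the one consistent with~\eqref{eq:ellpari} at $\ell=1$; the factor $\Gamma(\alpha)$ appearing in the paper's displayed computation of $a_{2,2}$ is a typo for $\Gamma(2\alpha-1)$.
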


\begin{remark}[Case $u=v=0$] When $u=v=0$, one checks that $a_{1,1}= a_{1,2}=0$, $a_{2,2}= \mu a_{1,2}\frac{\Gamma(\alpha)}{\Gamma(2\alpha)}=0$ and, then, by induction, that $a_{\ell,\ell}=0$ for every $\ell\ge 1$. As a second step, one shows by induction  that, actually,  for every level $\ell\ge 1$, $a_{k,\ell}=0$, $k\ge 1$ so that, like in the former case,  the solution appears in the much simpler  form  
\[
\psi(t)= \psi_0(t) = \sum_{k\ge 1} a_{k,0}t^{\alpha k}.
\]
\end{remark}

\noindent {\bf Proof.} $(a)$ The fact that $k(\ell)=1$ for $\ell=0,1,2$ is obvious. For $\ell\ge 3$, it is clear by adapting the analogous proof in Step~1 of the former case $\alpha\!\in (0,1]$ that $k(\ell)$ is solution to the same recursive optimization problem
\begin{equation}\label{eq:kell2}
k(\ell) =\min\left[ \min_{\ell_1+\ell_2=\ell}\big[k(\ell_1)+k(\ell_2)\big], k(\ell-1)\right]+1
\end{equation}
with the  former initial values (keeping in mind that $a_{2,1}= a_{2,3}=0$). This time the only admissible solution is $k(\ell)=\ell$. 

\smallskip
\noindent $(b)$ This is straightforward.

\smallskip
\noindent $(c)$ 
We proceed by induction.
 First of all, it follows  from Equation ~\eqref{eq:akell2} that 
\begin{align}
a_{2, 2} & =  \frac{\mu }{2\lambda} \left(\frac{2\lambda v }{\Gamma (\alpha -1)}\right)
\frac{\Gamma(\alpha-1)}{\Gamma(\alpha)},
\end{align}
which agrees with the Formula ~\eqref{eq:ellpari}. Assume now the formula valid for $\ell $ and let us check for $\ell +1$. One checks by inspecting successively the cases $\ell=2$, $\ell=3$ and $\ell\ge 4$ that 
\[
a^{*2}_{\ell,\ell+1} = 2a_{1,2} a_{\ell-1,\ell-1},\quad \ell\ge 2,
\]
whereas $a^{*2}_{1,2}=a^{*2}_{0,1}=0$. Likewise  
\begin{align*}
a_{3,3}&=  
\lambda a^{*2}_{2,3}\frac{\Gamma\big(2(\alpha -1)\big)}{\Gamma\big(2(\alpha -1)+\alpha\big)} =\frac{2\lambda v }{\Gamma (\alpha -1)}\frac{u }{\Gamma(\alpha)}
\frac{\Gamma\big(2(\alpha -1)\big)}{\Gamma\big(2(\alpha -1)+\alpha\big)},
\end{align*}
which agrees with ~\eqref{eq:elldispari}. Therefore we get
\begin{align*}
a_{2(\ell +1),2(\ell +1)}&=  \lambda a^{*2}_{2\ell +1,2(\ell +1)}\frac{\Gamma\big((2\ell +1)(\alpha -1)\big)}{\Gamma\big((2\ell +1)(\alpha -1)+\alpha\big)} = 2\lambda a_{1,2}a_{2\ell,2\ell }\frac{\Gamma\big((2\ell +1)(\alpha -1)\big)}{\Gamma\big((2\ell +1)(\alpha -1)+\alpha\big)}\\
&=\frac{2\lambda v }{\Gamma (\alpha -1)}\frac{\mu }{2\lambda} \left(\frac{2\lambda v }{\Gamma (\alpha -1)}\right)^\ell 
\frac{\Gamma\big((2\ell +1)(\alpha -1)\big)}{\Gamma\big((2\ell +1)(\alpha -1)+\alpha\big)}
\prod_{j=1}^\ell 
\frac{\Gamma((2j-1)(\alpha-1))}{\Gamma((2j-1)(\alpha-1)+\alpha)},
\end{align*}
which also agrees with Formula ~\eqref{eq:ellpari}. One proceeds likewise for $a_{2(\ell +1)+1,2(\ell +1)+1}$.
$\qquad_{\diamondsuit}$

\bigskip Now we are in position to prove Theorem~\ref{thm:non-homog}.
Our aim in this proof is to propagate an upper-bound of the form
\begin{equation}\label{eq:propag4}
|a_{k,\ell}|\le C\theta^{\ell}\rho^k (k-k(\ell)+1)^{\frac{\alpha}{2}-1}(\ell\vee1)^{\frac{\alpha}{2}-1},\; k\ge k(\ell), \; \ell\ge 0. 
\end{equation}


\smallskip
\noindent {\sc \textbf{Step 1}} {\em Propagation for the initial value across the levels $\ell\ge 1$}.  On checks by  inspecting successively the cases $\ell=2$, $\ell=3$ and $\ell\ge 4$ that 
\[
a^{*2}_{\ell,\ell+1} = 2a_{1,2} a_{\ell-1,\ell-1},\quad \ell\ge 2,
\]
whereas $a^{*2}_{1,2}=a^{*2}_{0,1}=0$. Consequently, it follows from~~\eqref{eq:Convol2} and the fact that $a_{\ell,\ell+1}=0$, that,
for every $\ell\ge 2$, 
\begin{align*}
a_{\ell+1,\ell+1} &= \lambda a^{*2}_{\ell,\ell+1}\frac{\Gamma((\alpha-1)\ell)}{\Gamma((\alpha-1)\ell+\alpha)} =  2 \lambda  a_{1,2} a_{\ell-1,\ell-1} \frac{\Gamma((\alpha-1)\ell+1)}{\Gamma((\alpha-1)\ell+\alpha)}\frac{1}{(\alpha-1)\ell}\\
&=2 \lambda  a_{1,2} a_{\ell-1,\ell-1} \frac{\Gamma((\alpha-1)(\ell+1)+2-\alpha)}{\Gamma((\alpha-1)(\ell+1)+1)}\frac{1}{(\alpha-1)\ell} .
\end{align*}


By Kershaw's Inequality~~\eqref{eq:Kershawb} used with $x= (\alpha-1)(\ell+1)>0$ and $s= 2-\alpha\!\in (0,1)$, we deduce that
\begin{align*}
|a_{\ell+1,\ell+1} |& \le 2 \lambda  |a_{1,2} | |a_{\ell-1,\ell-1}|\frac{1}{(\alpha-1)\ell} \Big((\alpha-1)(\ell+1) +\frac{2-\alpha}{2}\Big)^{1-\alpha}\\
&= 2 \lambda  |a_{1,2} | |a_{\ell-1,\ell-1}|\frac{1}{(\alpha-1)\ell} \Big((\alpha-1)\ell +\frac{\alpha}{2}\Big)^{1-\alpha}\\
&\le  2 \lambda  |a_{1,2} | |a_{\ell-1,\ell-1}| (\alpha-1)^{-\alpha}\ell^{-\alpha}
\end{align*}
since $\alpha>1$. We assume now that~~\eqref{eq:propag4} is satisfied at levels $1\le \ell'\le \ell$, in particular for $\ell'= \ell-1$, $\ell\ge 2$. Then 
\begin{align*}
|a_{\ell+1,\ell+1} |&\le  2  \lambda |a_{1,2}|C (\rho\,\theta)^{\ell-1}(\ell-1-k(\ell-1)+1)^{\frac{\alpha}{2}-1}(\ell-1)^{\frac{\alpha}{2}-1}(\alpha-1)^{-\alpha}\ell^{-\alpha}.
\end{align*}

One checks that, for every $\ell\ge 2$,  $\ell-1-k(\ell-1)+1\ge 1=\ell+1-k(\ell+1)+1  $ and $\frac{\ell-1}{\ell+1}\ge \frac 13$. Consequently, if we set 
\[
\kappa^{(5)}_{\alpha}=3^{1-\frac{\alpha}{2}}(\alpha-1)^{-\alpha}
\]
we obtain
\[
|a_{\ell+1,\ell+1} |\le C (\rho\,\theta)^{\ell+1}(\ell+1-k(\ell+1)+1)^{\frac{\alpha}{2}-1}(\ell+1)^{\frac{\alpha}{2}-1}\frac{2\kappa^{(5)}_{\alpha}  \lambda |a_{1,2}|}{(\rho\,\theta)^2}.
\]
Consequently, keeping in mind that $a_{1,2}= \frac{v}{\Gamma(\alpha-1)}$, the propagation condition on the initial values reads
\[
|a_{1,0}|= \frac{|\nu|}{\Gamma(\alpha+1)}\le C\rho,\quad \frac{|u|}{\Gamma(\alpha)}\le \rho\,\theta \quad \mbox{ and }\quad 2\kappa^{(5)}_{\alpha}  \lambda \frac{|v|}{\Gamma(\alpha-1)}\le (\rho\,\theta)^2
\]
or, equivalently, 
\begin{equation}\label{eq:propinit2}
C\rho \ge \frac{|\nu|}{\Gamma(\alpha+1)}\quad \mbox{ and }\quad \rho\, \theta\ge C_1:=\max\left(\frac{|u|}{\Gamma(\alpha)}, \sqrt{2\kappa^{(5)}_{\alpha}  \lambda \frac{|v|}{\Gamma(\alpha-1)}}\,\right).
\end{equation}

\smallskip
\noindent {\sc \textbf{Step 2}} {\em Propagation across the levels $\ell\ge 0$}.  We assume that the  bound to be propagated holds for every  couple $(k',\ell')$ such that level $\ell'<\ell$ and  $k'\ge k(\ell')$ or $\ell'=\ell$ and  $k(\ell)\le k'\le k-1$. 

We first focus on the discrete time convolution. Let $k\ge k(\ell)+1$
\begin{align*}
|a^{*2}_{k-1,\ell} |&\le \;C^2 \sum_{\begin{smallmatrix}k_1+k_2= k-1\\ \ell_1+\ell_2=\ell\\ k_i\geq k(\ell_i), \,  \ell_i \ge 0\end{smallmatrix}}|a_{k_1, \ell_1}| |a_{k_2,\ell_2}|\\
& \le  \;C^2 \theta^{\ell} \rho^{k-1}\sum_{\begin{smallmatrix}  \ell_1+\ell_2=\ell\\   \ell_i\ge 0\end{smallmatrix}}\sum_{\begin{smallmatrix}k_1+k_2= k-1 \\ k_i\geq k(\ell_i)\end{smallmatrix}}(k_1-k(\ell_1)+1)^{\frac{\alpha}{2}-1} (\ell_1\vee 1)^{\frac{\alpha}{2}-1}(k_2-k(\ell_2)+1)^{\frac{\alpha}{2}-1}(\ell_2\vee 1)^{\frac{\alpha}{2}-1}\\
&= \; C^2 \theta^{\ell}\rho^{k-1} \sum_{\begin{smallmatrix}  \ell_1+\ell_2=\ell\\   \ell_i\ge 0\end{smallmatrix}}( \ell_1\vee 1)^{\frac{\alpha}{2}-1}(\ell_2\vee1)^{\frac{\alpha}{2}-1} \sum_{\begin{smallmatrix}k'_1+k'_2= k-(k(\ell_1)+k(\ell_2))+1\\ k'_i\ge 1\end{smallmatrix}}(k'_1)^{\frac{\alpha}{2}-1}(k'_2)^{\frac{\alpha}{2}-1}\\
&=  \; C^2 \theta^{\ell}\rho^{k-1}\sum_{\begin{smallmatrix}  \ell_1+\ell_2=\ell\\   \ell_i\ge 0\end{smallmatrix}}( \ell_1\vee 1)^{\frac{\alpha}{2}-1}(\ell_2\vee1)^{\frac{\alpha}{2}-1} B\Big(\frac{\alpha}{2},\frac{\alpha}{2}\Big)\big(k-(k(\ell_1)+k(\ell_1))\big)^{\alpha-1}\\
&\le  C^2 \theta^{\ell}\rho^{k-1} B\Big(\frac{\alpha}{2},\frac{\alpha}{2}\Big)\big(k-k(\ell)+1\big)^{\alpha-1}\sum_{\begin{smallmatrix}  \ell_1+\ell_2=\ell\\   \ell_i\ge 0\end{smallmatrix}}( \ell_1\vee 1)^{\frac{\alpha}{2}-1}(\ell_2\vee1)^{\frac{\alpha}{2}-1} 
\end{align*}
where we used Lemma~\ref{lem:f} in the penultimate line and  $k(\ell_1)+k(\ell_1)\ge k(\ell)-1$ (see~\eqref{eq:kell2}). 
Now note that, if $\ell\ge 1$, 
\begin{align*}
\sum_{\begin{smallmatrix}  \ell_1+\ell_2=\ell\\   \ell_i\ge 0\end{smallmatrix}}( \ell_1\vee 1)^{\frac{\alpha}{2}-1}(\ell_2\vee1)^{\frac{\alpha}{2}-1} &\le \ell^{\frac{\alpha}{2}-1}+\sum_{\ell_1=1}^{\ell-1}\ell_1^{\frac{\alpha}{2}-1} (\ell-\ell_1)^{\frac{\alpha}{2}-1} \le \ell^{\frac{\alpha}{2}-1}+B\Big(\frac{\alpha}{2},\frac{\alpha}{2}\Big)\ell^{\alpha-1}
\end{align*}
owing to Lemma~\ref{lem:f}. If $\ell=0$, the above inequality still holds  since $\displaystyle \sum_{\begin{smallmatrix}  \ell_1+\ell_2=\ell\\   \ell_i\ge 0\end{smallmatrix}}( \ell_1\vee 1)^{\frac{\alpha}{2}-1}(\ell_2\vee1)^{\frac{\alpha}{2}-1} =1$. Now,   combining these inequalities yields
\begin{align}
\nonumber |a^{*2}_{k-1,\ell} | & \le  C^2 \theta^{\ell}\rho^{k-1} B\Big(\frac{\alpha}{2},\frac{\alpha}{2}\Big) \big(k-k(\ell)+1\big)^{\alpha-1}(\ell\vee 1)^{\alpha-1}\left(B\Big(\frac{\alpha}{2},\frac{\alpha}{2}\Big)+(\ell\vee 1)^{-\frac{\alpha}{2}}\right)\\
\label{eq:convol4}
 &\le  C^2 \theta^{\ell} \rho^{k-1} \bar B\Big(\frac{\alpha}{2}\Big) \big(k-k(\ell)+1\big)^{\alpha-1}(\ell\vee 1)^{\alpha-1},
\end{align}
where $\bar B(\alpha/2)$ is defined in~~\eqref{eq:Bbaralpha/2}.

First note, by inspecting the four cases $\ell=0,1,2$ and $\ell\ge 3$, that 
\[
\forall\, \ell\ge 0, \quad \alpha k-\ell\ge \alpha (k(\ell)+1)-\ell>(\ell\vee 2)(\alpha-1)>0.
\]
Now, using $\Gamma(z+1)=z\Gamma(z)$ and Kershaw's Inequality~~\eqref{eq:Kershawb} with $x= \alpha k-\ell\ge \alpha (k(\ell)+1)-\ell>2(\alpha-1)$ and $s =2-\alpha\!\in [0,1)$,  we obtain
\begin{align}
\nonumber \frac{\Gamma(\alpha(k-1)-\ell+1)}{\Gamma(\alpha k-\ell+1)}& = \frac{1}{\alpha k -\ell+1-\alpha} 
 \frac{\Gamma(\alpha k-\ell+2-\alpha)}{\Gamma(\alpha k-\ell+1)}\\
\nonumber &\le \frac{1}{\alpha k -\ell+1-\alpha}
\Big(\alpha k -\ell+2-\frac{\alpha}{2}\Big)^{1-\alpha}\\
\label{eq:ratiogamma2}&\le  \frac{\alpha k -\ell+1-\frac{\alpha}{2}}{\alpha k -\ell+1-\alpha}
 \Big(\alpha k -\ell+1-\frac{\alpha}{2}\Big)^{-\alpha}.
\end{align}

As $\alpha k-\ell+1-\alpha\ge \alpha (k(\ell)+1)-\ell +1-\alpha\ge \alpha-1>0$ for every $\ell\ge 0$, we deduce that
\[
\frac{\Gamma(\alpha(k-1)-\ell+1)}{\Gamma(\alpha k-\ell+1)}\le \left(\frac{\alpha}{2(\alpha-1)}\right)  
\Big(\alpha k -\ell+1-\frac{\alpha}{2}\Big)^{-\alpha}.
\]

Now note that
\[
\alpha k -\ell+1-\frac{\alpha}{2}= \alpha(k-k(\ell))+ \alpha k(\ell)-\ell+1-\frac{\alpha}{2}
\]
and that 
$$
\alpha k(\ell)-\ell+1-\frac{\alpha}{2}\ge (\alpha-1)(\ell \vee 2)+1- \frac{\alpha}{2}\ge (\alpha-1)(\ell \vee 1) \mbox{ for every } \ell\ge 0.
$$ 
Hence, using~~\eqref{eq:elemab}, we deduce 
\[
\Big(\alpha k -\ell+1-\frac{\alpha}{2}\Big)^{-\alpha}\le \big(2\alpha(\alpha-1)\big)^{-\frac{\alpha}{2}}\big(k-k(\ell)\big)^{-\frac{\alpha}{2}}(\ell\vee1)^{-\frac{\alpha}{2}}.
\]
Finally, one notes that $\Big(\frac{k-k(\ell)+1}{k-k(\ell)}\Big)^{\frac{\alpha}{2}}\le 2^{\frac{\alpha}{2}}$ to deduce
\[
\frac{\Gamma(\alpha(k-1)-\ell+1)}{\Gamma(\alpha k-\ell+1)}\le \kappa^{(6)}_{\alpha}\Big(k-k(\ell)+1\Big)^{-\frac{\alpha}{2}}  (\ell\vee1)^{-\frac{\alpha}{2}},
\]
where 
\[
\kappa^{(6)}_{\alpha} =  \frac{\alpha}{2(\alpha-1)} \big(\alpha(\alpha-1)\big)^{-\frac{\alpha}{2}}.
\]
Plugging Inequalities~~\eqref{eq:convol4},~~\eqref{eq:ratiogamma2} and the estimate for $a_{k-1, \ell}$ into~~\eqref{eq:akell2} yields 
\begin{align*}
|a_{k,\ell}|&\le C \rho^{k-1}\theta^{\ell} (\ell\vee 1)^{\frac{\alpha}{2}-1}\big(k-k(\ell)+1\big)^{\frac{\alpha}{2}-1}2^{-\frac{\alpha}{2}} \kappa^{(6)}_{\alpha}\\
&\hskip 2cm \times \left[|\mu|\left(\frac{k-k(\ell)+1}{k-k(\ell)}\right)^{1-\frac{\alpha}{2}}+C |\lambda| \left(\frac{k-k(\ell)+1}{k-k(\ell)}\right)^{\frac{\alpha}{2}}  
  (\ell\vee 1)^{\frac{\alpha}{2}}\bar B\Big(\frac{\alpha}{2}\Big)\right]\\
&\le  C \rho^{k}\theta^{\ell} (\ell\vee 1)^{\frac{\alpha}{2}-1}\big(k-k(\ell)+1\big)^{\frac{\alpha}{2}-1}\frac{\kappa^{(6)}_{\alpha}}{\rho}\left[2|\mu|+ C|\lambda|   
\bar B\Big(\frac{\alpha}{2}\Big)\right],
\end{align*}
where we used that $\sup_{k\ge k(\ell)+1} \frac{k-k(\ell)+1}{k-k(\ell)}= 2$.
We deduce that the propagation of the bound holds as soon as
\begin{equation}\label{eq:propaglk2}
\kappa^{(6)}_{\alpha}\left[2|\mu|+ |\lambda| C  
\bar B\Big(\frac{\alpha}{2}\Big)  \right]\le \rho. 
\end{equation}

\smallskip
\noindent {\sc \textbf{Step 3}} {\em Synthesis}.  If we saturate the left-hand side of inequality~~\eqref{eq:propinit2} and plug it in~~\eqref{eq:propaglk2}, we obtain the inequality 
\[
\rho^2-2\kappa^{(6)}|\mu|\rho -\frac{|\lambda||\nu|}{\Gamma(\alpha+1)}\kappa^{(6)}\bar B\Big(\frac{\alpha}{2}\Big)\geq 0.
\]
 The minimal solution is given by
\[
\rho_*=\rho_*(\alpha,\theta)= \max\left[\kappa^{(6)}\left(|\mu|+\sqrt{|\mu|^2+\frac{|\lambda| |\nu|  \bar B(\alpha/2)}{\Gamma(\alpha+1)\kappa^{(6)} }}\,\right),\frac{C_1}{\theta}\right]
\]
where $C_1=C_1(u,v)$ is given by  the right-hand side of inequality~~\eqref{eq:propinit2} and
\[ 
C_*= C_*(\alpha,\theta)= \frac{|\nu|}{\Gamma(\alpha+1)\rho_*(\alpha,\theta)}.
\]

Now, we focus on the convergence of the series $\psi(t)= \sum_{\ell\ge 0}\psi_{\ell}(t)$ (keeping in mind that the first three levels $\ell=0,1,2$ have no  influence on  the result) so that we may use that  $k(\ell)=\ell$, $\ell\ge 3$. Set $C_*=C_*(\alpha,\theta)$ and $\rho_*=\rho_*(\alpha,\theta)$. One checks that for every $t\!\in I_*= \big (0,\rho_*^{-\frac{1}{\alpha}}\big)$,
\begin{align*}
|\psi_{\ell}(t) |\le \sum_{k\ge \ell}|a_{k,\ell}|t^{\alpha k}&\le C_*
\theta^{\ell}\ell^{\frac{\alpha}{2}-1}\sum_{k\ge \ell} (k-\ell+1)^{\frac{\alpha}{2}-1}\rho_*^{k}t^{\alpha k -\ell}\\
&=  C_*\theta^{\ell}\ell^{\frac{\alpha}{2}-1}\rho_*^{\ell-1} t^{(\alpha-1)\ell-\alpha} \widetilde \psi(t)
\end{align*} 
where $\widetilde \psi_2(t) = \displaystyle   \sum_{k\ge 1} k^{\frac{\alpha}{2}-1}\rho_*^{k}t^{\alpha k} $ is normally convergent on every compact interval $K$ of  the open interval $I_*$. Then, for every $t\!\in K$,  
\begin{align*}
\sum_{\ell\ge 3} |\psi_{\ell}(t)| &\le C_* \sup_{t\in K}\big[ t^{-\alpha}\widetilde \psi_2(t)\big]  \sum_{\ell\ge 3}(\theta\rho_*)^{\ell}\ell^{\frac{\alpha}{2}-1} t^{(\alpha-1)\ell} < C_* \big[ t^{-\alpha}\widetilde \psi_2(t)\big]  \sum_{\ell\ge 3}(\theta\rho_*)^{\ell}\ell^{\frac{\alpha}{2}-1} \rho_*^{-\frac{\alpha-1}{\alpha}\ell}\\
& = C_*  \big[ t^{-\alpha}\widetilde \psi_2(t)\big] \sum_{\ell\ge 3}(\theta\rho^{\frac{1}{\alpha}}_*)^{\ell}\ell^{\frac{\alpha}{2}-1} .
\end{align*}
Hence, the series is absolutely convergent if $  \theta \rho^{\frac{1}{\alpha}}_*(\alpha,\theta)<1$.

As $\alpha>1$, one shows that the function $\theta\mapsto \theta \rho^{\frac{1}{\alpha}}_*(\alpha,\theta)$ satisfies $\displaystyle \lim_{\theta\to 0}\theta \rho^{\frac{1}{\alpha}}_*(\alpha,\theta)=0$ so that we may set 
\[
\theta_* = \inf\big\{\theta>0: \theta \rho^{\frac{1}{\alpha}}_*(\alpha,\theta)\ge 1\big\}<+\infty\quad\mbox{which satisfies $\theta_* \rho^{\frac{1}{\alpha}}_*(\alpha,\theta_*)=1$}.
\]
Finally, one checks that the doubly indexed series $\psi$ is normally convergent on $K$. \hfill $\Box$
\end{appendix}

\section*{Acknowledgments.}
We thank Omar El Euch and Mathieu Rosenbaum for useful discussions on the first part of the paper. We are also grateful to Elia Smaniotto and Giulio Pegorer for valuable comments.

\small
\bibliography{biblio}
\bibliographystyle{apa}

\end{document}